\documentclass[11pt,a4paper]{article}
\pdfoutput=1
\usepackage{hyperref}
\usepackage{fullpage}
\usepackage{amssymb}
\usepackage{mathtools}
\usepackage{amsthm}
\usepackage{eucal}
\usepackage{dsfont}
\usepackage[T1]{fontenc}
\usepackage{lmodern}
\usepackage[stretch=10,shrink=10]{microtype}
\usepackage[capitalise]{cleveref}
\usepackage{color,soul}
\usepackage[linesnumbered,ruled,vlined]{algorithm2e}
\usepackage{graphicx}
\usepackage{overpic}
\usepackage{lipsum}
\usepackage[framemethod=default]{mdframed}

\mdfsetup{
    linewidth=0.7pt
}

\usepackage{amsmath}
\usepackage{amsfonts}
\usepackage{tikz-cd}
\allowdisplaybreaks[1]


\def\A{\mathcal{A}}

\def\E{\mathcal{E}}
\def\F{\mathcal{F}}

\def\H{\mathcal{H}}

\def\M{\mathcal{M}}
\def\N{\mathcal{N}}

\def\P{\mathcal{P}}

\def\S{\mathcal{S}}
\def\T{\mathcal{T}}

\def\X{\mathcal{X}}
\def\Y{\mathcal{Y}}
\def\Z{\mathcal{Z}}


\theoremstyle{plain}
\newtheorem{theorem}{Theorem}[section]
\newtheorem{lemma}[theorem]{Lemma}

\newtheorem{cor}[theorem]{Corollary}

\theoremstyle{definition}
\newtheorem{definition}[theorem]{Definition}
\newtheorem{claim}[theorem]{Claim}

\newtheorem{remark}[theorem]{Remark}
\newtheorem{fact}[theorem]{Fact}


\newcommand {\minusspace} {\: \! \!}
\newcommand {\smallspace} {\: \!}
\newcommand {\fn} [2] {\ensuremath{ #1 \minusspace \br{ #2 } }}
\newcommand {\Fn} [2] {\ensuremath{ #1 \minusspace \Br{ #2 } }}

\newcommand {\set} [1] {\ensuremath{ \left\lbrace #1 \right\rbrace }}


\newcommand {\br} [1] {\ensuremath{ \left( #1 \right) }}
\newcommand {\Br} [1] {\ensuremath{ \left[ #1 \right] }}

\newcommand {\norm} [1] {\ensuremath{ \left\| #1 \right\| }}
\newcommand {\normsub} [2] {\ensuremath{ \norm{#1}_{#2} }}
\newcommand {\onenorm} [1] {\normsub{#1}{1}}

\newcommand {\abs} [1] {\ensuremath{ \left| #1 \right| }}

\newcommand{\uninorm}[1]{{\left\vert\kern-0.25ex\left\vert\kern-0.25ex\left\vert #1
		\right\vert\kern-0.25ex\right\vert\kern-0.25ex\right\vert}}


\newcommand {\defeq} {\ensuremath{ \stackrel{\mathrm{def}}{=} }}
\newcommand {\mutinf} [2] {\fn{\mathrm{I}}{#1 \smallspace : \smallspace #2}}
\newcommand {\condmutinf} [3] {\mutinf{#1}{#2 \smallspace \middle\vert \smallspace #3}}
\newcommand {\relent} [2] {\fn{\mathrm{D}}{#1 \middle\| #2}}

\DeclareMathOperator*{\bigE}{\mathbb{E}}
\newcommand {\expec} [2] {\Fn{\bigE_{\substack{#1}}}{#2}}

\newcommand{\supp}[1]{\mathrm{supp}\br{#1}}

\newcommand {\id} {\ensuremath{\mathds{1}}}


\newcommand {\suppress}[1]{}


\newcommand {\mytitle} {One-Shot Slepian Wolf}


\newcommand{\misdiv}[3]{\fn{\mathrm{D}_s^{#1}}{#2 \middle\| #3}}
\newcommand{\htisdiv}[3]{\fn{\mathrm{D}_H^{#1}}{#2 \middle\| #3}}

\newcommand {\prob} [2] {\Fn{\Pr_{#1}}{#2}}

\hypersetup{
	pdfstartview={FitH},
	pdfdisplaydoctitle={true},
	breaklinks={true},
	bookmarksopen={true},
	bookmarksnumbered={false},
	pdftitle={\mytitle},
}

\begin{document}
\title{On the compression of messages in the multi-party setting}

\author{
Anurag Anshu\footnote{Centre for Quantum Technologies, National University of Singapore, 117543, Singapore. \texttt{a0109169@u.nus.edu}} \qquad
Penghui Yao\footnote{State Key Laboratory for Novel Software Technology, Nanjing University, Nanjing, 210093, PR China. \texttt{pyao@nju.edu.cn}} 
}

\maketitle

\begin{abstract}
We consider the following communication task in the multi-party setting, which involves a joint random variable $XYZMN$ with the property that $M$ is independent of $YZN$ conditioned on $X$ and $N$ is independent of $XZM$ conditioned on $Y$. Three parties Alice, Bob and Charlie, respectively, observe samples $x,y$ and $z$ from $XYZ$. Alice and Bob communicate messages to Charlie with the goal that Charlie can output a sample from $MN$ having correct correlation with $XYZ$. This task reflects the {\em simultaneous message passing} model of communication complexity. Furthermore, it is a generalization of some well studied problems in information theory, such as distributed source coding, source coding with a helper and one sender and one receiver message compression. It is also closely related to the lossy distributed source coding task.

Our main result is an achievable communication region for this task in the one-shot setting, through which we obtain a near optimal characterization using auxiliary random variables of bounded size. We employ our achievability result to provide a near-optimal one-shot communication region for the task of lossy distributed source coding, in terms of auxiliary random variables of bounded size. Finally, we show that interaction is necessary to achieve the optimal expected communication cost for our main task.
  
\end{abstract}

\section{Introduction}

{\em Source coding} is a central task in information theory, where the task for a sender is to communicate a sample from a source. The constraint is that the error made by the receiver in the decoding process should be small and the goal is to communicate as less number of bits as possible. A tight characterization of this task was achieved by Shannon~\cite{Shannon} in the asymptotic and i.i.d. setting, where the senders are assumed to have a large number of identical and independent samples from the source. Later, Slepian and Wolf~\cite{SlepianW73} presented a tight characterization for a multi-party source coding in the same asymptotic and i.i.d. setting. The powerful techniques introduced by these authors were further generalized to asymptotic and non-i.i.d. setting~\cite{Han03}. 

In the recent years, there has been a growing interest in the study of various generalizations of source coding in the non-asymptotic and non-i.i.d. setting. An important setting that has been actively investigated in the past few decades is the {\em one-shot setting}, where just one sample from the source is given to the senders. A notable generalization of source coding, that has been studied in both the asymptotic i.i.d. and the one-shot settings, is that the sender observes a sample $x$ from a source and the receiver is supposed to output a random variable that depends on the sample. This task was investigated in the one-shot setting in~\cite{Jain:2003, HJMR10} in the context of {\em communication complexity} (known there as message compression), while in the asymptotic and i.i.d setting, it was studied in~\cite{BennettDHSW14, Cuff13} in the context of {\em channel simulation}. The task can be stated in more details as follows, where Alice is the sender and Charlie is the receiver.

\vspace{0.1in}

\noindent {\bf Task A:} Let $XM$ be joint random variables taking values over a set $\X\times \M$. Alice and Charlie possess pre-shared randomness, which is independent $X$. Alice observes a sample $x$ from $X$ and communicates a message to Charlie , where the message depends on the input $x$ and the value observed from the pre-shared randomness. Charlie outputs a sample distributed according to a random variable $M'$ satisfying $\frac{1}{2}\|XM-XM'\|_1 \leq \epsilon.$ 

\vspace{0.1in}

Above, $\|.\|_1$ is the $\ell_1$ distance and $\epsilon$ is an error parameter.  It was shown in~\cite{HJMR10} that the \textit{expected communication cost} of this task is equal to $\mutinf{X}{M}$ (up to a additive factor) in the one-shot setting, generalizing the result of Huffman~\cite{Huffman52}. The work~\cite{HJMR10} also gave important applications to the {\em direct sum} results in two-party communication complexity. The work~\cite{BravermanRao11} considered an extension of Task A with side information about $X$ at Charlie.

\vspace{0.1in}

\noindent {\bf Task B:} Let $XMZ$ be joint random variables taking values over a set $\X\times \M\times \Z$, such that $M-X-Z$. Alice and Charlie possess pre-shared randomness independent of $XZ$. Alice observes a sample $x$ from $X$ and Charlie observes a sample $z$ from $Z$. Alice communicates a message to Charlie, where the message only depends on the input $x$ and the pre-shared randomness. Charlie outputs a sample distributed according to a random variable $M'$ such that $\frac{1}{2}\|XMZ-XM'Z\|_1 \leq \epsilon.$ 

\vspace{0.1in}

An important assumption in Task B is the Markov chain condition $M-X-Z$, which signifies the fact that $M$ is to be treated as a `message' generated by Alice given $x$. Essentially the same condition arises when side information $Z$ is available with receiver in the context of channel simulation, as the channel generates $M$ only depending on $X$. The authors in~\cite{BravermanRao11} obtained the expected communication cost of $\condmutinf{X}{M}{Z}$ (the conditional mutual information) up to additive factors, in the one-shot setting. Recently, it has been shown that the protocol in~\cite{BravermanRao11} is near optimal also in terms of the worst communication cost~\cite{AnshuJW17un}.  

In this work, we consider a generalization of Task B in the setting of two senders and one receiver. More precisely, we consider the following task, which was also studied in~\cite{AnshuJW17un}.

\vspace{0.1in}

\noindent {\bf Task C:} Let $XYZMN$ be joint random variables taking values over a set $\X\times \Y \times \Z\times\M\times \N$, and satisfying the Markov chain conditions $M-X-YZN$ and $MXZ-Y-N$. Alice and Charlie possess pre-shared randomness and Bob and Charlie independently possess another pre-shared randomness. Alice observes a sample $x$ from $X$, Bob observes a sample $y$ from $Y$ and Charlie observes a sample $z$ from $Z$. Alice and Bob respectively communicate a message to Charlie (which also depends on the value observed from the pre-shared randomness). Charlie outputs a sample distributed according to a random variable $M'N'$ such that $\frac{1}{2}\|XYZMN-XYZM'N'\|_1 \leq \epsilon.$

\vspace{0.1in}

Task C is a generalization of the distributed source coding (DSC) studied by Slepian and Wolf~\cite{SlepianW73} in the asymptotic and i.i.d setting and in \cite{uteymatsu-matsuta-2014, TanKosut14, Warsi2016, AnshuJW17un} in the second order and one-shot settings. We show in Appendix~\ref{append:SCHtaskB} that Task C also generalizes the task of source coding with a helper (SCH), which has been studied in~\cite{Wyner75, miyakaye-kanaya-1995, WatanabeKT15, Verdu12, Warsi2016, uteymatsu-matsuta-2015, AnshuJW17un}. The motivation for considering Task C is for the message compression in multi-party communication complexity. In the past two decades, many elegant message compression protocols in the one-shot setting have been discovered in the context of communication complexity~\cite{Jain:2003, HJMR10,BravermanRao11,BarakBCR13} (some of which we discussed earlier). These protocols show how to achieve the communication cost close to the {\em information complexity}~\cite{Braverman15}, which measures the amount of information exchanged between the communicating parties. As a result, significant progress has been made towards the direct sum problems, one of the central open problems in communication complexity.  However, the notion of information complexity in the multi-party communication complexity has not yet been established,  party due to the fact that the communication cost region for multi-party communication is more involved and less understood. Hence, giving a tight characterization of the communication cost of Task C is a first step towards developing a correct notion of information complexity in the multi-party communication complexity.

We begin by trying to understand the rate region for Task C in the asymptotic and i.i.d. setting. By employing the time sharing technique \cite[Section 4.4]{GamalK12}, it can be found that the following is an achievable rate region, where $R_1$ is the rate of communication from Alice to Charlie and $R_2$ is the rate of communication from Bob to Charlie.  
\begin{eqnarray}
\label{eq:timesharetaskB}
R_1&\geq&\condmutinf{X}{M}{NZ} \nonumber\\
R_2&\geq&\condmutinf{Y}{N}{MZ}\nonumber\\
R_1+R_2&\geq&\condmutinf{XY}{MN}{Z},
\end{eqnarray}
where on the right hand side, we have the mutual information quantities. Is it possible to show that this rate region is optimal? The answer is negative, which can be seen by considering the task of SCH, which is a special case of Task C as discussed earlier. In this task, Alice holds a random variable $X$ and Bob holds a random variable $Y$ correlated with $X$. Alice and Bob communicate messages to Charlie in a manner that Charlie is able to output $X$ with high probability. 

It is well known~\cite[Section 10.4]{GamalK12} that the time sharing rate region for the SCH task (as obtained by setting $M=X$ and $N$ trivial in Eq.~\eqref{eq:timesharetaskB}) is not the optimal rate region. In fact, the known characterization of an optimal rate region requires the introduction of auxiliary random variables. Thus, the rate region given in Eq.~\eqref{eq:timesharetaskB} is not an optimal characterization of Task C and an optimal characterization may require some auxiliary random variables. On the other hand, the utility of the achievable rate region in Eq.~\eqref{eq:timesharetaskB} is that it only involves the random variables that are input for the task.

\subsection*{Our results}

We obtain the following results in the paper.
\begin{itemize}
\item We study Task C in the one-shot setting. First, we show how to obtain a one-shot analogue of Eq.~\eqref{eq:timesharetaskB} in Theorem~\ref{thm:main}, which is our main result. Observe that time sharing method cannot be applied in the one-shot setting. Hence we require a new tool to obtain our result, which we achieve by an appropriate multi-partite generalization of the protocols constructed in~\cite{BravermanRao11, AnshuJW17un}. While this achievable rate region is not known to be optimal (which is not known even in the asymptotic and i.i.d. setting, as discussed earlier) it has the following applications. 
\begin{enumerate}
\item We obtain a nearly tight characterization of the {\em one-shot lossy distributed source coding}  (in presence of side information at the receiver) in Theorem~\ref{thm:lossyopt}. 
\item We obtain a nearly tight characterization of Task C with auxiliary random variables (of bounded size that is comparable to the size of input random variables) in the one-shot setting, in Theorem \ref{thm:auxchar}. 
\item In Section \ref{subsec:recoverDSC} we recover the near optimal one-shot results on the DSC task and one-sender one-receiver message compression task as obtained in~\cite{AnshuJW17un}.   
\end{enumerate}

\item We study the expected communication cost of {\em Selpian-Wolf task}, a special case of Task C,  where $Y,N$ are trivial and $M=X$. This two-party task (as Bob is not involved) was considered by Slepian and Wolf~\cite{SlepianW73} and the communication rate in the asymptotic and i.i.d. setting was shown to be equal to $\mathrm{H}(X|Z)$. We show in Theorem~\ref{theo:expeclowbound} that any one-way protocol for this task will incur an expected communication cost of $\frac{1}{\sqrt{\epsilon}}\mathrm{H}(X|Z)$. On the other hand, the result of~\cite{BravermanRao11} implies that there is an interactive protocol achieving the expected communication cost of $\mathrm{H}(X|Z) + c\br{\sqrt{\mathrm{H}(X|Z)} +\log\frac{1}{\epsilon}}$, for some universal constant $c$. Thus, there is a stark contrast between the one-way protocols and interactive protocols in terms of the expected communication cost.

In turn, this implies that one-way protocols cannot achieve the region given in Eq.~\eqref{eq:timesharetaskB} for Task C in expected communication, even when the side information $Z$ is trivial. On the other hand, we also observe that there is a simple interactive scheme that uses the protocol of~\cite{BravermanRao11} for Task B as a subroutine and achieves the region given in Eq.~\eqref{eq:timesharetaskB} (up to small additive factors).

\end{itemize}

\subsection*{Our techniques}

For our achievability result in Theorem \ref{thm:main}, we use the two tools of convex-split \cite{AnshuDJ14} and position-based decoding \cite{AnshuJW17}. Convex-split technique allows the encoder to find the appropriate correlations in a collection of independent random variables (by incurring a small error) and position-based decoding is a \textit{hypothesis testing} process applied to a collection of random variables. Hypothesis testing is a technique to distinguish a random variable $X$ from a random variable $X'$ (both taking values over the same set $\X$), by constructing a test that accepts $X'$ with as small probability as possible, with the constraint that the same test must accept $X$ with probability close to $1$. Typically, it suffices to assume that the test corresponds to checking the membership of a sample in some suitable subset $S\subset\X$. Owing to the protocol developed in \cite{AnshuJW17un} for Task B, our key technical challenge will be to construct an appropriate hypothesis testing step, which we discuss in details in Subsection \ref{subsec:proofidea}. We remark that Task C was also studied in~\cite[Theorem 4]{AnshuJW17un} using the techniques of convex-split and position-based decoding. But the hypothesis testing step therein was different and in Subsection~\ref{subsec:comp} we show that our communication region contains the communication region obtained in~\cite[Theorem 4]{AnshuJW17un}, up to small additive factors.

For the proof of Theorem \ref{theo:expeclowbound}, we construct a joint random variable $XZ$ with two properties. First is that it has a low conditional entropy $\mathrm{H}(X|Z)$ and there is a value $z_0$ of $Z$ such that $(X|Z=z_0)$ has high entropy. The second property is that any one-way protocol with small expected communication cost for the Slepian-Wolf task leads to a protocol with small expected communication cost for the source coding of random variable $(X|Z=z_0)$. Since $(X|Z=z_0)$ has high entropy, it requires high expected length for its source coding \cite{Huffman52}. This leads to a contradiction.

\subsection*{Organisation}

We discuss our notations and the facts required in our proofs in Section \ref{prelims}. In Section \ref{sec:mainsec}, we discuss our achievability result and its various consequences. We also provide an overview of earlier techniques that are relevant to us. Various consequences of our results are discussed in Section \ref{sec:cons} In Section \ref{sec:expeccomm}, we discuss the expected communication cost of Task C.

\section{Preliminaries}
\label{prelims}

For a natural number $n$, let $[n]$ denote the set $\{1, 2, \ldots, n\}$. Let the random variable $X$ take values in a finite set $\X$ (all sets we consider in this paper are finite).  We let $p_X$ represent the distribution of $X$, that is for each $x \in \X$,  $p_X(x)\defeq\prob{}{X=x}$. We use $x\sim X$ to represent that $x$ is sampled from $X$. The support of the random variable $X$ is defined to be $\{x: p_X(x)>0\}$ and is denoted by $\supp{X}$. For any subset $\A\subseteq\X$, we use $\prob{X}{\A}$ to represent the probability $\sum_{x\in\A}p_X\br{x}$, that is, the probability that $x\in \A$.  Let random variables $XY$ take values in the set $\X\times\Y$. The random variable $Y$ conditioned on $X=x$ is denoted as $(Y|X=x)$. We say that $X$ and $Y$ are independent and denote the joint distribution by $X\times Y$,  if for each $x\in\X$ and $y\in\Y$ it holds that $p_{XY}\br{x,y}= p_X\br{x}p_Y\br{y}$.  We say that random variables $(X,Y,Z)$ form a Markov chain, represented as  $Y-X-Z$, if for each $x \in \X$, $(Y|X=x)$ and $(Z|X=x)$ are independent. For an event $E$, its complement is denoted by $\neg E$. The indicator random variable is denoted by the symbol $\id(\cdot)$. For a random variable $X$ taking values over $\X$ and a function $f: \X \rightarrow \X'$, we denote by $f(X)$ the random variable obtained by sampling $x$ according to $X$ and then applying $f$ to it. We use the same notation when $X$ is correlated with other random variables. For a random variable $X$ over a set $\X$ and a set $S \subseteq \X$, the random variable $X'$ defined as $p_{X'}(x) = \frac{\id(x\in S)p_X(x)}{\Pr_{X}\{G\}}$ is called a \textit{restriction} of $X$ over the set $S$.
	
\begin{definition}\label{def:infoquantitites}
Given $\epsilon\in(0,1)$ and random variables $X$ and $X'$ taking values in $\X$, we define
\begin{itemize}
\item $\ell_1$ distance. $$\onenorm{X-X'}\defeq\sum_{x\in\X}\abs{p_X\br{x}-p_{X'}\br{x}}.$$

\item KL-divergence. $$\relent{X}{X'}\defeq\sum_{x\in\X}p_X\br{x}\log\frac{p_X\br{x}}{p_{X'}\br{x}},$$ where we assume $0\log\frac{0}{0}=0$. If $\supp{X}\not\subseteq\supp{X'}$, then $\relent{X}{X'}\defeq\infty.$

\item Max information spectrum divergence. $$\misdiv{\epsilon}{X}{X'}\defeq\min\set{a:\prob{x\sim X}{\frac{p_X\br{x}}{p_{X'}\br{x}}\geq 2^a}\leq\epsilon}.$$

\item Hypothesis testing information spectrum divergence. $$\htisdiv{\epsilon}{X}{X'}\defeq\max\set{a:\prob{x\sim X}{\frac{p_X\br{x}}{p_{X'}\br{x}}\geq 2^a}\geq1-\epsilon}.$$
\end{itemize}
\end{definition}

\begin{fact}\label{fac:distancerestriction}
Let $G\subseteq\X$, $X$ be a random variable over $\X$ and $X'$ be the restriction of $X$ over the set $G$. It holds that 
\[\frac{1}{2}\onenorm{X-X'}=1-\prob{X}{G}.\]
\end{fact}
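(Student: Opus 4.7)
The plan is to prove this by directly expanding the definition of $\ell_1$ distance and splitting the sum over $\X$ into the parts inside and outside $G$. Set $p \defeq \prob{X}{G} = \sum_{x \in G} p_X(x)$ for brevity, so that by definition $p_{X'}(x) = p_X(x)/p$ for $x \in G$ and $p_{X'}(x) = 0$ for $x \notin G$.

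First I would handle the contribution from $x \notin G$: here $p_{X'}(x) = 0$, so $|p_X(x) - p_{X'}(x)| = p_X(x)$, and summing gives $1 - p$. Next I would handle $x \in G$, where
\[
|p_X(x) - p_{X'}(x)| \; = \; p_X(x)\left|1 - \tfrac{1}{p}\right| \; = \; p_X(x) \cdot \tfrac{1-p}{p},
\]
assuming $p > 0$ (if $p = 0$ the statement is vacuous or one works with the conventional extension). Summing over $x \in G$ yields $p \cdot \frac{1-p}{p} = 1 - p$.

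Adding the two contributions gives $\onenorm{X - X'} = 2(1-p)$, so $\tfrac{1}{2}\onenorm{X-X'} = 1 - \prob{X}{G}$, as claimed. There is no real obstacle here; the only subtlety worth flagging is the degenerate case $\prob{X}{G} = 0$, which one should mention briefly is excluded by the implicit assumption that $X'$ is a well-defined restriction (otherwise the normalization is undefined).
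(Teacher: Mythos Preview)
Your proof is correct and follows essentially the same approach as the paper: split the $\ell_1$ sum into $x\in G$ and $x\notin G$, compute each piece to be $1-\prob{X}{G}$, and add. The paper does this in a single displayed line without naming $p$, and does not mention the degenerate case $\prob{X}{G}=0$, but otherwise the arguments are identical.
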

\begin{proof}
Consider
\[\onenorm{X-X'}=\sum_{x\in G}\abs{p_X\br{x}-\frac{p_X\br{x}}{\prob{X}{G}}}+\sum_{x\notin G}p_X\br{x}=\frac{\prob{X}{G^c}}{\prob{X}{G}}\cdot \prob{X}{G}+\prob{X}{G^c}=2\prob{X}{G^c}.\]
\end{proof}

\begin{fact}
Let  $XYZ$ and $X'Y'Z'$ be two joint distributions over $\X\times\Y\times\Z$. It holds that 
\[\relent{XYZ}{X'Y'Z'}\geq\relent{XY}{X'Y'}.\]
\end{fact}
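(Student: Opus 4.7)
The plan is to prove this data processing inequality for KL divergence by a direct application of the log-sum inequality, reducing the problem to a pointwise comparison at each fixed pair $(x,y)$. This is the simplest path because the statement is really just saying that marginalizing out $Z$ cannot increase the relative entropy, which follows coordinate-wise from the convexity of $t \mapsto t \log t$.

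Concretely, I would first handle the support condition: if $\supp{XYZ} \not\subseteq \supp{X'Y'Z'}$ then the left-hand side is $\infty$ and there is nothing to prove, so assume the support condition holds, which also implies $\supp{XY}\subseteq\supp{X'Y'}$. Then I would expand
\[
\relent{XYZ}{X'Y'Z'} = \sum_{x,y} \sum_{z} p_{XYZ}(x,y,z) \log \frac{p_{XYZ}(x,y,z)}{p_{X'Y'Z'}(x,y,z)},
\]
and for each fixed $(x,y)$ apply the log-sum inequality to the inner sum over $z$, which gives
\[
\sum_{z} p_{XYZ}(x,y,z) \log \frac{p_{XYZ}(x,y,z)}{p_{X'Y'Z'}(x,y,z)} \;\geq\; p_{XY}(x,y) \log \frac{p_{XY}(x,y)}{p_{X'Y'}(x,y)}.
\]
Summing the resulting inequality over all $(x,y)$ recovers exactly $\relent{XY}{X'Y'}$ on the right-hand side, completing the argument.

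An equivalent alternative would be to invoke the chain rule $\relent{XYZ}{X'Y'Z'} = \relent{XY}{X'Y'} + \relent{Z|XY}{Z'|X'Y'}$ (with the conditional divergence taken under $p_{XY}$) and then note that the conditional KL divergence is a $p_{XY}$-weighted average of nonnegative KL divergences, hence nonnegative. Both approaches are essentially the same calculation; I would go with the log-sum route because it avoids introducing the conditional divergence notation, which has not been defined in the preliminaries. There is no real obstacle here — the only mild subtlety is correctly handling the conventions $0 \log (0/0) = 0$ and the case $\supp{XYZ} \not\subseteq \supp{X'Y'Z'}$, but both are immediate from the definitions.
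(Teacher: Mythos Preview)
The paper states this result as a \emph{Fact} without proof, so there is no proof in the paper to compare against. Your proposal via the log-sum inequality is a correct and standard argument for the monotonicity of KL divergence under marginalization; the handling of the support convention and the $0\log(0/0)$ case is appropriate, and the alternative chain-rule route you mention is equally valid.
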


\begin{fact}
Let $X$ and $Y$ be two distributions over set $\X$. It holds that 
\[\onenorm{X-Y}\leq2\cdot\sqrt{\relent{X}{Y}}.\]
\end{fact}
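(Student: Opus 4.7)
The plan is to prove this inequality, which is Pinsker's inequality with a slightly loose constant, via the classical two-step reduction to the binary case. First, I would reduce the claim to a Bernoulli comparison. Let $G \defeq \set{x \in \X : p_X\br{x} \geq p_Y\br{x}}$, so that by definition $\frac{1}{2}\onenorm{X-Y} = \prob{X}{G} - \prob{Y}{G}$. By convexity of $t \log t$ (equivalently the log-sum inequality) applied to the two-element partition $\set{G, G^c}$, one obtains the data-processing inequality
\[
\relent{X}{Y} \;\geq\; \relent{U}{V},
\]
where $U$ and $V$ are Bernoulli random variables with parameters $\prob{X}{G}$ and $\prob{Y}{G}$ respectively. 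Thus it suffices to prove the inequality for two Bernoulli distributions on a common two-point set.

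Second, I would establish the scalar inequality $d\br{p\|q} \geq 2\br{p-q}^2$, where
\[
d\br{p\|q} \defeq p\log\frac{p}{q} + \br{1-p}\log\frac{1-p}{1-q}.
\]
Fixing $q$ and setting $h\br{p} \defeq d\br{p\|q} - 2\br{p-q}^2$, a direct computation gives $h\br{q}=0$, $h'\br{q}=0$, and $h''\br{p} = \frac{1}{p\br{1-p}\ln 2} - 4 \geq \frac{4}{\ln 2} - 4 > 0$ using the elementary bound $p\br{1-p}\leq 1/4$. Hence $h\br{p}\geq 0$ on $\Br{0,1}$, which is the required Bernoulli inequality.

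Combining the two steps yields
\[
\relent{X}{Y} \;\geq\; 2\br{\prob{X}{G}-\prob{Y}{G}}^2 \;=\; \frac{1}{2}\onenorm{X-Y}^2,
\]
which implies $\onenorm{X-Y} \leq \sqrt{2\relent{X}{Y}} \leq 2\sqrt{\relent{X}{Y}}$, as desired. There is no substantial obstacle here — this is the standard Pinsker inequality. The only mild subtlety is tracking the base of the logarithm through the constants, since $\relent{\cdot}{\cdot}$ in Definition~\ref{def:infoquantitites} is in base $2$, so the factors of $\frac{1}{\ln 2}$ appearing in derivatives of $\log_2$ combine to give the clean constant $2$ in the Bernoulli step.
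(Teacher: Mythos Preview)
Your argument is correct and is the standard derivation of Pinsker's inequality via the data-processing reduction to the binary case followed by the elementary convexity estimate; the constants are tracked correctly for the base-$2$ logarithm used in Definition~\ref{def:infoquantitites}. The paper itself offers no proof of this fact---it is recorded as a well-known inequality and simply cited---so there is nothing to compare against; your write-up would serve perfectly well as a self-contained justification.
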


\begin{fact}\label{lem:distancemonotone}
Let $X$  and $X'$ be two random variable distributed over set $\X$ and $f:\X\rightarrow\Z$ be a map. It holds that 
\[\onenorm{f\br{X}-f\br{X'}}\leq\onenorm{X-X'}.\]
\end{fact}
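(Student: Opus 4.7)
The plan is to prove this data processing inequality for total variation distance by a direct unpacking of the definition, relying only on the triangle inequality and on the fact that the preimages $\{f^{-1}(z)\}_{z \in \Z}$ form a partition of $\X$.

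First, I would rewrite the push-forward probabilities as sums over preimages: for every $z \in \Z$,
\[
p_{f(X)}(z) \;=\; \sum_{x \in f^{-1}(z)} p_X(x), \qquad p_{f(X')}(z) \;=\; \sum_{x \in f^{-1}(z)} p_{X'}(x),
\]
so that
\[
p_{f(X)}(z) - p_{f(X')}(z) \;=\; \sum_{x \in f^{-1}(z)} \bigl(p_X(x) - p_{X'}(x)\bigr).
\]

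Next, I would apply the triangle inequality to bring the absolute value inside the inner sum, obtaining $\abs{p_{f(X)}(z) - p_{f(X')}(z)} \leq \sum_{x \in f^{-1}(z)} \abs{p_X(x) - p_{X'}(x)}$. Summing this bound over $z \in \Z$ and using that the preimages partition $\X$ gives
\[
\onenorm{f(X) - f(X')} \;=\; \sum_{z \in \Z} \abs{p_{f(X)}(z) - p_{f(X')}(z)} \;\leq\; \sum_{z \in \Z} \sum_{x \in f^{-1}(z)} \abs{p_X(x) - p_{X'}(x)} \;=\; \onenorm{X - X'},
\]
which is the desired inequality.

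There is essentially no genuine obstacle in this argument; the only point worth flagging is that $f$ need not be injective, which is precisely why the triangle inequality is needed (for injective $f$ the preimages are singletons and equality holds). No appeal to the earlier facts on KL divergence is required, and the proof is constructive and finite, matching the standing assumption in the preliminaries that all alphabets are finite.
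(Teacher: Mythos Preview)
Your proof is correct and is the standard argument for the data processing inequality for the $\ell_1$ distance. The paper itself states this result as a Fact without proof, so there is nothing to compare against; your argument supplies exactly the elementary justification one would expect.
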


The following is the classical version of the convex-split lemma~\cite{AnshuDJ14}, as stated in~\cite{AnshuJW17un}.

\begin{fact}[Convex-split lemma~\cite{AnshuJW17un}]
\label{fact:convsplit}
Let $\epsilon,\delta>0$, $R$ be a non-negative integer, $XM$ be a joint distribution over $\X\times\M$ and $W$ be a random variable distributed over $\M$
\[R\geq\misdiv{\epsilon}{XM}{X\times W}+2\log\frac{3}{\delta}.\]
Let $J$  be a random variable uniformly distributed over $[2^R]$ and the joint distribution $JXM_1M_2\ldots M_{2^R}$ be defined to be:
\[\prob{}{XM_1\ldots M_{2^R}=xm_1\ldots m_{2^R}|J=j}\defeq p_{XM}\br{xm_j}p_W\br{m_1}\cdots p_W\br{m_{j-1}}p_W\br{m_{j+1}}\cdots p_W\br{m_{2^R}}.\]
Then 
\[\frac{1}{2}\onenorm{XM_1M_2\ldots M_{2^R}-X\times W\times W\times\ldots\times W}\leq \epsilon+\delta.\]
\end{fact}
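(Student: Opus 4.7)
The plan is to control the $\ell_1$ distance directly through the likelihood ratio of the mixture $\tau \defeq \frac{1}{2^R}\sum_{j=1}^{2^R} \tau_j$ (the left-hand distribution) against the product $\sigma \defeq X \times W^{\otimes 2^R}$. A direct computation yields $\tau/\sigma = \bar A \defeq \frac{1}{2^R}\sum_j A_j$ with $A_j(x, m_1, \ldots, m_{2^R}) \defeq p_{XM}(x, m_j)/(p_X(x) p_W(m_j))$; since $\expec{\sigma}{\bar A} = 1$, we obtain the identity $\frac{1}{2}\onenorm{\tau - \sigma} = \expec{\sigma}{(\bar A - 1)_+}$. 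Set $a \defeq \misdiv{\epsilon}{XM}{X \times W}$ and $G \defeq \set{(x, m) : p_{XM}(x,m) \leq 2^a \, p_X(x) p_W(m)}$; by the definition of the divergence, $\prob{XM}{G^c} \leq \epsilon$. I would split each $A_j = A_j^G + A_j^{G^c}$ with $A_j^G \defeq A_j \cdot \id((x,m_j) \in G)$, and use the pointwise inequality $(\bar A - 1)_+ \leq (\bar A^G - 1)_+ + \bar A^{G^c}$ to handle the two pieces separately.

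The ``bad'' part is almost immediate: unfolding $A_j$ under $\sigma$, where $(x, m_j) \sim X \times W$ marginally, gives $\expec{\sigma}{A_j^{G^c}} = \prob{XM}{G^c} \leq \epsilon$, which will produce the $\epsilon$ contribution. For the ``good'' part, condition on $x$: the summands $A_j^G$ are then i.i.d.\ under $\sigma(\cdot \mid x)$ with conditional mean $q(x) \defeq \prob{M \mid X = x}{(x, M) \in G} \leq 1$ and, using the cap $A_j^G \leq 2^a$ on its support, conditional second moment at most $2^a q(x)$. Because $q(x) \leq 1$, the pointwise bound $(\bar A^G - 1)_+ \leq |\bar A^G - q(x)|$ holds; Cauchy--Schwarz conditional on $x$ gives $\bigE\!\left[|\bar A^G - q(x)| \,\middle|\, x\right] \leq \sqrt{2^a q(x)/2^R}$, and averaging over $x \sim X$ and applying Jensen's inequality to the square root yields $\expec{\sigma}{(\bar A^G - 1)_+} \leq \sqrt{2^a \cdot \prob{XM}{G}/2^R} \leq \sqrt{2^a/2^R} \leq \delta/3$, where the final step uses $R \geq a + 2\log\frac{3}{\delta}$.

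Combining the two contributions yields $\frac{1}{2}\onenorm{\tau - \sigma} \leq \epsilon + \delta/3 \leq \epsilon + \delta$, as required. The main obstacle is the unboundedness of the raw ratios $A_j$, which rules out a naive second-moment argument on $\bar A$ directly; the truncation at the level $2^a$ is calibrated so that the tail beyond $G$ contributes at most $\epsilon$ in $\sigma$-expectation, while the truncated ratio admits a clean variance estimate of order $2^a/2^R$ thanks to the average of $2^R$ conditionally i.i.d.\ copies. The hypothesis $R \geq a + 2\log\frac{3}{\delta}$ is precisely what drives this variance contribution below $\delta$, producing the two-term bound with matching constants.
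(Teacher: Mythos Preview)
Your argument is correct. The paper does not supply its own proof of this fact; it is quoted as a preliminary result from~\cite{AnshuJW17un}, so there is no in-paper argument to compare against.

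For the record, what you wrote is the standard second-moment route to the classical convex-split lemma, and every step checks out: the likelihood ratio of the mixture against the product is exactly $\bar A=\frac{1}{2^R}\sum_j A_j$, giving the identity $\tfrac12\onenorm{\tau-\sigma}=\expec{\sigma}{(\bar A-1)_+}$ once $\supp{XM}\subseteq\supp{X\times W}$ (implicit in the finiteness of the divergence). The pointwise splitting $(\bar A-1)_+\le(\bar A^G-1)_+ + \bar A^{G^c}$ is valid because both pieces are nonnegative; the bad part integrates to $\prob{XM}{G^c}\le\epsilon$ since under $\sigma$ each $(x,m_j)\sim X\times W$ and the weight $A_j$ converts this to a $p_{XM}$-probability. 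For the good part, the conditional-i.i.d.\ structure under $\sigma(\cdot\mid x)$ with mean $q(x)\le 1$ and second moment $\le 2^a q(x)$ gives $\var{\bar A^G\mid x}\le 2^a q(x)/2^R$, and the chain $(\bar A^G-1)_+\le|\bar A^G-q(x)|$, Cauchy--Schwarz, then Jensen on $\sqrt{\cdot}$ yields the bound $\sqrt{2^a/2^R}\le\delta/3$ from the hypothesis $R\ge a+2\log\frac{3}{\delta}$. The only implicit assumption worth flagging is the support inclusion $\supp{XM}\subseteq\supp{X\times W}$, without which the ratio $A_j$ and the divergence $\misdiv{\epsilon}{XM}{X\times W}$ are not even defined.
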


Its bipartite generalization is as follows, given in~\cite{AnshuJW17un}.
\begin{fact}[Bipartite convex-split lemma~\cite{AnshuJW17un}]
\label{genconvexcomb}
Let $\epsilon, \delta\in (0,1)$. Let $XMN$ (jointly distributed over $\X\times \M\times \N$), $U$ (distributed over $\M$) and $V$ (distributed over $\N$) be random variables. Let $R_1, R_2$ be natural numbers such that,
$$\prob{x,m,n\sim XMN}{\frac{p_{XM}(x,m)}{p_X(x)p_U(m)} \leq \frac{\delta^2}{24}\cdot 2^{R_1} \mbox{ and } \frac{p_{XN}(x,n)}{p_X(x)p_V(n)} \leq \frac{\delta^2}{24}\cdot 2^{R_2} \atop\mbox{ and }\quad\frac{p_{XMN}(x,m,n)}{p_X(x)p_U(m)p_V(n)} \leq \frac{\delta^2}{24}\cdot 2^{R_1+R_2}} \geq 1-\epsilon.$$
Let $J$ be uniformly distributed in $[2^{R_1}]$, $K$ be independent of $J$ and be uniformly distributed in $[2^{R_2}]$ and  joint random variables $\br{J,K,X,M_1,\ldots, M_{2^{R_1}},N_1,\ldots, N_{2^{R_2}}}$ be distributed as follows:
\begin{align*}
&\prob{}{(X,M_1, \ldots, M_{2^{R_1}}, N_1, \ldots, N_{2^{R_2}}) = (x, m_1, \ldots, m_{2^{R_1}}, n_1, \ldots , n_{2^{R_2}})| J=j, K=k}\\
&=  p_{XMN}(x,m_j,n_k)\cdot p_{U}(m_1)\cdots p_{U}(m_{j-1}) \cdot p_{U}(m_{j+1})\cdots p_{U}(m_{2^{R_1}}) \cdot \\ 
&\hspace{35mm} p_V(n_1)\cdots p_{V}(n_{k-1}) \cdot p_{V}(n_{k+1})\cdots p_{V}(n_{2^{R_2}}).
\end{align*}
Then (below for each $j \in [2^{R_1}], p_{U_j} = p_U$ and for each $k \in [2^{R_2}], p_{V_k} = p_V$),
 $$ \frac{1}{2}\|XM_1 \ldots M_{2^{R_1}}N_1 \ldots N_{2^{R_2}} - X\times U_1 \times\ldots \times U_{2^{R_1}}\times V_1 \times\ldots \times V_{2^{R_2}}\|_1 \leq \epsilon + \delta.$$
\end{fact}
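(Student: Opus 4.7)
The plan is to mimic the single-variable convex-split lemma (Fact \ref{fact:convsplit}) via a two-stage argument. In stage one, let $G \subseteq \X\times\M\times\N$ denote the event appearing in the hypothesis, so $\prob{XMN}{G}\geq 1-\epsilon$, and let $\tilde P_{XMN}$ be the restriction of $P_{XMN}$ to $G$; Fact \ref{fac:distancerestriction} then gives $\onenorm{P_{XMN}-\tilde P_{XMN}}\leq 2\epsilon$. Let $\tilde P$ be the joint distribution on $(X,M_1,\ldots,N_{2^{R_2}})$ produced by the construction in the statement but with $\tilde P_{XMN}$ in place of $P_{XMN}$. Since this construction is a convex mixture of functions of a single $XMN$-sample, Fact \ref{lem:distancemonotone} gives $\onenorm{P-\tilde P}\leq 2\epsilon$, so it suffices to bound $\onenorm{\tilde P - Q}$ by $2\delta$, where $Q \defeq p_X \times p_U^{\times 2^{R_1}} \times p_V^{\times 2^{R_2}}$.

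In stage two, I would define $\tilde\alpha_{j,k}(\omega) \defeq \tilde P_{XMN}(x,m_j,n_k)/(p_X(x) p_U(m_j) p_V(n_k))$, so that $\tilde P/Q - 1 = 2^{-(R_1+R_2)}\sum_{j,k}(\tilde\alpha_{j,k}-1)$ with $\mathbb{E}_Q[\tilde\alpha_{j,k}] = 1$, and expand the chi-squared
\[
\chi^2(\tilde P \| Q) = \frac{1}{2^{2(R_1+R_2)}} \sum_{(j,k),(j',k')} \Big(\mathbb{E}_Q[\tilde\alpha_{j,k}\, \tilde\alpha_{j',k'}] - 1\Big).
\]
The cross-terms split into four classes depending on whether $j=j'$ and whether $k=k'$. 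The diagonal class ($j=j',\, k=k'$) contributes $O(\delta^2)$ after the $2^{R_1+R_2}$ combinatorial factor, using the joint ratio bound from the hypothesis; the two partially-matching classes ($j=j', k\neq k'$ and $j\neq j', k=k'$) each contribute $O(\delta^2)$ via the $XM$- and $XN$-marginal bounds. For the latter two, the required pointwise bounds on $\tilde P_{XM}/(p_X p_U)$ and $\tilde P_{XN}/(p_X p_V)$ hold because the support of $\tilde P_{XM}$ lies inside $\{(x,m):\exists n,\ (x,m,n)\in G\}$, on which the hypothesis $XM$-bound applies, and symmetrically for $\tilde P_{XN}$. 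Pinsker's inequality together with $D(\cdot\|\cdot)\leq\chi^2(\cdot\|\cdot)$ then converts the chi-squared estimate into $\onenorm{\tilde P - Q}\leq O(\delta)$.

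The hard part will be the fully-unmatched class. Its inner cross-expectation reduces to $\sum_x \tilde P_X(x)^2/p_X(x) = 1 + \chi^2(\tilde P_X\|p_X)$, and a direct bound leaves a residual of size $\epsilon$ rather than $\delta^2$, which would spoil the $\delta$-dependence after Pinsker. The fix is to compare $\tilde P$ not to $Q$ but to the intermediate distribution $\tilde Q \defeq \tilde P_X \times p_U^{\times 2^{R_1}} \times p_V^{\times 2^{R_2}}$: the fully-unmatched cross-expectation under $\tilde Q$ equals $\sum_x \tilde P_X(x)\cdot 1^2 = 1$, so that class vanishes identically, and one then pays $\onenorm{\tilde Q - Q} \leq \onenorm{\tilde P_X - p_X} \leq 2\epsilon$ via an extra triangle inequality that is absorbed into the error budget already established in stage one. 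The factor $\delta^2/24$ in the hypothesis is calibrated so that the three non-trivial chi-squared contributions, combined with the Pinsker constant, combine to give the stated $\epsilon+\delta$ bound on the normalized total variation.
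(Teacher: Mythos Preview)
The paper does not prove this statement; it is imported as a cited Fact from \cite{AnshuJW17un}, so there is no in-paper argument to compare against.

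Your outline --- restrict to the good event, bound the $\chi^2$-divergence of the resulting mixture against a product reference, and apply Pinsker --- is the standard route to convex-split results and is structurally sound. In particular, the classification of the four types of cross-terms and the passage to the intermediate reference $\tilde Q=\tilde P_X\times p_U^{\times 2^{R_1}}\times p_V^{\times 2^{R_2}}$ to annihilate the fully-unmatched class is correct: writing $\beta_{j,k}=\tilde P_{MN\mid X}(m_j,n_k)/(p_U(m_j)p_V(n_k))$, one has $\mathbb{E}_{\tilde Q}[\beta_{j,k}\mid X]=1$ identically, so those cross-terms vanish, and the three remaining classes are controlled by the three ratio bounds in the hypothesis (the factor $1/p_{MN\mid X=x}(G_x)$ that appears when rewriting $p$-ratios as $\tilde P$-ratios cancels against the same factor sitting inside $\tilde P_X$).

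Where the outline falls short is the final accounting. Stage one costs $\tfrac{1}{2}\|P-\tilde P\|_1\le\epsilon$, and the return from $\tilde Q$ to $Q$ costs a further $\tfrac{1}{2}\|\tilde Q-Q\|_1=\tfrac{1}{2}\|\tilde P_X-p_X\|_1$, which in general is also of order $\epsilon$ (take $p_{MN\mid X=x}(G_x)=1$ for all $x$ except a set of $p_X$-mass $\epsilon$ where it vanishes). These two $\epsilon$-charges are additive, not ``absorbed'', so your argument yields at best $2\epsilon+c\delta$. Since $\epsilon$ and $\delta$ are independent parameters, no calibration of the constant $24$ in the hypothesis can repair this. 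To match the stated $\epsilon+\delta$ one must arrange that the $\epsilon$-cost is paid only once --- for instance by restricting $MN$ \emph{conditionally} on each $x$ so that the $X$-marginal of the restricted distribution remains exactly $p_X$ and hence $\tilde Q=Q$, though that route in turn requires separate handling of those $x$ for which $G_x$ has small conditional mass. As written, the proposal proves a bound of the right shape but with a weaker constant than the statement claims.
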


We will also need a classical version of position-based decoding~\cite{AnshuJW17}, which was obtained in \cite{AnshuJW17un}. Below we provide a more rigorous proof. 

\begin{lemma}\label{lem:error}
Given $\epsilon\in[0,1)$ and joint distribution $CH$ over $[c+1]\times \H$, where $\H$ is a finite set and $\supp{C}=[c]$, we define $H_i\defeq\br{H|C=i}$. For any $i\in[c]$, let $\A_i\subset\H$ be a subset satisfying that $\prob{H_i}{\A_i}\geq1-\epsilon$.  Consider a protocol $\P$ which takes a sample $h\sim H$ and sequentially verifies whether $h\in\A_i$ for $i\in[c]$. $\P$ terminates and outputs the first $i$ satisfying $h\in\A_i$. Otherwise, it outputs $c+1$. The output is denoted by the random variable $C'$. It holds that 
\[\frac{1}{2}\onenorm{HC-HC'}\leq \sum_ip_C\br{i}\sum_{j:j\neq i,j\leq c}\prob{H_i}{\A_j}+ \epsilon.\]
\end{lemma}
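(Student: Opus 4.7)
The plan is to reduce the total variation distance between $HC$ and $HC'$ to the probability that the protocol outputs the wrong index, and then bound this error probability by a union bound over the two ways the protocol can fail.

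First, I would observe that $C'$ is a deterministic function of $H$. Hence one can couple $HC$ and $HC'$ through the sample $h \sim H$: conditioning on $H=h$ yields two distributions over $[c+1]$ (one being $p_{C \mid H = h}$, the other a point mass at $C'(h)$), and the standard coupling bound gives
\[
\frac{1}{2}\onenorm{HC - HC'} \;=\; \sum_h p_H(h)\cdot \frac{1}{2}\onenorm{(C\mid H=h)-(C'\mid H=h)} \;\leq\; \prob{}{C\neq C'}.
\]
So it suffices to upper bound $\Pr[C \neq C']$.

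Next I would write $\Pr[C\neq C'] = \sum_{i\in[c]} p_C(i)\cdot \Pr[C'\neq i \mid C=i]$. When $C=i$ the input $h$ is distributed as $H_i$, and the protocol outputs $i$ exactly when $h \in \A_i$ \emph{and} $h \notin \A_j$ for every $j<i$ (so that the sequential scan does not halt prematurely). Consequently the failure event $\{C'\neq i\}$ is contained in the union $\{h\notin \A_i\}\cup\bigcup_{j<i}\{h\in\A_j\}$. By the hypothesis $\Pr_{H_i}[\A_i]\geq 1-\epsilon$ and a union bound,
\[
\prob{}{C'\neq i\mid C=i} \;\leq\; \epsilon + \sum_{j<i}\prob{H_i}{\A_j}.
\]

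Summing over $i$, weighted by $p_C(i)$, gives
\[
\prob{}{C\neq C'} \;\leq\; \epsilon + \sum_{i}p_C(i)\sum_{j<i}\prob{H_i}{\A_j} \;\leq\; \epsilon + \sum_{i}p_C(i)\sum_{\substack{j\neq i\\ j\leq c}}\prob{H_i}{\A_j},
\]
where the last step is the trivial relaxation from $j<i$ to all $j\neq i$ in $[c]$, which matches the claimed bound. I do not anticipate a serious obstacle here: the only subtle point is the initial coupling step (to pass from $\ell_1$ distance to a mismatch probability while using that $H$ has the same marginal on both sides), and the identification of the two disjoint failure modes of the sequential test. Everything else is a union bound.
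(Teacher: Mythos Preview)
Your proof is correct and in fact slightly sharper than stated: since $C'$ is a deterministic function of $H$ and both $HC$ and $HC'$ share the marginal $H$, the coupling step gives the \emph{equality} $\tfrac{1}{2}\onenorm{HC-HC'}=\Pr[C\neq C']$, not just an inequality. The rest is exactly the union bound you describe.

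The paper's proof reaches the same bound by a noticeably heavier route. It also conditions on $C=i$, but then, instead of invoking determinism of $C'$ to pass directly to $\Pr[C'\neq i\mid C=i]$, it introduces an auxiliary bit $B_i=\id(C'<i)$ and factors the protocol as a composition $\P_i^1\circ\P_i^0$ of two sub-protocols (run up to step $i-1$; run from step $i$ onward). It then bounds each piece via the data-processing inequality for $\ell_1$ (Fact~\ref{lem:distancemonotone}) and triangle inequalities, obtaining $2\sum_{j<i}\Pr_{H_i}[\A_j]$ from the first phase and $2\epsilon$ from the second. This machinery would continue to work if the tests $\A_i$ were replaced by randomized (or quantum) accept/reject procedures, which is presumably why the authors chose it; but in the present classical deterministic setting your direct ``$\ell_1$ equals mismatch probability, then union bound'' argument is both shorter and more transparent.
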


\begin{proof}

From the definition, we have 
\begin{eqnarray}
&&\onenorm{HC'-HC}=\onenorm{\sum_ip_C\br{i}\br{HC'|C=i}-\sum_ip_C\br{i}H_i\times\id\br{C'=i}}\nonumber\\
&&\leq\sum_ip_C\br{i}\onenorm{\br{HC'|C=i}-H_i\times\id\br{C'=i}},\label{eqn:hc}
\end{eqnarray}
where the equality follows from the fact that $H_i\times\id\br{C=i}=H_i\times\id\br{C'=i}$. We will bound each of term in the summation on the right hand side. 

For any $i$, set $B_i\defeq\id\br{C'< i}$ for all $i\in[c+1]$. By Fact~\ref{lem:distancemonotone},
\begin{eqnarray}
&&\onenorm{\br{HC'|C=i}-\br{H|C=i}\times\id\br{C'=i}}\nonumber\\
&&=\onenorm{\br{HC'B_i|C=i}-\br{H|C=i}\times\id\br{C'=i}\times\id\br{B_i=0}}.\label{eqn:hcb}
\end{eqnarray}
Thus it suffices to upper bound the right hand side. For this, we introduce two sub-protocols $\P^0_i$ and $\P^1_i$. $\P^0_i$  runs $\P$ till step $i-1$ and sets $B_i=1$ and $C'$ to be the index returned upon $\P$'s termination, if $\P$ terminates. It sets $B_i=0, C'=i$ otherwise. If $B_i=0$, $\P^1_i$ runs $\P$ from step $i$ onwards and outputs $HC'B_i$. If $B_i=1$, $\P^1_i$ outputs $HC'B_i$ and then terminates. Observe that the output of $\P$ is part of the output of $\P^1_i\circ\P^0_i$. Consider 
\begin{eqnarray}
&&\onenorm{\br{HC'B_i|C=i}-\br{H|C=i}\times\id\br{C'=i}\times\id\br{B_i=0}}\nonumber\\
&&=\onenorm{\P^1_i\circ\P^0_i\br{H_i}-H_i\times\id\br{C'=i}\times\id\br{B_i=0}}\nonumber\\
&&\leq\onenorm{\P^1_i\circ\P^0_i\br{H_i}-\P^1_i\br{H_i\times\id\br{C'=i}\times\id\br{B_i=0}}}\nonumber\\
&&+\onenorm{\P^1_i\br{H_i\times\id\br{C'=i}\times\id\br{B_i=0}}-H_i\times\id\br{C'=i}\times\id\br{B_i=0}}\nonumber\\
&&\leq\onenorm{\P^0_i\br{H_i}-H_i\times\id\br{C'=i}\times\id\br{B_i=0}}\nonumber\\
&&+\onenorm{\P^1_i\br{H_i\times\id\br{C'=i}\times\id\br{B_i=0}}-H_i\times\id\br{C'=i}\times\id\br{B_i=0}}.\label{eqn:hcb4}
\end{eqnarray}
where the first inequality follows from Lemma~\ref{lem:distancemonotone}. Let $\E_{<i}$ be the event that the protocol $\P$ does not terminate in any step before $i$. Then using the expansion
\[H_i=\prob{H_i}{\E_{<i}}\cdot\br{H_i|\E_{<i}}+\prob{H_i}{\neg\E_{<i}}\cdot\br{H_i|\neg \E_{<i}}\]
we have 
\begin{eqnarray}
&&\P^0_i\br{H_i}=\prob{H_i}{\E_{<i}}\cdot\P^0_i\br{\br{H_i|\E_{<i}}}+\prob{H_i}{\neg\E_{<i}}\cdot\P^0_i\br{\br{H_i|\neg\E_{<i}}}\nonumber\\
&&=\prob{H_i}{\E_{<i}}\cdot\br{\br{H_i|\E_{<i}}\times\id\br{C'=i}\times\id\br{B_i=0}}+\prob{H_i}{\neg\E_{<i}}\cdot\P^0_i\br{\br{H_i|\neg\E_{<i}}},  \label{eqn:p0hi}
\end{eqnarray}
where the last equality follows from the fact that $\P^0_i$  outputs $B_i=0$ and $C'=i$ when it runs on $\br{H_i|\E_{<i}}$ and does not terminate till step $i-1$. Moreover,
\begin{eqnarray}
&&\br{H_i\times\id\br{C'=i}\times\id\br{B_i=0}}\nonumber\\
&&=\prob{H_i}{\E_{<i}}\cdot\br{\br{H_i|\E_{<i}}\times\id\br{C'=i}\times\id\br{B_i=0}}\nonumber\\
&&+\prob{H_i}{\neg\E_{<i}}\cdot\br{\br{H_i|\neg\E_{<i}}\times\id\br{C'=i}\times\id\br{B_i=0}}. \label{eqn:hicb}
\end{eqnarray}
Combining Eqs.~\eqref{eqn:p0hi}\eqref{eqn:hicb}, we have 
\begin{eqnarray}
&&\onenorm{\P^0_i\br{H_i}-H_i\times\id\br{C'=i}\times\id\br{B_i=0}}\nonumber\\
&&=\onenorm{\prob{H_i}{\neg\E_{<i}}\cdot\P_i^0\br{\br{H_i|\neg\E_{<i}}}-\prob{H_i}{\neg\E_{<i}}\cdot\br{H_i|\neg\E_{<i}}\times\id\br{C'=i}\times\id\br{B_i=0}}\nonumber\\
&&=\prob{H_i}{\neg\E_{<i}}\cdot\onenorm{\P_i^0\br{\br{H_i|\neg\E_{<i}}}-\br{H_i|\neg\E_{<i}}\times\id\br{C'=i}\times\id\br{B_i=0}}\nonumber\\
&&\leq 2\prob{H_i}{\neg\E_{<i}}\leq2\sum_{j<i}\prob{H_i}{\A_j}.\label{eqn:hcb3}
\end{eqnarray}
To upper bound the second term in Eq.~\eqref{eqn:hcb4}, consider
\begin{eqnarray}
&&\P^1_i\br{H_i\times\id\br{C'=i}\times\br{B_i=0}}\nonumber\\
&&=\prob{H_i}{\A_i}\cdot\P^1_i\br{\br{H_i|\A_i}\times\id\br{C'=i}\times\br{B_i=0}}+\prob{H_i}{\neg\A_i}\cdot\P^1_i\br{\br{H_i|\neg\A_i}\times\id\br{C'=i}\times\br{B_i=0}}\nonumber\\
&&=\prob{H_i}{\A_i}\br{H_i|\A_i}\times\id\br{C'=i}\times\br{B_i=0}+\prob{H_i}{\neg\A_i}\P^1_i\br{\br{H_i|\neg\A_i}\times\id\br{C'=i}\times\br{B_i=0}}, \label{eqn:p1i}
\end{eqnarray}
where the second equality follows from the definitions of $\P^1_i$ and $\br{H_i|\A_i}$.

Combining Eqs.~\eqref{eqn:hicb}\eqref{eqn:p1i}, we have 
\begin{eqnarray}
&&\onenorm{\P^1_i\br{H_i\times\id\br{C'=i}\times\id\br{B_i=0}}-H_i\times\id\br{C'=i}\times\id\br{B_i=0}}\nonumber\\
&=&\onenorm{\prob{H_i}{\neg\A_i}\cdot\P^1_i\br{\br{H_i|\neg\A_i}\times\id\br{C'=i}\times\id\br{B_i=0}}-\atop \prob{H_i}{\neg\br{\A_i}}\cdot\br{H_i|\neg\A_i}\times\id\br{C'=i}\times\id\br{B_i=0}}\nonumber\\
&=&\prob{H_i}{\neg\A_i}\cdot\onenorm{\P^1_i\br{\br{H_i|\neg\A_i}\times\id\br{C'=i}\times\id\br{B_i=0}}-\atop\br{H_i|\neg\A_i}\times\id\br{C'=i}\times\id\br{B_i=0}}\nonumber\\
&\leq&2\prob{H_i}{\neg\A_i}\leq 2\epsilon. \label{eqn:hcb5}
\end{eqnarray}

Combining Eqs.~\eqref{eqn:hc}\eqref{eqn:hcb}\eqref{eqn:hcb4}\eqref{eqn:hcb3}\eqref{eqn:hcb5}, we conclude the result.

\end{proof}

\section{Achievability result for two senders and one receiver}
\label{sec:mainsec}

We revisit our main task (Task C), restated here for convenience.

\begin{definition}
\label{def:task1}
\noindent {\bf A $\br{R_1, R_2, \epsilon}$ two-senders-one-receiver message compression with side information at the receiver:} Given joint random variables $XYZMN$ satisfying that $M-X-(Y,Z)$ and $N-Y-(X,Z)$, there are three parties Alice, Bob and Charlie holding $X$, $Y$ and $Z$, respectively. Alice sends a message of $R_1$ bits to Charlie and Bob sends a message of $R_2$ bits to Charlie. Charlie outputs a sample distributed according to the random variable $M'N'$ such that $\frac{1}{2}\onenorm{XYZMN-XYZM'N'}\leq\epsilon$. Shared randomness is  allowed between Alice and Charlie and between Bob and Charlie. 
\end{definition}

Before proceeding to our main result, we discuss the protocols in~\cite{BravermanRao11, AnshuJW17un} for the case of one sender and one receiver (Task B), and discuss the limitation of the known techniques for the two sender and one receiver case.

\subsection{Revisiting previous protocols for Task B}
\label{subsec:proofidea}

The idea in~\cite{BravermanRao11} is as follows, which we rephrase in the context of worst case communication. Alice and Charlie share sufficiently large number of copies of a random variable which is uniform over the set $\M\times [K]$, where $K$ is a sufficiently large integer. The number of required copies turns out to be approximately $|\M|$ and they index these copies with a unique integer in $[|\M|]$.  Given $x$, Alice has the knowledge of the probability distribution $p_{M|X=x}$ and given $z$, Charlie has the knowledge of the probability distribution $p_{M |Z=z}$. Define $$2^c\defeq \max_{m, x,z : p_{XZ}(x,z)>0}\frac{p_{M|X=x}(m)}{p_{M|Z=z}(m)},$$ as the largest possible ratio between these probabilities, for all $x,z$ in the support of $p_{XZ}$. It can be viewed as a one-shot analogue of the conditional mutual information, $\condmutinf{X}{M}{Z}$. Using the rejection sampling method of~\cite{vonNeumann51, Jain:2003, HJMR10}, Alice finds an index of the shared randomness where the sample $(m,e)$ satisfies $e\leq Kp_{M|X=x}(m)$. Charlie accepts an index if the associated sample $(m,e)$ satisfying $e\leq K\cdot 2^c\cdot p_{M|Z=z}(m)$. The definition of $c$ ensures that Charlie definitely accepts the index that Alice accepts (among various other indices). Then Alice uses hash functions to inform Charlie about the correct index (see~\cite{BravermanRao11} for more details).

In~\cite{AnshuJW17un}, it was shown that above protocol can be viewed in the context of hypothesis testing over an `extended distribution' over $\M\times [K]$. Using the convex-split~\cite{AnshuDJ14} and position-based decoding~\cite{AnshuJW17} methods, Charlie's operation is replaced by a hypothesis testing operation. Alice and Charlie divide their copies of shared randomness in approximately $2^c$ blocks. Alice uses convex-split method to find an index property correlated with $X$ (alternatively, she could have used rejection sampling). At this index, any sample $(m,e)$ satisfies $e\leq Kp_{M|X=x}(m)$. Denote the random variable associated to $e$ as $E$, which extends $XM$ to $XME$. Alice communicates the block number to Charlie using approximately $c$ bits of communication. Within this block lies the correct index that Alice wants Charlie to pick up. Charlie needs to distinguish this index with other indices, where any sample $(m,e)$ has the property that $e$ is uniform in $[K]$. Charlie uses hypothesis testing for this step, with the observation that the definition of $c$ ensures that $e \leq K\cdot 2^c\cdot p_{M|Z=z}(m)$ at the correct index, giving it a small support size. Thus, the test  that Charlie uses for hypothesis testing is simply to check the membership of $(m,e)$ in the support of the random variable $ME$. It accepts the distribution $ME$ at the correct index with probability $1$ and accepts the uniform distribution over $\M\times [K]$ with probability at most $\frac{2^{c}}{|\M|}$ (see~\cite{AnshuJW17un} for details).

One can try the same approach for Task C, by extending the random variables $M$ and $N$ into new random variables $ME$ and $NF$, where $e$ is uniform in $[K\cdot p_{M|X=x}(m)]$ and $f$ is uniform in $[K\cdot p_{N|Y=y}]$ for a given $x,y,m,n$. Alice and Bob can perform the convex-split steps and Charlie can attempt to perform the hypothesis testing step. The key challenge is to construct the correct test for hypothesis testing. Below, we analyze the test given in \cite{AnshuJW17un} by considering the full support of the distribution $MENF$. For the ease of argument, we assume that $Z$ is trivial.

The test should ensure that Charlie accepts the uniform distribution over $\M\times [K]\times \N\times [K]$ with probability at most $2^{-c'}$, where $c'$ is define as follows: 
$$\prob{x,y,m,n\sim XYMN}{\frac{p_{M|X=x}(m)p_{N|Y=y}(n)}{p_{MN}(m,n)} \leq 2^{c'}} \geq 1-\epsilon.$$
Observe that $c'$ is a one-shot analogue of the mutual information $\mutinf{XY}{MN}$ (see Equation \ref{eq:timesharetaskB}).
By the construction of the random variable $EF$, the definition of $c'$ can only ensure that $e\cdot f \leq K^2\cdot 2^{c'}p_{MN}(m,n)$. If $2^{c'}p_{MN}(m,n) \geq 1$ then all pairs $e,f$ satisfy the condition. Let the set of all such $(m,n)$ be `$\mathrm{Bad}$' and let the rest be `$\mathrm{Good}$'. For $(m,n) \in \mathrm{Good}$, the number of pairs $(e,f)$ that satisfy this condition can be calculated to be $$K^2\cdot 2^{c'}p_{MN}(m,n)\log\frac{1}{2^{c'}p_{MN}(m,n)}.$$ The support of $(EF\mid MN=m,n)$, for all $(m,n)\in \mathrm{Bad}$,  accepts the uniform distribution over $[K]\times [K]$ with probability $1$ and hence gives no advantage for hypothesis testing. Furthermore, the probability of the set `$\mathrm{Bad}$' can be large. To argue, consider
$$0\leq \sum_{m,n\in\mathrm{Good}}p_{MN}(m,n)\log\frac{1}{2^{c'}p_{MN}(m,n)} \leq \mathrm{H}(MN) - c'\cdot\prob{MN}{\mathrm{Good}},$$
which implies 
$$\prob{MN}{\mathrm{Good}} \leq \frac{\mathrm{H}(MN)}{c'}.$$
We have the following claim.
\begin{claim}
\label{clm:counterexample}
Fix an $\epsilon\in(0,1)$ and let $\alpha\in(0,1)$ be such that $\alpha(1-\alpha)^2 = 2\epsilon$. There exists a random variable $XYMN$ with $\M=\X=\Y=\N$ such that $\mathrm{H}(MN) \leq (1-\alpha^3)\log|\X| + 3$ and $c' \geq \log|\X| - 3$. Thus, 
$$\frac{\mathrm{H}(MN)}{c'} \leq 7\sqrt{\epsilon}.$$
\end{claim}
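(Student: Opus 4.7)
The plan is to exhibit an explicit joint distribution realizing all three bounds. Take $\X = \Y = \M = \N = [K]$ for a large integer $K$ to be chosen, set $M = X$ and $N = Y$ deterministically (so both Markov constraints $M - X - YZN$ and $MXZ - Y - N$ hold trivially, in the $Z$-trivial setting of the preceding discussion), and let $(X,Y)$ follow the mixture: with probability $\alpha$ put $(X,Y) = (1,1)$, and with probability $1-\alpha$ draw $(X,Y)$ uniformly from the diagonal $\{(k,k) : k \in [K]\}$. We take $\alpha$ to be the larger root of $\alpha(1-\alpha)^2 = 2\epsilon$, which for $\epsilon < 2/27$ lies in $(1/3, 1)$; this is the root giving a meaningful entropy bound, since the smaller root pushes $1-\alpha^3$ close to $1$.

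The entropy bound is a direct computation. Writing $a = \alpha + (1-\alpha)/K$ and $b = (1-\alpha)/K$, the atom at $(1,1)$ contributes at most $1$ to $\mathrm{H}(XY)$ (since $-p\log p \leq 1$ for $p \in [0,1]$), while the remaining $K-1$ diagonal atoms together contribute $(K-1) b \log(1/b) \leq (1-\alpha)\log K + 1$. Thus $\mathrm{H}(MN) = \mathrm{H}(XY) \leq (1-\alpha)\log K + 2 \leq (1-\alpha^3)\log K + 3$, using $1-\alpha \leq 1-\alpha^3$ on $[0,1]$.

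To lower-bound $c'$, observe that since $M = X$ and $N = Y$, for any sample $(x,y,m,n) \sim XYMN$ the ratio $\frac{p_{M|X=x}(m) p_{N|Y=y}(n)}{p_{MN}(m,n)}$ collapses to $1/p_{XY}(x,y)$. For a diagonal atom $(k,k)$ with $k \geq 2$ we get $1/p_{XY}(k,k) = K/(1-\alpha) \geq K \geq 2^{\log K - 3}$, so the log-ratio there is at least $\log K - 3$. This event has probability $(K-1)(1-\alpha)/K$. Since $\alpha$ is the larger root, $(1-\alpha)^2 = 2\epsilon/\alpha \geq 2\epsilon$, hence $1-\alpha \geq \sqrt{2\epsilon} > \epsilon$ for $\epsilon < 2$; thus for $K \geq 2$ the event has probability exceeding $\epsilon$, forcing $c' \geq \log K - 3$.

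Combining, $\mathrm{H}(MN)/c' \leq \bigl((1-\alpha^3)\log K + 3\bigr)/(\log K - 3)$, which tends to $1-\alpha^3$ as $K \to \infty$. A short calculation using $1-\alpha = \sqrt{2\epsilon/\alpha}$ and $\alpha \in [1/3,1]$ shows $1-\alpha^3 = (1-\alpha)(1+\alpha+\alpha^2) \leq 3\sqrt{2\epsilon} < 5\sqrt{\epsilon}$ (one checks that $\alpha^{-1/2}(1+\alpha+\alpha^2) \leq 3$ on $[1/3,1]$); choosing $\log K \geq 3 + 3/\sqrt{\epsilon}$ drives the finite-$K$ correction below $2\sqrt{\epsilon}$, yielding $\mathrm{H}(MN)/c' \leq 7\sqrt{\epsilon}$. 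The only delicate point is the lower bound on $c'$: the mixture is engineered so that the large atom $(1,1)$ absorbs most of the probability (keeping $\mathrm{H}(MN)$ small) while the residual mass $1-\alpha \geq \sqrt{2\epsilon}$ on the rest of the diagonal is still large enough to witness a log-ratio of order $\log K$ on an event of probability exceeding $\epsilon$.
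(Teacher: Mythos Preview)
Your construction is correct and in fact simpler than the paper's. Both you and the paper take $X=Y$ and exploit a distribution with one heavy atom and many light ones, but the paper makes $M$ and $N$ genuinely random given $X$ (each equal to $X$ with probability $\alpha$ and otherwise uniform on the complement), whereas you set $M=X$ and $N=Y$ deterministically. This collapses the ratio $p_{M|X}(m)p_{N|Y}(n)/p_{MN}(m,n)$ to $1/p_{XY}(x,y)$ and makes both the entropy bound and the lower bound on $c'$ essentially one-line computations; in particular you get the tighter estimate $\mathrm{H}(MN)\leq (1-\alpha)\log K+2$ before relaxing to $(1-\alpha^3)\log K+3$. The paper's noisier construction requires computing the full joint $p_{MN}$ (five cases) and locating a set of the form $\{(x,x,n):x\neq 1,\ n\notin\{1,x\}\}$ on which the ratio is large, which is more work for the same conclusion. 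Your explicit selection of the larger root $\alpha\in[1/3,1)$ is exactly what the paper does at the end (``using the second bound''), and is needed for the final inequality $1-\alpha^3\leq 3\sqrt{2\epsilon}$; note that the hypothesis $\alpha(1-\alpha)^2=2\epsilon$ forces $\epsilon\leq 2/27$, so the claim is only nonvacuous for such $\epsilon$ in any case. The one place to be slightly more careful is the inequality $(K-1)(1-\alpha)/K>\epsilon$: for $K\geq 2$ this reduces to $1-\alpha>2\epsilon$, which follows from $1-\alpha\geq\sqrt{2\epsilon}$ and $\epsilon<1/2$, so your argument goes through.
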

The proof of this claim is given in Appendix~\ref{append:counterex}. This implies that the probability  
$\prob{MN}{\mathrm{Bad}}$ is at least $1-7\sqrt{\epsilon}$, leading to large error. Hence, constructing hypothesis test using the full support of the random variable $MNEF$ does not give the desired result.

To solve the problem, we revisit the protocol in \cite{AnshuJW17un} for Task B and obtain a new hypothesis test. We recall the joint random variables $XMEZ$, where $XMZ$ satisfy $M-X-Z$ and $E$ is uniform in $[K\cdot p_{M|X=x}(m)]$ conditioned on $m,x$. The distribution of $E$, conditioned on $m,z$ and averaged over $x$, satisfies
\begin{eqnarray*}
p_{E|MZ=m,z}(e) &=& \sum_{x: e \leq K\cdot p_{M\mid X=x}(m)} \frac{1}{K\cdot p_{M\mid X=x}(m)}\cdot p_{X| MZ=mz}(x) \\ &\geq& \frac{1}{2^c\cdot K\cdot p_{M\mid Z=z}(m)}\sum_{x: e \leq K\cdot p_{M\mid X=x}(m)} p_{X| MZ=m,z}(x),
\end{eqnarray*}
where we have used the definition of $c$. Let $G$ be a random variable, jointly correlated with $MZ$, that takes the value $K\cdot p_{M|X=x}(m)$ with probability $p_{X|MZ=mz}(x)$ (by perturbing the conditional distribution in a negligible manner and choosing large enough $K$, we can assume that $K\cdot p_{M|X=x}(m)$ is unique for every $x,m$). Using this, above inequality simplifies to 
\begin{equation}
\label{eq:probElowbound}
p_{E|MZ=m,z}(e) \geq \frac{1}{2^c\cdot K\cdot p_{M\mid Z=z}(m)}\sum_{g: e \leq g} p_{G| MZ=m,z}(g) = \frac{\prob{g\sim G| MZ=m,z}{g\geq e}}{2^c\cdot K\cdot p_{M\mid Z=z}(m)}.
\end{equation}
We now show the following result.

\begin{lemma}\label{lem:1}
Let $EG$ be joint random variables taking values over $[K]\times[K]$ satisfying that $p_{E|G=g}\br{e}=0$ if $e>g$. Then for any $\delta\in(0,1)$, it holds that 
\[\prob{e\sim E}{\prob{g\sim G}{g\geq e} \leq\delta}\leq\delta.\]
\end{lemma}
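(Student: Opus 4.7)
The key observation is that the support condition $p_{E\mid G=g}(e)=0$ whenever $e>g$ says exactly that $E\leq G$ with probability one under the joint distribution of $EG$. I would exploit this together with the monotonicity of the tail of $G$.

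Concretely, define the non-increasing function $f:[K]\to[0,1]$ by $f(e)\defeq\prob{g\sim G}{g\geq e}$. Since $f$ is non-increasing in $e$, the ``bad'' set
\[
\mathcal{B}\defeq\set{e\in[K]:f(e)\leq \delta}
\]
is an upper set, i.e.\ there exists a threshold $e^\star\in[K+1]$ (with the convention that $e^\star=K+1$ if the set is empty) such that $\mathcal{B}=\set{e:e\geq e^\star}$. If $\mathcal{B}$ is empty there is nothing to prove, so assume $e^\star\leq K$; by the definition of $e^\star$ we have $f(e^\star)=\prob{G}{G\geq e^\star}\leq \delta$.

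Now I use the support hypothesis: since $E\leq G$ almost surely, the event $\set{E\geq e^\star}$ is contained in the event $\set{G\geq e^\star}$. Hence
\[
\prob{e\sim E}{f(e)\leq \delta}=\prob{e\sim E}{e\geq e^\star}\leq \prob{g\sim G}{g\geq e^\star}=f(e^\star)\leq \delta,
\]
which is exactly the desired inequality.

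The argument is short and the only subtle point is the reduction of an arbitrary ``bad'' level set of $f$ to a single tail event via monotonicity; once this is done, the support condition does all the work. I do not expect any real obstacle here, beyond writing the threshold $e^\star$ carefully (in particular handling the boundary case where no $e$ satisfies $f(e)\leq\delta$).
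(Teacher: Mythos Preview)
Your proof is correct and follows essentially the same approach as the paper: both define the threshold (the paper calls it $g^*$, you call it $e^\star$) as the smallest value whose $G$-tail is at most $\delta$, and then use the support condition $E\leq G$ to bound $\Pr[E\geq e^\star]$ by $\Pr[G\geq e^\star]\leq\delta$. Your formulation via the event inclusion $\{E\geq e^\star\}\subseteq\{G\geq e^\star\}$ is slightly cleaner than the paper's explicit double-sum manipulation, but the argument is the same.
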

\begin{proof}
Let $g^*$ be the smallest integer such that $\prob{g\sim G}{g\geq g^*}\leq\delta$. Then
\begin{eqnarray*}
\prob{e\sim E}{\prob{g\sim G}{g\geq e}}&=&\sum_{e\geq g^*}p_E\br{g}\\
&=&\sum_{e\geq g^*}\sum_{g\geq e}p_{EG}\br{e,g}\\
&=&\sum_{g\geq g^*}\sum_{e: g^*\leq e\leq g}p_{EG}\br{e,g}\\
&\leq&\sum_{g\geq g^*}p_G\br{g}\leq\delta.
\end{eqnarray*}
\end{proof}
\noindent Let $\A_{m,z}$ be the set of all $e$ for which
$$p_{E|MZ=m,z}(e) \geq \frac{\delta}{2^c\cdot K\cdot p_{M\mid Z=z}(m)}.$$
From Eq.~\eqref{eq:probElowbound} and Lemma~\ref{lem:1}, we find that $$\prob{E|MZ=m,z}{\A_{m,z}} \geq 1-\delta.$$
Moreover, $$|\A_{m,z}| \leq \frac{2^c\cdot K\cdot p_{M\mid Z=z}(m)}{\delta}.$$
Hence, the `test' $\A_{m,z}$ can be used to distinguish the random variable $(E|MZ=m,z)$ from the uniformly distributed random variable over $[K]$. The probability of accepting the uniform distribution over $[K]$ is at most $2^c\cdot\frac{p_{M\mid Z=z}(m)}{\delta}$. Thus the probability of accepting the uniform distribution over $\M \times [K]$ is at most
$$\frac{2^c}{|\M|}\cdot\sum_m\frac{p_{M\mid Z=z}(m)}{\delta} = \frac{2^c}{|\M|\delta}$$
for any $z$. This reproduces the property of the hypothesis test obtained in~\cite{AnshuJW17un} up to a small multiplicative factor of $\frac{1}{\delta}$. Moreover, this construction generalizes to the multi-variate setting, as shown in Lemma~\ref{lem:setA}.

\subsection{Achievable rate region for Task C}

Following is our main theorem.

\begin{theorem}\label{thm:main}
Given $\epsilon, \delta\in(0,1)$ such that $\frac{1}{\sqrt{\delta}}$ is an integer, let $R_1,R_2$ satisfy
\begin{equation}\label{eqn:main}
\prob{x,y,m,n,z\sim XYZMN}{\frac{p_{M|X=x}\br{m}}{p_{M|N=n, Z=z}\br{m}}\leq\delta\cdot 2^{R_1}~\mbox{and}~\frac{p_{N|Y=y}\br{n}}{p_{N|M=m, Z=z}\br{n}}\leq\delta\cdot 2^{R_2}\atop\mbox{and}~\frac{p_{M|X=x}\br{m}p_{N|Y=y}\br{n}}{p_{MN\mid Z=z}\br{mn}}\leq\delta^4\cdot2^{R_1+R_2-\log\log\frac{\max\set{\abs{\M},\abs{\N}}}{\delta}}}\geq1-\epsilon.
\end{equation}
There exists a $\br{R_1+3\log\frac{1}{\delta}, R_2+3\log\frac{1}{\delta}, \epsilon + 8\delta}$ two-senders-one-receiver message compression with side information at the receiver. 
\end{theorem}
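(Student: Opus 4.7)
The plan is to generalize the one-sender Task~B protocol of~\cite{BravermanRao11, AnshuJW17un} (reviewed in Subsection~\ref{subsec:proofidea}) to the two-sender setting, using Fact~\ref{genconvexcomb} for the encoders and a bipartite refinement of the construction behind Lemma~\ref{lem:1} for Charlie's hypothesis test. After a harmless perturbation of $M$ and $N$, fix a positive integer $K$ large enough that $Kp_{M|X=x}(m)$ and $Kp_{N|Y=y}(n)$ are always positive integers, and introduce auxiliary random variables $E,F\in[K]$ such that $E\mid(X,M)=(x,m)$ is uniform on $\{1,\ldots,Kp_{M|X=x}(m)\}$, and symmetrically for $F$. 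Let $U$ (resp.\ $V$) be the uniform distribution on $\M\times[K]$ (resp.\ $\N\times[K]$).

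The protocol proceeds as follows. Alice and Charlie share $L_1=2^{R_1+3\log(1/\delta)}$ i.i.d.\ copies of $U$ indexed by $j$, and Bob and Charlie share $L_2=2^{R_2+3\log(1/\delta)}$ i.i.d.\ copies of $V$ indexed by $k$. On inputs $x,y$, Alice and Bob sample indices $J_A\in[L_1]$ and $J_B\in[L_2]$ from the bipartite convex-split distribution of Fact~\ref{genconvexcomb}, so that $(M_{J_A},E_{J_A})\sim(ME\mid X=x)$ and $(N_{J_B},F_{J_B})\sim(NF\mid Y=y)$ while all other copies are close to product-uniform; they then transmit $J_A$ in $\log L_1$ bits and $J_B$ in $\log L_2$ bits. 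Given $z$ and both shared-randomness arrays, Charlie runs a bipartite position-based decoder generalizing Lemma~\ref{lem:error} over the $L_1 L_2$ candidate pairs, with a hypothesis test $\A_z(j,k)$ tailored to his side information.

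The technical core is the construction of $\A_z(j,k)$. Mirroring the derivation of $\A_{m,z}$ in Subsection~\ref{subsec:proofidea}, I would obtain pointwise lower bounds on the averaged densities $p_{E\mid M,N,Z}$, $p_{F\mid M,N,Z}$, and $p_{EF\mid M,N,Z}$, by applying a bipartite analogue of Lemma~\ref{lem:1} to the budget variables $G=Kp_{M\mid X}(M)$ and $H=Kp_{N\mid Y}(N)$. The test at candidate pair $(j,k)$ would then check that $e_j$, $f_k$, and $(e_j,f_k)$ each exceed the corresponding lower bound up to a factor of $\delta$. Under the actual distribution the correct pair passes with probability $1-O(\delta)$; the false-acceptance probability against the product uniform factors through the three conditional probabilities $p_{M\mid N,Z}$, $p_{N\mid M,Z}$, and $p_{MN\mid Z}$, which combine with the three inequalities of Eq.~\eqref{eqn:main} to give false-acceptance probabilities of order $\delta/L_1$, $\delta/L_2$, and $\delta/(L_1 L_2)$ for pairs differing from the correct one in only $j$, only $k$, or both coordinates, respectively.

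The main obstacle is engineering the three components of the test so that they activate the three simultaneous rate conditions in Eq.~\eqref{eqn:main}: as explained in Subsection~\ref{subsec:proofidea} and sharpened by Claim~\ref{clm:counterexample}, a naive test built from the full joint support of $(M,E,N,F)$ fails already in the one-shot setting with trivial $Z$. Once the three-piece test is in place, the error analysis is assembly: Fact~\ref{genconvexcomb} contributes $\epsilon+\delta$ to the deviation of the sampled indices from the ideal distribution, the correct-acceptance slack of the three tests contributes $O(\delta)$, and the bipartite analogue of Lemma~\ref{lem:error} applied to the three false-acceptance bounds contributes $O(\delta)$; summing yields the claimed $\epsilon+8\delta$, while the additive $3\log(1/\delta)$ in each rate absorbs the slack parameters of Fact~\ref{genconvexcomb} and the three hypothesis tests.
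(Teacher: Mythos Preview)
Your high-level plan---convex-split encoding, position-based decoding, and a three-part hypothesis test keyed to the three inequalities in Eq.~\eqref{eqn:main}---is the right one, but the protocol you describe has a structural gap that makes the encoding step fail.

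You set the number of shared copies equal to $L_1=2^{R_1+3\log(1/\delta)}$ and have Alice transmit the full index $J_A$ in $\log L_1$ bits. But for convex-split on $X(ME)$ against the uniform reference $U$ on $\M\times[K]$ to succeed with error $O(\delta)$, one needs roughly $|\M|/\delta^2$ copies: the likelihood ratio $p_{ME\mid X=x}(m,e)/p_U(m,e)$ equals $|\M|$ identically on its support, so $\misdiv{0}{X(ME)}{X\times U}=\log|\M|$. Nothing in the theorem relates $R_1$ to $\log|\M|$, so with only $L_1$ copies the convex-split step fails in general. And if Alice \emph{does} send the full index, the hypothesis-testing search you describe next is idle---Charlie already knows $(J_A,J_B)$, so there are no ``$L_1L_2$ candidate pairs'' to search over.

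The paper decouples the two quantities. Alice and Charlie share $2^{R_3+r_1}$ copies with $r_1=\log(|\M|/2^{R_1})$ and $R_3=\lceil R_1+2\log(3/\delta)\rceil$, so that $R_3+r_1\approx\log|\M|+2\log(3/\delta)$ is large enough for (univariate, Fact~\ref{fact:convsplit}) convex-split, and then Alice transmits only the \emph{block number} $\lceil j_1/2^{r_1}\rceil$, costing $R_3$ bits. Charlie's hypothesis test is precisely what bridges the gap: it searches over the $2^{r_1}\times 2^{r_2}$ pairs inside the transmitted block, and the false-acceptance rates $\delta 2^{R_1}/|\M|$, $\delta 2^{R_2}/|\N|$, $\delta 2^{R_1+R_2}/(|\M||\N|)$ from Lemma~\ref{lem:setA} are calibrated exactly so that this search succeeds. (The paper also uses two independent univariate convex-splits rather than Fact~\ref{genconvexcomb}, though this is a minor point.)

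On the test itself: you correctly flag the bipartite analogue of Lemma~\ref{lem:1} as the obstacle, but ``pointwise lower bounds on $p_{EF\mid M,N,Z}$'' does not say how to get past Claim~\ref{clm:counterexample}. The paper's construction (Lemma~\ref{lem:setA}) is a recursive geometric partition of $[K]\times[K]$ into roughly $\delta^{-2}\log\bigl(\max\{|\M|,|\N|\}/\delta\bigr)$ squares, on each of which a one-dimensional tail argument in the style of Lemma~\ref{lem:1} applies; the $\log\log$ term in Eq.~\eqref{eqn:main} is the cost of the union bound over the levels of this partition. This is the technical heart of the proof and is not visible in your sketch.
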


\begin{proof}
We assume that $p_{M|X=x}\br{m}$ and $p_{N|Y=y}\br{n}$ are distinct rationals for all $\br{x,y,m,n}$, which is possible by perturbing the distributions and introducing an arbitrary small error.
Let $K$ be a sufficient large integer such that $Kp_{M|X=x}\br{m}$ and $Kp_{N|Y=y}\br{n}$ are integers for all $x,y,m,n$. We define random variables $EF$ over $[K]\times[K]$ such that $E$ is generated conditioned on $MX$ and $F$ is generated conditioned on $NY$ as follows: 
\[p_{E|MX=mx}\br{e}\defeq\frac{\id\br{e\leq Kp_{M|X=x}\br{m}}}{Kp_{M|X=x}\br{m}}~\mbox{and}~p_{F|NY=ny}\br{f}\defeq\frac{\id\br{f\leq Kp_{N|Y=y}\br{n}}}{Kp_{N|Y=y}\br{n}}.\]
We further define random variables 
\begin{equation}\label{eqn:ST}
S\defeq \tilde{S}\times L ~\mbox{and}~T\defeq \tilde{T}\times L
\end{equation}
where $\tilde{S}, \tilde{T}$ and $L$ are uniformly distributed over $\M, \N$ and $[K]$, respectively. Set 
\begin{equation}\label{eqn:1}
r_1\defeq\log\frac{\abs{\M}}{2^{R_1}}~\mbox{and}~ r_2\defeq\log\frac{\abs{\N}}{2^{R_2}}, R_3\defeq \lceil R_1+2\log 3/\delta\rceil, R_4\defeq\lceil R_2+2\log 3/\delta\rceil.
\end{equation}

Let $J_1$  be uniformly distributed over $[2^{R_3+r_1}]$ and the joint random variable $J_1XS'_1S'_2\ldots S'_{2^{R_3+r_1}}$ be defined to be:
\begin{eqnarray*}
&&\prob{}{X,S'_1\ldots S'_{2^{R_3+r_1}}=x,(m_1,e_1)\ldots (m_{2^{R_3+r_1}}, e_{2^{R_3+r_1}})|J_1=j}\defeq\\ && p_{XME}\br{x,m_j,e_j}p_S\br{m_1,e_1}\cdots p_S\br{m_{j-1},e_{j-1}}p_S\br{m_{j+1}, e_{j+1}}\cdots p_S\br{m_{2^{R_3+r_1}}, e_{2^{R_3+r_1}}}.
\end{eqnarray*}
Similarly, let $J_2$  be uniformly distributed over $[2^{R_4+r_2}]$ and the joint random variable $J_2YT'_1T'_2\ldots T'_{2^{R_4+r_2}}$ be defined to be:
\begin{eqnarray*}
&&\prob{}{Y,T'_1\ldots T'_{2^{R_4+r_2}}=y,(n_1,f_1)\ldots (n_{2^{R_4+r_2}}, f_{2^{R_4+r_2}})|J_2=j}\defeq\\ && p_{YNF}\br{y,n_j,f_j}p_T\br{n_1,f_1}\cdots p_T\br{n_{j-1},f_{j-1}}p_T\br{n_{j+1}, f_{j+1}}\cdots p_T\br{n_{2^{R_3+r_1}}, f_{2^{R_4+r_2}}}.
\end{eqnarray*}
We further assume that $XS'_1S'_2\ldots S'_{2^{R_3+r_1}}J_1$ and $YT'_1T'_2\ldots T'_{2^{R_4+r_2}}J_2$ are independent. From the choice of $R_3+r_1$ and $R_4+r_2$ and Fact~\ref{fact:convsplit}, we conclude that
\begin{eqnarray*}
&&\frac{1}{2}\|XS'_1\ldots S'_{2^{R_3+r_1}}- X\times S\times \ldots \times S\|_1 \leq \delta \\
&& \frac{1}{2}\|YT'_1\ldots T'_{2^{R_4+r_2}}- Y\times T\times \ldots \times T\|_1 \leq \delta.
\end{eqnarray*}
Since the random variables $S'_1\ldots S'_{2^{R_3+r_1}}$ are defined conditioned on $x$, and similarly the random variables $T'_1\ldots T'_{2^{R_4+r_2}}$ are defined conditioned on $y$, we conclude that
\begin{eqnarray}
&&\frac{1}{2}\onenorm{XYZS'_1\ldots S'_{2^{R_3+r_1}}T'_1\ldots T'_{2^{R_3+r_1}} - XYZ\times S\times \ldots S\times T\times \ldots \times T}\nonumber\\
&=&\frac{1}{2}\sum_{xyz}p_{XYZ}\br{x,y,z}\onenorm{\br{S'_1\ldots S'_{2^{R_3+r_1}}|X=x}\times\br{T'_1\ldots T'_{2^{R_3+r_1}}|Y=y}-S\times \ldots S\times T\times \ldots \times T}\nonumber\\
&\leq& 2\delta.\label{convclose}
\end{eqnarray}
We are now ready to define the protocol.

\begin{mdframed}
\textbf{Input:} Random variables $XYMN$ distributed over $\X\times\Y\times\M\times\N$, where $\X,\Y,\M$ and $\N$ are finite sets; reals $R_1,R_2,\epsilon,\delta$ satisfying Theorem~\ref{thm:main}; $R_3,R_4$ as defined in Eq.~\eqref{eqn:1}. Alice, Bob and Charlie are given $x, y$ and $z$, respectively, where $\br{x,y,z}\sim XYZ$.
\bigskip


\noindent \textbf{Shared resources:} Alice and Charlie share $S_1\ldots S_{2^{R_3+r_1}}$, which are $2^{R_3+r_1}$ copies of i.i.d. samples of $S$. Bob and Charlie share $T_1\ldots T_{2^{R_4+r_2}}$, which are $2^{R_4+r_2}$ copies of i.i.d. samples of $T$. Here $S$ and $T$ are defined in Eq.~\eqref{eqn:ST}.

\bigskip

\noindent \textbf{The protocol:}
\begin{enumerate}

\item Alice observes a sample $(x, (m_1, e_1), \ldots (m_{2^{R_3+r_1}}, e_{2^{R_3+r_1}}))$ from $XS_1\ldots S_{2^{R_3+r_1}}$ and samples $j_1$ from the conditional distribution $$\br{J_1|X,S'_1,\ldots, S'_{2^{R_3+r_1}}=x,(m_1, e_1),\ldots,(m_{2^{R_3+r_1}}, e_{2^{R_3+r_1}})}.$$

\item Bob  observes a sample $(y, (n_1, f_1), \ldots (n_{2^{R_4+r_2}}, f_{2^{R_4+r_2}}))$ from $YT_1\ldots T_{2^{R_4+r_2}}$ and samples $j_2$ from the conditional distribution $$\br{J_2|Y,T'_1,\ldots, T'_{2^{R_4+r_2}}=y,(n_1, f_1),\ldots,(n_{2^{R_4+r_2}}, f_{2^{R_4+r_2}})}.$$

\item Alice sends $j'_1\defeq\lceil\frac{j_1}{2^{r_1}}\rceil$ to Charlie.

\item Bob sends $j_2'\defeq\lceil\frac{j_2}{2^{r_2}}\rceil$ to Charlie.

\item Charlie selects the first pair 

$$\br{j_1,j_2}\in\set{j_1'\cdot 2^{r_1}+1,\ldots, j_1'\cdot 2^{r_1}+2^{r_1}}\times\set{j_2'\cdot 2^{r_2}+1,\ldots, j_2'\cdot 2^{r_2}+2^{r_2}}$$ 

in the lexicographical order such that $\br{m_{j_1}, e_{j_1},n_{j_2}, f_{j_2},z}\in\A$, where $\A$ is obtained from Lemma~\ref{lem:setA}. Charlie outputs $\br{m_{j_1},n_{j_2}}$. If no such index exists , Charlie outputs an arbitrary pair of elements in $\M\times\N$. Let the output of Charlie be the random variable $M'N'$.

\end{enumerate}

\end{mdframed}

\begin{remark}
		By virtue of Eq.~\eqref{convclose}, the global joint distribution of $XYZS_1\ldots S_{2^{R_3+r_1}}T_1\ldots T_{2^{R_4+r_2}}$ is close to that of $XYZS'_1\ldots S'_{2^{R_3+r_1}}T'_1\ldots T'_{2^{R_3+r_1}}J_1J_2$. 
	
\end{remark}

From Eq.~\eqref{eqn:1} the communication cost between Alice and Charlie is $R_1+3\log\frac{1}{\delta}$ and the communication cost between Bob and Charlie is $R_2+3\log\frac{1}{\delta}$. Moreover, applying Lemma~\ref{lem:setA} (to get a guarantee on the error on each step of Charlie's decoding), Lemma~\ref{lem:error} (to bound the error of Charlie's decoding) and Eq.~\eqref{convclose} (to bound the error due to the encoding of Alice and Bob), we obtain

\[\frac{1}{2}\onenorm{XYZMN-XYZM'N'}\leq \br{\epsilon+5\delta+2\delta+2^{r_1}\frac{\delta2^{R_1}}{\abs{\M}}+2^{r_2}\frac{\delta2^{R_2}}{\abs{\N}}+2^{r_1+r_2}\frac{\delta 2^{R_1+R_2}}{\abs{\M}\abs{\N}}}\leq \epsilon+10\delta.\]

\end{proof}

Lemma~\ref{lem:setA}, to be shown below, gives error bounds on the hypothesis testing part employed by Charlie in the proof of Theorem~\ref{thm:main}. It can be viewed as the multi-variate generalization of Lemma~\ref{lem:1}. 

\begin{lemma}\label{lem:setA}
Let $\delta\in(0,1)$ satisfy the condition that $\frac{1}{\sqrt{\delta}}$ is an integer greater than $1$. Then there exists a set $\A\subseteq\M\times\N\times\E\times\F\times \Z$ such that 
\[\prob{MNEFZ}{\A}\geq1-\epsilon-5\delta;\]
\[\prob{MEZ\times T}{\A}\leq\frac{\delta2^{R_2}}{\abs{\N}}, \prob{S\times NFZ}{\A}\leq\frac{\delta2^{R_1}}{\abs{\M}},\]
and
\[\prob{S\times T\times Z}{\A}\leq\frac{\delta 2^{R_1+R_2}}{\abs{\M}\abs{\N}}.\]
\end{lemma}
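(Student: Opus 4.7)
The plan is to define $\A = \A_E \cap \A_F \cap \A_{EF}$, where each of the three pieces is crafted from one of the three rate conditions in Theorem~\ref{thm:main} and controls the corresponding upper bound. The construction generalizes the one-dimensional hypothesis-testing template of Subsection~\ref{subsec:proofidea} and Lemma~\ref{lem:1}, but in the present bipartite setting the cleanest route for obtaining the required upper bounds is a support-based description in terms of witnesses $(x,y)$, rather than a direct level-set of Lemma~\ref{lem:1}.

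For $\A_F$ I would set
\[\A_F = \{(m,n,e,f,z) : \exists\, y \text{ with rate condition 2 holding at } (y,m,n,z) \text{ and } f \leq Kp_{N|Y=y}(n)\}.\]
Rate condition~2 then forces $f \leq \delta 2^{R_2} K p_{N|MZ=mz}(n)$ for every $(n,f) \in \A_F$, so the $(n,f)$-slice at each $(m,z)$ has size at most $\sum_n \delta 2^{R_2} K p_{N|MZ=mz}(n) = \delta 2^{R_2} K$; dividing by $|\N| K$ and averaging against $p_{MEZ}$ delivers the required bound $\delta 2^{R_2}/|\N|$. The mass bound $\Pr_{MNEFZ}[\A_F] \geq 1-\epsilon$ is immediate since the $y$ actually sampled alongside $(m,n,e,f,z)$ is itself a witness whenever it satisfies rate condition~2. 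The set $\A_E$ and its bound are built symmetrically from rate condition~1 on the $E$-coordinate.

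For the joint test I set
\[\A_{EF} = \{(m,n,e,f,z) : \exists\, (x,y) \text{ with rate condition 3, } e \leq Kp_{M|X=x}(m), \; f \leq Kp_{N|Y=y}(n)\}.\]
Using the identity $2^{-\log\log(\max\{|\M|,|\N|\}/\delta)} = 1/\log(\max\{|\M|,|\N|\}/\delta)$, rate condition~3 forces every $(m,n,e,f) \in \A_{EF}$ to satisfy $ef \leq T(m,n) := K^2 \delta^4 2^{R_1+R_2} p_{MN|Z=z}(mn)/\log(\max\{|\M|,|\N|\}/\delta)$. The slice at $(m,n,z)$ thus lies in the hyperbolic region $\{(e,f) \in [K]^2 : ef \leq T(m,n)\}$, which by the standard divisor-counting estimate has cardinality at most $T(m,n)(1+\log(K^2/T(m,n)))$. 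Summing against $p_{MN|Z=z}$ and using that $\log(K^2/T_0) + \mathrm{H}(MN|Z=z) = O(\log(\max\{|\M|,|\N|\}/\delta))$, the logarithmic loss exactly cancels the $1/\log(\max\{|\M|,|\N|\}/\delta)$ inside $T(m,n)$, leaving the required net bound $O(\delta)\cdot 2^{R_1+R_2}/(|\M||\N|)$. Controlling this two-dimensional hyperbolic counting, and in particular verifying that the logarithmic loss does not exceed what is budgeted by rate condition~3, is the main technical obstacle; the $\log\log$ inside rate condition~3 is tuned precisely to pay for it.

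Putting $\A = \A_E \cap \A_F \cap \A_{EF}$ inherits the three upper bounds by monotonicity, and the lower bound $\Pr_{MNEFZ}[\A] \geq 1-\epsilon-5\delta$ follows from Theorem~\ref{thm:main}'s hypothesis that the three rate conditions hold jointly with probability $\geq 1-\epsilon$, together with $5\delta$ of accumulated slack absorbing the distribution perturbation used to make $Kp_{M|X=x}(m)$ and $Kp_{N|Y=y}(n)$ integers, the rounding inherent in the hyperbolic counting, and the passage from the joint good set to the individual rate-condition events.
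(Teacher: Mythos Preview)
Your approach is correct and takes a genuinely different route from the paper. The paper builds $\A^{(1)}_{m,n,z}$ via a recursive self-similar partition of $[K]\times[K]$ into roughly $\mathrm{dev}/\delta^2$ small squares (where $\mathrm{dev}=\log(\max\{|\M|,|\N|\}/\delta)$), restricts $XY$ to a $\mathrm{Good}$ set so that $(W,V)$ avoids the innermost square, and then runs a Lemma~\ref{lem:1}-style tail argument inside each square; the union over squares gives $|\A^{(1)}_{m,n,z}|\le \delta K^2 2^{R_1+R_2}p_{MN|Z=z}(m,n)$ directly, with the $1/\mathrm{dev}$ slack paying for the number of squares. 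Your argument is more elementary: the witness-based $\A_{EF}$ sits inside $\{ef\le T(m,n)\}$, whose lattice-point count is at most $T(m,n)\bigl(1+\log(K^2/T(m,n))\bigr)$, and summing over $(m,n)$ converts the logarithm into $\log(1/c)+\mathrm{H}(MN\mid Z{=}z)\le O(\mathrm{dev})$ (dropping the $-(R_1{+}R_2)$ piece of $\log(1/c)$ only helps). Here the $1/\mathrm{dev}$ slack pays for the Dirichlet divisor logarithm rather than for a square count. Two small points worth tightening: first, you should handle explicitly the ``heavy'' pairs with $T(m,n)\ge K^2$, where your divisor estimate degenerates --- their number is at most $c=\delta^4 2^{R_1+R_2}/\mathrm{dev}$ and each contributes at most $1$, which is harmless; second, your witness-based $\A_E,\A_F$ can be replaced by the simpler threshold sets $\{e\le \delta K 2^{R_1}p_{M|NZ=nz}(m)\}$ and $\{f\le \delta K 2^{R_2}p_{N|MZ=mz}(n)\}$ (the paper's $\A^{(2)},\A^{(3)}$), which carry the same bounds without the existential quantifier. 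The paper's partition delivers the exact constant $\delta$ in the statement; your route yields $O(\delta)$, which is enough for the downstream use in Theorem~\ref{thm:main}.
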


\noindent {\it Proof outline:} Lemma~\ref{lem:setA} is a two dimensional extension of Lemma~\ref{lem:1}. For any $x, y$, set $w_m\br{x}\defeq K\cdot p_{M|X=x}\br{m}$ and $v_n\br{y}\defeq K \cdot p_{N|Y=y}\br{n}$, and let $W,V$ be the corresponding random variables jointly distributed with $XYZMNEF$.   Fix the values $m,n,z$ and consider the joint distribution of $WVEF$.  We wish to show that $EF$ is `mostly' supported on a set of small size, using an argument similar to Lemma~\ref{lem:1}. This amounts to showing that the set of $(e,f)$, for which $w>e$ and $v>f$ with small probability according to $WV$, has small probability according to $EF$. Unfortunately, it is not clear how to achieve it for arbitrary $WVEF$ just with the condition that $E<W$ (always) and $F<V$ (always). But we know that $E \mid (W=w)$ is uniform in $[w]$ and $F\mid (V=v)$ is uniform in $[v]$, and we can use this property in our argument. 

As an illuminating example, we consider the case where $WV$ is supported in $\{a^0, \ldots a^1\}\times \{b^0, \ldots b^1\}$ (Figure~\ref{fig:WVsupport}), and it holds that $\frac{(a^1-a^0)(b^1-b^0)}{a^1b^1} \leq \delta$. The support of $EF$ lies in $[a^1]\times [b^1]$ and we divide this into four regions, as shown in Figure \ref{fig:WVsupport}. Region $2$ has  
the property that $e<w$ always, and region $3$ has the property that $f<v$ always. Thus we can apply Lemma \ref{lem:1} to random variables $FV$ for region $2$ and $EW$ for region $3$. No argument is needed in region $4$, as all $(e,f)$ satisfy the property that $e<w$ and $v<f$. The issue lies with region $1$, where we do not know how to show the desired result. Fortunately, the probability that $EF$ lies in region $1$ is at most $\delta$, by the choice of $a^0, a^1, b^0$ and $b^1$ and using the fact that    $E \mid (W=w)$ is uniform in $[w]$ and $F\mid (V=v)$ is uniform in $[v]$.

Hence, we divide the square $[K]\times [K]$ into several smaller squares such that for any square $\{a_1, \ldots b_1\}\times \{a_2, \ldots b_2\}$, it holds that $\frac{(b_1-a_1)(b_2-a_2)}{b_1b_2} \leq \delta$. This leads to the recursive decomposition given in Figure~\ref{fig:rectdecompose}. To keep the number of squares less than $\approx\log\max\{|\M|, |\N|\}$, we perturb the distribution of $MNXYZ$ such that $p_{M| X=x}(m) \geq \frac{\delta}{|\M|}$ and $p_{N| Y=y}(n) \geq \frac{\delta}{|\N|}$. This is possible with an error of at most $\delta$. Finally, we construct a set that contains a large support of $EF$ for every square in the decomposition given in Figure~\ref{fig:rectdecompose}. By taking a union over all these sets, we obtain the desired result, with a loss of $\log\log\max\{|\M|, |\N|\}$ that is reflected in the statement of Theorem~\ref{thm:main}. 

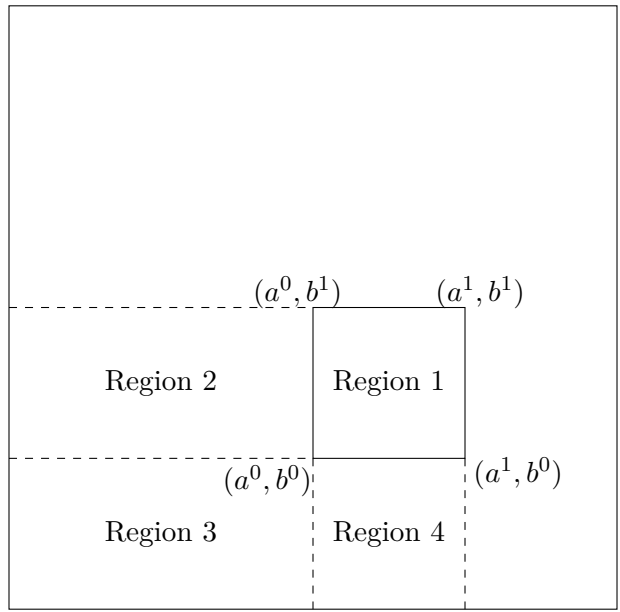
\begin{figure}[ht]
\centering
\begin{tikzpicture}[xscale=1,yscale=1]

\draw (0,0) rectangle (8,8);
\draw (4,2) rectangle (6,4);
\node at (3.4, 1.7) {$(a^0, b^0)$};
\node at (6.2, 4.2) {$(a^1, b^1)$};
\node at (3.8, 4.2) {$(a^0, b^1)$};
\node at (6.7, 1.8) {$(a^1, b^0)$};
\draw [dashed] (0,2) -- (4,2);
\draw [dashed] (0,4) -- (4,4);
\draw [dashed] (4,2) -- (4,0);
\draw [dashed] (6,2) -- (6,0);

\node at (5,3) {Region 1};
\node at (2,3) {Region 2};
\node at (2,1) {Region 3};
\node at (5,1) {Region 4};

\end{tikzpicture}
\caption{\small Suppose the joint distribution of $WV$ lies in Region $1$  shown above, with $\frac{(a^1-a^0)(b^1-b^0)}{a^1b^1} \leq \delta$. Then the probability that $EF$ lies within Region $1$ is at most $\delta$, as given any $w,v$, the random variable $EF\mid (WV=w,v)$ is uniformly distributed in $[w]\times [v]$. Any $(e,f)$ in Region $3$ automatically satisfies that $e<w$ and $f<v$. For Region $2$, we use an argument similar to Lemma~\ref{lem:1} for the random variable $FV$ and for Region $4$ we do the same for the random variable $EW$.}
 \label{fig:WVsupport}
\end{figure}

\begin{proof}[Proof of Lemma~\ref{lem:setA}]
We set $dev\defeq\log\frac{\max\set{\abs{\M},\abs{\N}}}{\delta}$ for convenience. For any $\br{m,n}\in\M\times\N$, we define $w_m\br{x}\defeq Kp_{M|X=x}\br{m}$ and $v_n\br{y}\defeq Kp_{N|Y=y}\br{n}$. From our assumption on $K$, $w_m\br{\cdot}$ and $v_n\br{\cdot}$ are both integer-valued functions for all $m, n$.  Further define
\[\mathrm{Good}_{m,n,z}^1\defeq\set{\br{x,y}:\frac{p_{M|X=x}\br{m}}{p_{M|N=n, Z=z}\br{m}}\leq\delta\cdot 2^{R_1},\frac{p_{N|Y=y}\br{n}}{p_{N|M=m, Z=z}\br{n}}\leq\delta\cdot 2^{R_2}\atop~\quad\mbox{and}~\frac{p_{M|X=x}\br{m}p_{N|Y=y}\br{n}}{p_{MN\mid Z=z}\br{m,n}}\leq\frac{\delta}{dev}2^{R_1+R_2}},\]
\[\mathrm{Good}^2_{m,n, z}\defeq\set{\br{x,y}:p_{M|X=x}\br{m}\geq\frac{\delta}{\abs{\M}}~\mbox{and}~p_{N|Y=y}\br{n}\geq\frac{\delta}{\abs{\N}}}\]
and
\[\mathrm{Good}_{m,n,z}\defeq\mathrm{Good}^1_{m,n,z}\cap\mathrm{Good}^2_{m,n,z}.\]
We define the new random variables $X'Y'E'F'MNZ$ obtained by restricting $XY$ to $\mathrm{Good}_{MNZ}$. Namely,
\begin{eqnarray*}     
&&p_{X'Y'E'F'\mid MNZ=mnz}\br{x,y,e,f}\defeq\\
&&\begin{cases}
\frac{p_{XY|MNZ=mnz}\br{x,y}}{p_{XY|MNZ=mnz}\br{\mathrm{Good}_{m,n,z}}w_m\br{x}v_n\br{y}}&\mbox{if $\br{x,y}\in\mathrm{Good}_{m,n,z} \wedge e\leq w_m\br{x}\wedge f\leq v_n\br{y}$}
\\
0~&\mbox{otherwise.}\end{cases}
\end{eqnarray*}
Here, $\wedge$ refers to `and'. Let $WV$ be jointly correlated with $X'Y'E'F'MNZ$ and defined as 
\begin{eqnarray*}     
p_{WV\mid X'Y'E'F'MNZ=xyefmnz}\br{w,v}\defeq\begin{cases}
1 &\mbox{if $w= w_m\br{x}\wedge v= v_n\br{y}$}
\\
0~&\mbox{otherwise.}\end{cases}
\end{eqnarray*}
From Fact~\ref{fac:distancerestriction} it holds for any $m,n,z$ that
\begin{eqnarray}
&&\onenorm{\br{E'F'X'Y'|MNZ=mnz}-\br{EFXY|MNZ=mnz}}\nonumber\\ &&=\onenorm{\br{X'Y'|MNZ=mnz}-\br{XY|MNZ=mnz}}\nonumber\\
&&=2\br{1-\prob{XY|MNZ=mnz}{\mathrm{Good}_{m,n,z}}}.\label{eqn:good}
\end{eqnarray}

\begin{figure}[ht]
\centering
\begin{tikzpicture}[xscale=1,yscale=1]

\draw (0,0) rectangle (10,10);
\draw (0,0) rectangle (4,4);
\draw (0,0) rectangle (1.6,1.6);

\draw [dashed] (10,8) -- (0,8);
\draw [dashed] (10,6) -- (0,6);
\draw [dashed] (10,4) -- (0,4);
\draw [dashed] (10,2) -- (4,2);

\draw [dashed] (8,10) -- (8,0);
\draw [dashed] (6,10) -- (6,0);
\draw [dashed] (4,10) -- (4,0);
\draw [dashed] (2,10) -- (2,4);

\draw [dashed] (4,3.2) -- (0,3.2);
\draw [dashed] (4,2.4) -- (0,2.4);
\draw [dashed] (4,1.6) -- (0,1.6);
\draw [dashed] (4,0.8) -- (1.6,0.8);

\draw [dashed] (3.2,4) -- (3.2,0);
\draw [dashed] (2.4,4) -- (2.4,0);
\draw [dashed] (1.6,4) -- (1.6,0);
\draw [dashed] (0.8,4) -- (0.8,1.6);

\node at (-0.2,10) {$K$};
\node at (10, -0.2) {$K$};
\node at (-0.3,4) {$\delta_1 K$};
\node at (4, -0.3) {$\delta_1 K$};
\node at (-0.3,1.6) {$\delta_1^2 K$};
\node at (1.6, -0.3) {$\delta_1^2 K$};

\node at (-0.7,8) {$K-\delta K$};
\node at (8, -0.3) {$K-\delta K$};
\node at (-1,3.2) {$\delta_1K - \delta^2 K$};

\node at (0.75, 0.75) {$S_c$};

\end{tikzpicture}
\caption{\small The partition in the proof of Lemma~\ref{lem:setA} : The square $[K]\times [K]$ is divided into a collection of squares as depicted above. The squares with bold boundaries are constructed by scaling $[K]\times [K]$ by integral powers of $\delta_1$. This leads to a self similar collection of $6$-sided polygons with bold boundaries. The squares with dashed boundaries further decompose each such $6$-sided polygon. The innermost square $S_c$ is not decomposed to keep the number of decompositions small. $S_c$ does not contain the support of $WV$ due to our restriction of $XY$ to the set $\mathrm{Good}^2$.}
 \label{fig:rectdecompose}
\end{figure}
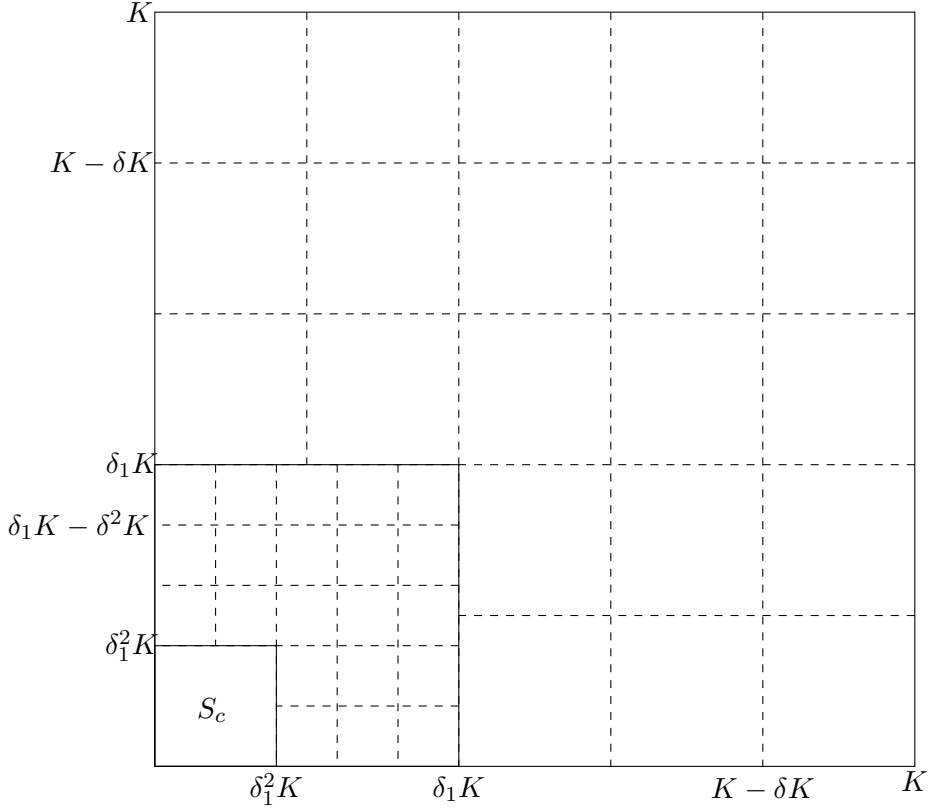

\noindent\textbf{Constructing a partition of $[K]\times[K]$:} The idea behind the construction is depicted in Figure~\ref{fig:rectdecompose}. Let $\delta_1\defeq\sqrt{\delta}$ and $c\defeq\frac{1}{\log\frac{1}{\delta_1}}\cdot\log\frac{\max\set{\abs{\M},\abs{\N}}}{\delta}$. We may assume that $\delta_1^c K$ is an integer.  Let $\set{C_j}_{j=1}^{\frac{1}{\delta^2}}$ be a set of squares of size $\delta K\times\delta K$ that partition the set $[K]\times [K]$. Let $\alpha C_j$ denote the set $\set{\br{w,v}:\br{\frac{w}{\alpha},\frac{v}{\alpha}}\in C_j}$. For all $i\in\set{0,\ldots,c}$, we define

\[S_i\defeq [\delta_1^iK]\times [\delta_1^iK], T_i\defeq S_i\setminus S_{i+1}.\]
with $T_c$ being undefined. For $i\in\set{0,1,\ldots, c}$ and $j\in[\frac{1}{\delta^2}]$

\[T_{i,j}\defeq T_i\cap\delta_1^iC_j.\]
Then $\set{T_{i,j}}_{i\in\set{0,1,\ldots, c},j\in[\frac{1}{\delta^2}]}$ are disjointed sets covering $[K]\times[K]\setminus S_c$ and $\set{\delta_1^i C_j}_{j=1}^\frac{1}{\delta^2}$ equally divide $S_i$ into $\frac{1}{\delta^2}$ squares.  Thus, $T_{i,j}$ is either a $\delta_1^i\delta K\times\delta_1^i\delta K$ square or empty. For any $i$, $T_{i,j}$ is non-empty for $\br{\frac{1-\delta_1}{\delta}}^2$ many $j$'s. We relabel the indices $j$'s such that $T_{i,j}$ is non-empty if and only if $1\leq j\leq \br{\frac{1-\delta_1}{\delta}}^2$. Set $T_{i,j}\defeq\set{a_{i,j}^0,a_{i,j}^0+1,\ldots, a_{i,j}^1}\times\set{b_{i,j}^0,a_{i,j}^0+1,\ldots, b_{i,j}^1}$. We have $a_{i,j}^1-a_{i,j}^0=b_{i,j}^1-b_{i,j}^0=\delta_1^i\delta K$ and $a_{i,j}^0\geq\delta_1^{i+1}K$ and $b_{i,j}^0\geq\delta_1^{i+1}K$ as $T_{i,j}$ and $S_{i+1}$ are disjoint. Let $\T_{i,j}$ denote the event that $\br{W,V}\in T_{i,j}$. Note that for any $\br{w,v}\in\supp{W}\times\supp{V}$, $w\geq K\cdot\frac{\delta}{\abs{\M}}$ and $v\geq K\cdot\frac{\delta}{\abs{\N}}$. From the choice of $c$, we have $\delta_1^cK=K\cdot\frac{\delta}{\max\set{\abs{\M}, \abs{\N}}}$, which implies that $S_c$ does not intersect the support of $\br{WV|MNZ=mnz}$ for any $m,n,z$. Given $i,j,m,n,z$, we set 

\begin{eqnarray*}
&&\mathrm{Bad}_{i,j,m,n,z}\defeq\\
&&\set{\br{e,f}\in\supp{E'F'|MNZ=mnz,\T_{i,j}}\setminus T_{i,j}:\sum_{\br{w,v}\in T_{i,j:w\geq e\wedge v\geq f}}p_{WV|MNZ=mnz,\T_{i,j}}\br{w,v}\leq\delta.}.
\end{eqnarray*}

\noindent\textbf{Constructing $\A$:}. For any $i, j, m, n, z$, set $$\A_{i,j,m,n,z}\defeq\supp{E'F'|MNZ=mnz,\T_{i,j}}\setminus\br{\T_{i,j}\cup\mathrm{Bad}_{i,j,m,n,z}}.$$ We conclude from Claim~\ref{claim:tij} and Claim~\ref{claim:bad} that 
\[\sum_{\br{e,f}\in\A_{i,j,m,n,z}}p_{E'F'|MNZ=mnz,\T_{i,j}}\br{e,f}\geq1-3\delta.\]
Moreover, for all $\br{e,f}\in\A_{i,j,m,n,z}$, which ensures that $\br{e,f}\notin\mathrm{Bad}_{i,j,m,n,z}$, we have 

\begin{eqnarray}
\label{eq:setAupperbound}
&&p_{E'F'|MNZ=mnz,\T_{i,j}}\br{e,f}\nonumber\\
&=&\sum_{\br{w,v}\in T_{i,j}}p_{WV|MNZ=mnz,\T_{i,j}}\br{w,v}\cdot p_{E'F'|WVMNZ=wvmnz,\T_{i,j}}\br{e,f}\nonumber\\
&=&\sum_{\br{w,v}\in\T_{i,j}:w\geq e\wedge v\geq f}p_{WV|MNZ=mnz,\T_{i,j}}\br{w,v}\cdot\frac{1}{wv}\nonumber\\
&\geq&\frac{dev}{\delta^4K^22^{R_1+R_2}p_{MN\mid Z=z}\br{m,n}}\cdot\sum_{\br{w,v}\in\T_{i,j}:w\geq e\wedge v\geq f}p_{WV|MNZ=mnz,\T_{i,j}}\br{w,v}\nonumber\\
&\geq&\frac{dev}{\delta^3K^22^{R_1+R_2}p_{MN\mid Z=z}\br{m,n}},
\end{eqnarray}
where we use the fact that for all $\br{x,y}\in\mathrm{Good}_{m,n}$
\[w\br{x}\cdot v\br{y}=K^2p_{M|X=x}\br{m}p_{N|Y=y}\br{n}\leq K^2\cdot\frac{\delta^4}{dev}\cdot 2^{R_1+R_2}p_{MN\mid Z=z}\br{m,n}.\] 
Summing Eq.~\eqref{eq:setAupperbound} over all $\br{e,f}\in\A_{i,j,m,n,z}$, we have 
\[1\geq\frac{dev\abs{A_{i,j,m,n,z}}}{\delta^3K^22^{R_1+R_2}p_{MN\mid Z=z}\br{m,n}},\]
which implies that 
\[\abs{\A_{i,j,m,n,z}}\leq\frac{\delta^3K^22^{R_1+R_2}p_{MN\mid Z=z}\br{m,n}}{dev}.\]
We further define
\begin{eqnarray*}
&&\A^{(1)}_{m,n,z}\defeq\cup_{i,j}\A_{i,j,m,n,z},\\
&&\A^{(2)}_{m,n,z}\defeq\set{\br{e,f}:e\leq\delta K2^{R_1}p_{M|N=n,Z=z}\br{m}},\\
&&\A^{(3)}_{m,n,z}\defeq\set{\br{e,f}:f\leq\delta K2^{R_1}p_{M|N=n,Z=z}\br{n}},\\
&&\A\defeq\set{\br{m,n,z,e,f}:\br{e,f}\in\A^{(1)}_{m,n,z}\cap\A^{(2)}_{m,n,z}\cap\A^{(3)}_{m,n,z}}.
\end{eqnarray*}
Then
\begin{eqnarray}
\abs{\A^{\br{1}}_{m,n,z}}&\leq&\sum_{i,j}\abs{\A_{i,j,m,n,z}}\leq\br{c+1}\cdot\frac{1}{\delta^2}\cdot\frac{\delta^3K^22^{R_1+R_2}p_{MN\mid Z=z}\br{m,n}}{dev}\nonumber\\
&\leq&\delta K^22^{R_1+R_2}p_{MN\mid Z=z}\br{m,n}.\label{eqn:a1upperbound}
\end{eqnarray}
And
\begin{eqnarray}
\prob{E'F'|MNZ=mnz}{\A_{m,n,z}^{\br{1}}}&=&\sum_{i,j}\prob{WV|MNZ=mnz}{T_{i,j}}\prob{WV|MNZ=mnz,\T_{i,j}}{\A_{m,n,z}^{\br{1}}}\nonumber\\
&\geq&\sum_{i,j}\prob{WV|MNZ=mnz}{T_{i,j}}\prob{WV|MNZ=mnz,\T_{i,j}}{\A_{i,j,m,n,z}}\nonumber\\
&\geq&1-3\delta.\label{eqn:a1}
\end{eqnarray}
Note that for any $m,n,z$
\[\prob{E'F'|MNZ=mnz}{\A^{\br{2}}_{m,n,z}}=\prob{E'F'|MNZ=mnz}{\A^{\br{3}}_{m,n,z}}=1,\]
which implies that 
\[\prob{E'F'|MNZ=mnz}{\A^{\br{1}}_{m,n,z}\cap\A^{\br{2}}_{m,n,z}\cap\A^{\br{3}}_{m,n,z}}\geq1-3\delta.\]
Combining with Eq.~\eqref{eqn:good}, we have
\[\prob{EF|MNZ=mnz}{\A^{\br{1}}_{m,n,z}\cap\A^{\br{2}}_{m,n,z}\cap\A^{\br{3}}_{m,n,z}}\geq \prob{XY|MNZ=mnz}{\mathrm{Good}_{m,n,z}}-3\delta.\]
Therefore,
\[\prob{MNZEF}{\A}\geq\sum_{m,n,z}p_{MNZ}\br{m,n,z}\prob{XY|MNZ=mnz}{\mathrm{Good}_{m,n,z}}-3\delta\geq1-\epsilon-5\delta,\]
where the last inequality follows from Claim~\ref{claim:good}.  Thus, we conclude the first inequality in Lemma~\ref{lem:setA}. For the second inequality in Lemma~\ref{lem:setA}, consider

\begin{eqnarray*}
\prob{MEZ\times T}{\A}&=&\sum_{m,n}p_{MZ}\br{m,z}p_{\tilde{T}}\br{n}\prob{\br{E|MZ=mz}\times L}{A^{\br{1}}_{m,n,z}\cap A^{\br{2}}_{m,n,z}\cap A^{\br{3}}_{m,n,z}}\\
&\leq&\sum_{m,n,z}p_{MZ}\br{m,z}p_{\tilde{T}}\br{n}\prob{\br{E|MZ=mz}\times L}{A^{\br{3}}_{m,n,z}}\\
&=&\sum_{m,n,z}p_{MZ}\br{m,z}p_{\tilde{T}}\br{n}\sum_{\br{e,f}:f\leq\delta K2^{R_2}p_{N|MZ=mz}\br{n}}\frac{1}{K}p_{E|MZ=mz}\br{e}\\
&=&\sum_{m,n,z}p_{MZ}\br{m,z}\frac{1}{\abs{\N}}\sum_e\delta 2^{R_2}p_{N|MZ=mz}\br{n}p_{E|MZ=mz}\br{e}\\
&=&\frac{\delta 2^{R_2}}{\abs{\N}}\sum_{m,n,z}p_{MZ}\br{m,z}p_{N|MZ=mz}\br{n}\leq\frac{\delta 2^{R_2}}{\abs{\N}}.
\end{eqnarray*}
Similarly,
\[\prob{S\times NFZ}{\A}\leq\frac{\delta2^{R_1}}{\abs{\M}}.\]
For the last inequality, we apply Eq.~\eqref{eqn:a1upperbound} to conclude,
\begin{eqnarray*}
\prob{S\times T\times Z}{\A}&=&\frac{1}{\abs{\M}\abs{\N}}\sum_{m,n, z}p_Z(z)\prob{L\times L}{\A^{(1)}_{m,n,z}}\leq\sum_{m,n,z}p_Z(z)\frac{\abs{\A^{\br{1}}_{m,n,z}}}{K^2\abs{\M}\abs{\N}}\leq\frac{\delta 2^{R_1+R_2}}{\abs{\M}\abs{\N}}.
\end{eqnarray*}
\end{proof}

The following claims were used in the above lemma.

\begin{claim}\label{claim:good}
$\sum_{m,n,z}p_{MNZ}\br{m,n,z}\prob{XY|MNZ=mnz}{\mathrm{Good}_{m,n,z}}\geq1-\epsilon-2\delta$.
\end{claim}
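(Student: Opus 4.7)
The plan is to decompose the event $\{(X,Y) \in \mathrm{Good}_{MNZ}\}$ into its two defining parts $\mathrm{Good}^1_{MNZ}$ and $\mathrm{Good}^2_{MNZ}$, bound the probability of each complement separately, and then finish with a union bound. Note that the total summation on the left-hand side of the claim is exactly $\Pr_{XYMNZ}\{(X,Y) \in \mathrm{Good}_{MNZ}\}$, so it suffices to produce this lower bound.

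For the $\mathrm{Good}^1$ part, observe that the hypothesis of Theorem~\ref{thm:main} (Eq.~\eqref{eqn:main}) is exactly a probabilistic statement about the first two ratio conditions defining $\mathrm{Good}^1_{m,n,z}$, together with the third ratio bounded by $\delta^4 \cdot 2^{R_1+R_2}/dev$, which is a strictly stronger (smaller) bound than the $\frac{\delta}{dev}\cdot 2^{R_1+R_2}$ used in the definition of $\mathrm{Good}^1_{m,n,z}$. Hence the hypothesis immediately yields
\[
\Pr_{XYMNZ}\{(X,Y) \in \mathrm{Good}^1_{MNZ}\} \geq 1 - \epsilon.
\]

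For the $\mathrm{Good}^2$ part, the complement event is $\{p_{M|X=X}(M) < \delta/|\M|\} \cup \{p_{N|Y=Y}(N) < \delta/|\N|\}$. Using the Markov chain $M-X-(YZN)$ and summing the conditional distribution over the (at most $|\M|$ many) $m$'s that violate the threshold, the probability that $p_{M|X=X}(M) < \delta/|\M|$ is at most $|\M|\cdot\delta/|\M| = \delta$; a symmetric argument using $N-Y-(XZM)$ gives the same bound for $N$. A union bound then gives
\[
\Pr_{XYMNZ}\{(X,Y) \notin \mathrm{Good}^2_{MNZ}\} \leq 2\delta.
\]

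Combining these via one more union bound yields $\Pr\{(X,Y) \in \mathrm{Good}^1_{MNZ} \cap \mathrm{Good}^2_{MNZ}\} \geq 1 - \epsilon - 2\delta$, which is the claim. There is no real obstacle here — the only thing to verify carefully is that the third ratio bound in Eq.~\eqref{eqn:main} is indeed stronger than the one demanded by $\mathrm{Good}^1$, so that the hypothesis of Theorem~\ref{thm:main} can be invoked directly without loss.
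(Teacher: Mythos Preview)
Your proposal is correct and follows essentially the same approach as the paper: decompose $\mathrm{Good}_{m,n,z}$ into $\mathrm{Good}^1_{m,n,z}\cap\mathrm{Good}^2_{m,n,z}$, invoke the hypothesis of Theorem~\ref{thm:main} for the first part (noting, as you do, that $\delta^4/dev \leq \delta/dev$), bound the complement of $\mathrm{Good}^2$ by $2\delta$ via the elementary observation that $\sum_m p_{M|X=x}(m)\,\id\!\br{p_{M|X=x}(m)\leq \delta/|\M|}\leq \delta$ (and symmetrically for $N$), and conclude by a union bound.
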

\begin{proof}
From the choice of $\epsilon$ in Theorem~\ref{thm:main}, we have

\begin{eqnarray*}
&&\sum_{m,n,z}p_{MNZ}\br{m,n,z}\prob{XY|MNZ=mnz}{\mathrm{Good}_{m,n,z}}\\
&&\geq 1-\sum_{m,n,z,x,y}p_{MNZXY}\br{m,n,z,x,y}\br{\id\br{\br{x,y}\notin\mathrm{Good}^1_{m,n,z}}+\id\br{\br{x,y}\notin\mathrm{Good}_{m,n,z}^2}}\\
&&\geq 1-\epsilon-\sum_{m,n,z,x,y}p_{MNZXY}\br{m,n,z,x,y}\cdot\id\br{\br{x,y}\notin\mathrm{Good}_{m,n,z}^2}\\
&&= 1-\epsilon-\sum_{x,y,z}p_{XYZ}\br{x,y,z}\sum_{m,n}p_{M|X=x}\br{m}p_{N|Y=y}\br{n}\\
&&\hspace{2cm}\id\br{p_{M|X=x}\br{m}\leq\frac{\delta}{\abs{\M}}~\text{or}~p_{N|Y=y}\br{n}\leq\frac{\delta}{\abs{\N}}}\\
&&\geq 1-\epsilon-\sum_{x,y,z}p_{XYZ}\br{x,y,z}\sum_{m,n}p_{M|X=x}\br{m}p_{N|Y=y}\br{n}\id\br{p_{M|X=x}\br{m}\leq\frac{\delta}{\abs{\M}}}\\
&& -\sum_{x,y,z}p_{XYZ}\br{x,y,z}\sum_{m,n}p_{M|X=x}\br{m}p_{N|Y=y}\br{n}\id\br{p_{N|Y=y}\br{n}\leq\frac{\delta}{\abs{\N}}}\\
&&\geq 1-\epsilon-2\delta.
\end{eqnarray*}

\end{proof}

The following claim upper bounds the Region 1 in Figure~\ref{fig:WVsupport}.
\begin{claim}\label{claim:tij}
For any $i,j$, it holds that 
\[\sum_{\br{e,f}\in T_{i,j}}p_{E'F'|MNZ=mnz,\T_{i,j}}\br{e,f}\leq\delta.\]
\end{claim}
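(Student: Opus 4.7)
\medskip

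\noindent\textbf{Proof proposal for Claim~\ref{claim:tij}.} The plan is to exploit the defining property of $E'F'$ conditioned on $WV$, namely that $(E',F')\mid(W=w,V=v)$ is uniformly distributed over $[w]\times[v]$, and then use the geometric fact that any $(w,v)$ in $T_{i,j}$ must satisfy $w\geq \delta_1^{i+1}K$ and $v\geq \delta_1^{i+1}K$. These two ingredients together with the bound on $|T_{i,j}|$ will give the claim by a one-line estimate.

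First I would unfold the conditional probability by summing over the value of $(W,V)$:
\[
\sum_{(e,f)\in T_{i,j}}p_{E'F'|MNZ=mnz,\T_{i,j}}(e,f)=\sum_{(w,v)\in T_{i,j}}p_{WV|MNZ=mnz,\T_{i,j}}(w,v)\cdot\prob{E'F'|WV=wv,MNZ=mnz}{(E',F')\in T_{i,j}}.
\]
Next, by the definition of $E'F'$ given $WV$, the inner probability equals $|T_{i,j}\cap([w]\times[v])|/(wv)$, which is upper bounded by $|T_{i,j}|/(wv)\leq(\delta_1^i\delta K)^2/(wv)$ since $T_{i,j}$ is a square of side length $\delta_1^i\delta K$.

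Now I would use the lower bounds $a_{i,j}^0\geq\delta_1^{i+1}K$ and $b_{i,j}^0\geq\delta_1^{i+1}K$, which were already established in the construction of the partition (since $T_{i,j}$ and $S_{i+1}$ are disjoint). For any $(w,v)\in T_{i,j}$ this gives $w\geq\delta_1^{i+1}K$ and $v\geq\delta_1^{i+1}K$, hence
\[
\frac{(\delta_1^i\delta K)^2}{wv}\leq\frac{(\delta_1^i\delta K)^2}{(\delta_1^{i+1}K)^2}=\frac{\delta^2}{\delta_1^2}=\delta,
\]
using $\delta_1=\sqrt{\delta}$. Finally, since $\sum_{(w,v)\in T_{i,j}}p_{WV|MNZ=mnz,\T_{i,j}}(w,v)=1$, combining the two displays yields the desired bound of $\delta$.

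I do not anticipate any substantive obstacle: the only subtlety is to remember that the lower bound $w\geq\delta_1^{i+1}K$ comes from the disjointness of $T_{i,j}$ with the inner square $S_{i+1}$ rather than from the restriction of $XY$ to $\mathrm{Good}^2_{m,n,z}$ (the latter is what is needed elsewhere to ensure $S_c$ contains no support of $WV$, not for this particular estimate).
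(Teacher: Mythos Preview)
Your proof is correct and follows essentially the same approach as the paper: condition on $(W,V)$, use uniformity of $(E',F')$ on $[w]\times[v]$, bound the numerator by the area $(\delta_1^i\delta K)^2$ of the square $T_{i,j}$ and the denominator $wv$ from below by $(\delta_1^{i+1}K)^2$, then simplify using $\delta_1^2=\delta$. The paper first computes the exact count $(w-a_{i,j}^0)(v-b_{i,j}^0)$ before bounding, whereas you bound the count by $|T_{i,j}|$ immediately, but this is a cosmetic difference only.
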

\begin{proof}
Note that
\begin{eqnarray*}
&&\sum_{\br{e,f}\in T_{i,j}}p_{E'F'|MNZ=mnz,\T_{i,j}}\br{e,f}\\
&&=\sum_{\br{e,f}\in T_{i,j}}\sum_{\br{W,V}\in T_{i,j}:w\geq e\wedge v\geq f}p_{WV|MNZ=mnz, \T_{i,j}}\br{w,v}\cdot\frac{1}{wv}\\
&&=\sum_{\br{w,v}\in T_{i,j}}p_{WV|MNZ=mnz,\T_{i,j}}\br{w,v}\cdot\sum_{\br{e,f}\in T_{i,j}:w\geq e\wedge v\geq f}\frac{1}{wv}\\
&&=\sum_{\br{w,v}\in T_{i,j}}p_{WV|MNZ=mnz,\T_{i,j}}\br{w,v}\cdot\frac{\br{w-a_{i,j}^0}\br{v-b_{i,j}^0}}{wv}\\
&&\leq \sum_{\br{w,v}\in T_{i,j}}p_{WV|MNZ=mnz,\T_{i,j}}\br{w,v}\cdot\frac{\br{a_{i,j}^1-a_{i,j}^0}\br{b_{i,j}^1-b_{i,j}^0}}{a_{i,j}^1b_{i,j}^1}\\
&&\leq \frac{\delta_1^{2i}\delta^2K^2}{\delta_1^{2i+2}K^2}=\delta,
\end{eqnarray*}
where the first inequality follows from the definition of $T_{i,j}$.
\end{proof}

The following claim upper bounds the error in Regions $2,3$ and $4$ in Figure~\ref{fig:WVsupport}.
\begin{claim}\label{claim:bad}
For any $i,j,m,n,z$, it holds that 
\[\sum_{\br{e,f}\in\mathrm{Bad}_{i,j,m,n,z}}p_{E'F'|MNZ=mnz,\T_{i,j}}\br{e,f}\leq 2\delta.\]
\end{claim}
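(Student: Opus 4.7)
The plan is to decompose $\mathrm{Bad}_{i,j,m,n,z}$ according to the position of $(e,f)$ relative to the square $T_{i,j}=\set{a_{i,j}^0,\ldots,a_{i,j}^1}\times\set{b_{i,j}^0,\ldots,b_{i,j}^1}$. Since $\br{E',F'}\mid WV=w,v$ is uniform on $[w]\times[v]$ with $(w,v)\in T_{i,j}$, the support of $E'F'\mid MNZ=mnz,\T_{i,j}$ is contained in $[a^1_{i,j}]\times[b^1_{i,j}]$, so its restriction to the complement of $T_{i,j}$ splits into the three sub-regions pictured in Figure~\ref{fig:WVsupport} (Regions $2$, $3$, $4$). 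For any $(e,f)$ in Region $3$, i.e.\ both $e<a^0_{i,j}$ and $f<b^0_{i,j}$, every $(w,v)\in T_{i,j}$ automatically satisfies $w\geq e$ and $v\geq f$, so the inner sum in the definition of $\mathrm{Bad}$ equals $1>\delta$, and Region $3$ contributes nothing to $\mathrm{Bad}$.

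For Region $2$, namely $e<a^0_{i,j}$ and $b^0_{i,j}\leq f\leq b^1_{i,j}$, the constraint $w\geq e$ is automatic for every $(w,v)\in T_{i,j}$, so the defining condition of $\mathrm{Bad}$ collapses to $\prob{v\sim V\mid MNZ=mnz,\T_{i,j}}{v\geq f}\leq\delta$. Summing $p_{E'F'\mid MNZ=mnz,\T_{i,j}}(e,f)$ first over $e$ yields $p_{F'\mid MNZ=mnz,\T_{i,j}}(f)$, and by the construction of $F'$ we have $p_{F'\mid V=v}(f)=0$ whenever $f>v$. Hence the pair $(F',V)$ conditioned on $MNZ=mnz,\T_{i,j}$ satisfies the hypothesis of Lemma~\ref{lem:1} exactly, and that lemma bounds the mass of $\mathrm{Bad}\cap\text{Region }2$ by $\delta$. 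A symmetric argument, with the roles of $(e,w)$ and $(f,v)$ swapped, applies Lemma~\ref{lem:1} to $(E',W)$ conditioned on $MNZ=mnz,\T_{i,j}$ and handles Region $4$ with another $\delta$.

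Summing over the three regions gives the claimed bound $0+\delta+\delta=2\delta$. The only delicate point is verifying that conditioning on $MNZ=mnz$ and $\T_{i,j}$ preserves the structural property $p_{F'\mid V=v}(f)=0$ for $f>v$ required by Lemma~\ref{lem:1} (and the analogous one for $E'\mid W$); this is immediate because $F'\leq V$ and $E'\leq W$ hold pointwise under the joint construction of $X'Y'E'F'WV$, so the property survives under any further conditioning. With that in hand, the argument reduces to two invocations of Lemma~\ref{lem:1} together with the trivial corner observation, and I do not anticipate any genuine obstacle.
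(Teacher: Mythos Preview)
Your proposal is correct and follows essentially the same approach as the paper. Both arguments hinge on the observation that any $(e,f)\in\mathrm{Bad}_{i,j,m,n,z}$ must have $e<a^0_{i,j}$ or $f<b^0_{i,j}$, which collapses the two-dimensional tail condition to a one-dimensional one; the paper phrases this via explicit thresholds $w^*,v^*$ and redoes the Lemma~\ref{lem:1} computation inline, whereas you organize it by the geometric Regions~2, 3, 4 of Figure~\ref{fig:WVsupport} and invoke Lemma~\ref{lem:1} directly, which is arguably cleaner.
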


\begin{proof}

Let $w^*, v^*$ be the smallest integers such that 
\[\sum_{\br{w,v}\in T_{i,j}:w\geq w^*}p_{WV|MNZ=mnz,\T_{i,j}}\br{w,v}\leq\delta,\]
and
\[\sum_{\br{w,v}\in T_{i,j}:v\geq v^*}p_{WV|MNZ=mnz,\T_{i,j}}\br{w,v}\leq\delta.\]
We claim that  for any $\br{e,f}\in\mathrm{Bad}_{i,j,m,n,z}$, either $e\geq w^*$ or $f\geq v^*$. Suppose by contradiction that there exists $\br{e,f}\in\mathrm{Bad}_{i,j,m,n,z}$ such that $e\leq w^*$ and $f\leq v^*$. As $\br{e,f}\notin T_{i,j}$, either $e\leq a_{i,j}^0$ or $f\leq b_{i,j}^0$. Suppose $e\leq a_{i,j}^0$ (corresponding to Region $2$ in Figure \ref{fig:WVsupport}; the other case follows similarly). Then
\[\sum_{\br{w,v}\in T_{i,j}: w\geq e\wedge v\geq f}p_{WV|MNZ=mnz,\T_{i,j}}\br{w,v}=\sum_{\br{w,v}\in T_{i,j}:v\geq f}p_{WV|MNZ=mnz,\T_{i,j}}\br{w,v}>\delta,\]
where the equality is from the fact that $\supp{WV|MNZ=mnz,\T_{i,j}}\subseteq\set{a_{i,j}^0,\ldots, a_{i,j}^1}\times\set{b_{i,j}^0,\ldots, b_{i,j}^1}$ which ensures that every $w$ is larger than $e$;  the inequality follows from the definition of $f^*$.
Therefore, we have 
\[\sum_{\br{e,f}\in\mathrm{Bad}_{i,j,m,n,z}}p_{E'F'|MNZ=mnz,\T_{i,j}}\br{e,f}\leq\br{\sum_{\br{e,f}:e\geq w^*}+\sum_{\br{e,f}:f\geq v^*}}p_{E'F'|MNZ=mnz,\T_{i,j}}\br{e,f}.\]
We upper bound the first summation on the right hand side, following Lemma \ref{lem:1}.
\begin{eqnarray*}
&&\sum_{\br{e,f}:e\geq w^*}p_{E'F'|MNZ=mnz,\T_{i,j}}\br{e,f}\\
&&= p_{E'F'|MNZ=mnz,\T_{i,j}}\br{e,f}\sum_{\br{w,v}\in T_{i,j}:w\geq e\wedge v\geq f}p_{WV|MNZ=mnz,\T_{i,j}}\br{w,v}\cdot\frac{1}{wv}\\
&&= \sum_{\br{w,v}\in T_{i,j}:w\geq w^*\wedge v}p_{WV|MNZ=mnz,\T_{i,j}}\br{w,v}\cdot\sum_{\br{e,f}:w\geq e\geq w^*\wedge f\leq v}\frac{1}{wv}\\
&&\leq \sum_{\br{w,v}\in T_{i,j}:w\geq w^*\wedge v}p_{WV|MNZ=mnz,\T_{i,j}}\br{w,v}\leq\delta.
\end{eqnarray*}
Similarly,
\[\sum_{\br{e,f}:f\geq v^*}p_{E'F'|MNZ=mnz,\T_{i,j}}\br{e,f}\leq\delta.\]
Thus, we conclude the claim.
\end{proof}

\subsection{Compression in terms of conditional mutual information}

In this subsection, we present a simpler feasible communication region in terms of conditional mutual information, which is obtained via arguments similar to the Substate theorem in~\cite{Jain:2003}.

\begin{theorem}\label{maintask1}
Given a joint distribution $XYZMN$ over $\X\times\Y\times\Z\times\M\times\N$ satisfying  $M-X-YNZ, XMZ-Y-N$ and any two reals $R_1,R_2\geq 0$ satisfying 
\begin{enumerate}
\item $R_1\geq\condmutinf{X}{M}{NZ}$;
\item $R_2\geq\condmutinf{Y}{N}{MZ}$;
\item $R_1+R_2\geq\condmutinf{XY}{MN}{Z}$;
\end{enumerate}
suppose Alice and Bob are given $x$ and $y$, respectively, where $\br{x,y}$ are drawn from distribution $XY$.  Then for any $\delta\in (0,1)$ there exists a $$\br{\frac{16 R_1}{\delta^2}+\frac{10}{\delta}+\log\log\frac{\max\set{\abs{\M},\abs{\N}}}{\delta},  \frac{16 R_2}{\delta^2}+\frac{10}{\delta}+\log\log\frac{\max\set{\abs{\M},\abs{\N}}}{\delta}, \delta}$$ two-senders-one-receiver message compression with side information at the receiver. 

\end{theorem}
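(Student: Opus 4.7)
The plan is to deduce Theorem~\ref{maintask1} as a corollary of Theorem~\ref{thm:main}, by translating each of the three conditional-mutual-information hypotheses into a high-probability log-likelihood-ratio bound of the form appearing in Eq.~\eqref{eqn:main}. The key tool is a classical one-shot Substate-style inequality of~\cite{Jain:2003}, which upgrades an expectation bound $\relent{P}{Q}\leq I$ into a pointwise upper bound of the form $\log\br{P(x)/Q(x)}\leq I/\alpha^2 + O(1)$ that holds with $P$-probability at least $1-\alpha$.

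Concretely, I first introduce the three log-ratios
\[
L_1 = \log\tfrac{p_{M|X}(m)}{p_{M|NZ}(m)},\qquad L_2 = \log\tfrac{p_{N|Y}(n)}{p_{N|MZ}(n)},\qquad L_3 = \log\tfrac{p_{M|X}(m)\,p_{N|Y}(n)}{p_{MN|Z}(m,n)},
\]
evaluated at $(x,y,z,m,n)\sim XYZMN$. Using the two Markov chain hypotheses and a direct computation, their expectations are exactly $\condmutinf{X}{M}{NZ}$, $\condmutinf{Y}{N}{MZ}$, and $\condmutinf{XY}{MN}{Z}$, each bounded by the corresponding rate hypothesis. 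Applying the Substate inequality to $L_1$, $L_2$, $L_3$ with a common parameter $\alpha$ proportional to $\delta$ and a union bound yields an event of $XYZMN$-probability at least $1-\delta/2$ on which all three log-ratios are uniformly bounded by (a constant multiple of) $R_1/\delta^2$, $R_2/\delta^2$, and $(R_1+R_2)/\delta^2$ respectively, up to an additive $O(1/\delta)$.

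The next step is to invoke Theorem~\ref{thm:main} with $\delta^{\mathrm{Thm}} = \delta/16$ and input rates $R_i^{\mathrm{Thm}}$ chosen so that $\delta^{\mathrm{Thm}}\cdot 2^{R_i^{\mathrm{Thm}}}$ dominates the threshold obtained in the previous step. Adding the $3\log(1/\delta^{\mathrm{Thm}})$ overhead from Theorem~\ref{thm:main} and absorbing the $\log\log\br{\max\set{\abs{\M},\abs{\N}}/\delta^{\mathrm{Thm}}}$ correction appearing in its third constraint reproduces the final rates $\tfrac{16R_i}{\delta^2}+\tfrac{10}{\delta}+\log\log\tfrac{\max\set{\abs{\M},\abs{\N}}}{\delta}$, while the total error $\epsilon^{\mathrm{Thm}}+8\delta^{\mathrm{Thm}}\leq\delta/2+\delta/2 = \delta$ matches the claim.

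The main subtlety is the joint (third) constraint in Eq.~\eqref{eqn:main}, which demands that $L_3$ be bounded above by $R_1^{\mathrm{Thm}}+R_2^{\mathrm{Thm}}+4\log(1/\delta^{\mathrm{Thm}})-\log\log(\cdot)$, a quantity slightly tighter than the sum of the two marginal bounds. Care is needed to verify that the Substate bound applied directly to $L_3$ (whose expectation is bounded by $R_1+R_2$ and \emph{not} by $2\max\{R_1,R_2\}$, since $\condmutinf{XY}{MN}{Z}\leq\condmutinf{X}{M}{NZ}+\condmutinf{Y}{N}{MZ}$ by the chain rule and the Markov conditions) is absorbed by the $\log\log$ correction already budgeted in the final additive term. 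Once this bookkeeping is in place the rest is purely a matter of plugging parameters into Theorem~\ref{thm:main}.
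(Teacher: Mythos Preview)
Your proposal is correct and follows essentially the same route as the paper: a Substate-style argument converts each conditional-mutual-information hypothesis into a high-probability bound on the corresponding log-likelihood ratio, after which Theorem~\ref{thm:main} is invoked with rescaled parameters. The paper makes the two-step Markov inequality explicit (first over $(x,n,z)$ on the relative entropy, then over $m$ via the bound $\sum_i a_i\log(a_i/b_i)\geq -1$), which is precisely the Substate mechanism you cite from~\cite{Jain:2003}.
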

\begin{proof}
Let $\delta'>0$ be a parameter chosen later. Note that 
$$\condmutinf{X}{M}{NZ}=\expec{n,z,x\sim NZX}{\relent{M|X=x}{M|NZ=nz}}.$$ 
By the Markov inequality,
\[\prob{n,z,x\sim NZX}{\relent{\br{M|X=x}}{\br{M|NZ=nz}}\leq \frac{R_1}{\delta'}}\geq1-\delta'.\]
For any $\br{n,z,x}$ satisfying $\relent{\br{M|X=x}}{\br{M|NZ=nz}}\leq \frac{R_1}{\delta'}$, we have 
\begin{eqnarray*}
\frac{R_1}{\delta'}&\geq&\sum_mp_{M|X=x}\br{m}\log\frac{p_{M|X=x}\br{m}}{p_{M|NZ=nz}\br{m}}\\
&\geq&\prob{m\sim\br{M|X=x}}{\frac{p_{M|X=x}\br{m}}{p_{M|NZ=nz}\br{m}}\leq2^{\frac{R_1/\delta'+1}{\delta'}}}\cdot\frac{R_1/\delta'+1}{\delta'}-1,
\end{eqnarray*}
where the inequality follows from the fact that $\sum_ia_i\log\frac{a_i}{b_i}\geq -1$ if $a_i,b_i\geq 0, \sum_ia_i\leq 1$ and $\sum_ib_i\leq 1$.  It implies that 
\[\prob{m\sim\br{M|X=x}}{\frac{p_{M|X=x}\br{m}}{p_{M|NZ=nz}\br{m}}\leq2^{\frac{R_1/\delta'+1}{\delta'}}}\cdot\frac{R_1/\delta'+1}{\delta'}\leq\delta'.\]
Therefore,
\[\prob{x,m,n,z\sim XMNZ}{\frac{p_{M|X=x}\br{m}}{p_{M|NZ=nz}\br{m}}}=\prob{x,m,n,z\sim XMNZ}{\frac{p_{M|X=x}\br{m}}{p_{M|NZ=nz}\br{m}}\leq2^{\frac{R_1/\delta'+1}{\delta'}}}\geq1-2\delta',\]
where the equality follows from the fact that $M-X-NZ$. Similarly,
\[\prob{y,m,n,z\sim YMNZ}{\frac{p_{N|Y=y}\br{n}}{p_{N|MZ=mz}\br{n}}\leq2^{\frac{R_2/\delta'+1}{\delta'}}}\geq1-2\delta',\]
and
\[\prob{x,y,m,n,z\sim XYMNZ}{\frac{p_{MN|XY=xy}\br{m,n}}{p_{MN\mid Z=z}\br{m,n,z}}\leq2^{\frac{\br{R_1+R_2}/\delta'+1}{\delta'}}}\geq1-2\delta'.\]
Applying Theorem~\ref{thm:main}, by setting $\delta'$ to be the largest real number less than $\delta/10$ satisfying that $\frac{1}{\sqrt{\delta'}}$ is an integer, $R_1\rightarrow \frac{16 R_1}{\delta^2}+\frac{10}{\delta}+\log\log\frac{\max\set{\abs{\M},\abs{\N}}}{\delta}$, $R_2\rightarrow \frac{16 R_2}{\delta^2}+\frac{10}{\delta}+\log\log\frac{\max\set{\abs{\M},\abs{\N}}}{\delta}$, we conclude the result.

\end{proof}

\subsection{Comparision with the bound obtained in~\cite{AnshuJW17un}}
\label{subsec:comp}

In~\cite[Theorem 4]{AnshuJW17un}, the following achievable communication region was obtained for the task in Definition~\ref{def:task1} (the authors also state a more general bound optimized over all possible extensions of $MN$, but that involves auxiliary random variables of unbounded size):
\begin{align}
\label{eq:unifiedach}
R_1 &\geq \misdiv{\delta}{XM}{X\times S} -  \htisdiv{\epsilon_1}{MNZ}{S\times NZ} + 4\log \frac{3}{\delta} , \nonumber\\
R_2 &\geq \misdiv{\delta}{YN}{Y\times T} - \htisdiv{\epsilon_2}{MZN}{MZ\times T}  + 4\log \frac{3}{\delta} ,\nonumber\\
R_1+R_2 &\geq \misdiv{\delta}{XM}{X\times S} + \misdiv{\delta}{YN}{Y\times T} - \htisdiv{\epsilon_3}{MNZ}{S\times T\times Z} + 6\log \frac{3}{\delta},
\end{align}
giving the overall error of $\epsilon_1+\epsilon_2+\epsilon_3+13\delta$. Above, $S$ and $T$ are arbitrary random variables over $\M$ and $\N$, respectively. The following claim shows that this achievable communication region is contained inside the achievable communication region of Theorem~\ref{thm:main}, up to the additive factor of $\log\log\max\br{|\M|, |\N|}$.

\begin{theorem}
For any $(R_1, R_2)$ satisfying~Eqs~\eqref{eq:unifiedach}. It holds that 
$$\prob{x,y,z,m,n\sim XYZMN}{\frac{p_{M|X=x}(m)}{p_{M|NZ=n,z}(m)}\leq \frac{\delta^4}{3^4}2^{R_1} \mbox{ and } \frac{p_{N|Y=y}(n)}{p_{N|MZ=m,z}(n)}\leq  \frac{\delta^4}{3^4}2^{R_2}\atop \mbox{ and } \frac{p_{M|X=x}(m)p_{N|Y=y}(n)}{p_{MN|Z=z}(m,n)}\leq \frac{\delta^6}{3^6}2^{R_1+R_2}} \geq 1-\epsilon_1-\epsilon_2-\epsilon_3-2\delta.$$

\end{theorem}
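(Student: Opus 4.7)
The plan is to show that each of the three conditions appearing inside the probability in the conclusion is a direct consequence of one (or two) of the inequalities in~\eqref{eq:unifiedach}, via a multiplicative decomposition followed by a union bound. The crucial algebraic identity inserts the auxiliary variables $S$ and $T$ into both numerator and denominator so that they cancel; for example,
\[
\frac{p_{M|X=x}(m)}{p_{M|NZ=n,z}(m)} = \frac{p_{XM}(x,m)}{p_X(x)\, p_S(m)} \cdot \frac{p_S(m)\, p_{NZ}(n,z)}{p_{MNZ}(m,n,z)},
\]
and analogously for $\frac{p_{N|Y=y}(n)}{p_{N|MZ=m,z}(n)}$ with $T$ in place of $S$, and for $\frac{p_{M|X=x}(m)\, p_{N|Y=y}(n)}{p_{MN|Z=z}(m,n)}$ with both $S$ and $T$ inserted. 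This is the only non-obvious manipulation; it bridges the max information spectrum quantities of~\eqref{eq:unifiedach}, which compare $XM$ to $X\times S$ and $YN$ to $Y\times T$, with the hypothesis testing quantities, which compare $MNZ$ to products involving $S$ and $T$.

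Next I would define, under the joint distribution $XYZMN$, five ``bad'' events. The first is $A_1 \defeq \{\frac{p_{XM}(x,m)}{p_X(x) p_S(m)} > 2^{\misdiv{\delta}{XM}{X\times S}}\}$, which by the definition of $\misdiv{\delta}{\cdot}{\cdot}$ has probability at most $\delta$ under $XM$; its counterpart $A_2$ for $YN$ vs $Y\times T$ has probability at most $\delta$ under $YN$. The remaining three events $B_1,B_2,B_3$ are the complements of the hypothesis testing bounds on $MNZ$: $B_1 \defeq \{\frac{p_{MNZ}(m,n,z)}{p_S(m) p_{NZ}(n,z)} < 2^{\htisdiv{\epsilon_1}{MNZ}{S\times NZ}}\}$, and similarly for $B_2$ with denominator $p_{MZ}(m,z) p_T(n)$, and $B_3$ with denominator $p_S(m) p_T(n) p_Z(z)$; these have probabilities at most $\epsilon_1,\epsilon_2,\epsilon_3$ respectively.

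Outside $A_1 \cup B_1$, the decomposition above combined with the first inequality of~\eqref{eq:unifiedach} forces $\frac{p_{M|X=x}(m)}{p_{M|NZ=n,z}(m)} \leq \frac{\delta^4}{3^4}\, 2^{R_1}$, since $2^{-4\log(3/\delta)} = (\delta/3)^4$. The analogous argument outside $A_2 \cup B_2$ handles the second ratio, and outside $A_1 \cup A_2 \cup B_3$ the third ratio is bounded by $\frac{\delta^6}{3^6}\, 2^{R_1+R_2}$. A single union bound over the five events $A_1,A_2,B_1,B_2,B_3$ then gives total failure probability at most $2\delta + \epsilon_1 + \epsilon_2 + \epsilon_3$, which matches the stated error. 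I do not foresee a genuine obstacle; the only bookkeeping subtlety is that $A_1$ and $A_2$ each appear in two of the three conclusions (the first/third and second/third respectively), yet contribute only $\delta$ each to the overall bound, so no double counting is required and the ``$2\delta$'' in the conclusion is tight.
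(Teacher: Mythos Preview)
Your proposal is correct and follows essentially the same approach as the paper's proof: both identify the same five events (two from the max information spectrum bounds on $XM$ vs.\ $X\times S$ and $YN$ vs.\ $Y\times T$, three from the hypothesis testing bounds on $MNZ$), use the same multiplicative decomposition through $S$ and $T$ to combine them into the three target ratios, and apply a single union bound to get the $2\delta+\epsilon_1+\epsilon_2+\epsilon_3$ failure probability. The paper's write-up is slightly more compressed (it names the five thresholds $k_1,\dots,k_5$ and passes directly to the good set $\S$), but the logical content is identical.
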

\begin{proof}
Let $k_1\defeq \misdiv{\delta}{XM}{X\times S}$, $k_2\defeq\htisdiv{\epsilon_1}{MNZ}{S\times NZ}$, 
$k_3\defeq \misdiv{\delta}{YN}{Y\times T}$, $k_4\defeq \htisdiv{\epsilon_2}{MZN}{MZ\times T}$ and $k_5\defeq \htisdiv{\epsilon_3}{MNZ}{S\times T\times Z}$. By the union bound, we can find a subset $\S\subseteq \X\times \Y\times\Z\times \M\times\N$ such that $p_{XYZMN}\br{\S}\geq 1-\epsilon_1-\epsilon_2-\epsilon_3-2\delta$ and for all $(x,y,z,m,n)\in \S$
$$\frac{p_{M|X=x}(m)}{p_S(m)}\leq 2^{k_1}, \frac{p_{M|NZ=n,z}(m)}{p_S(m)}\geq 2^{k_2}, \frac{p_{N|Y=y}(n)}{p_T(n)}\leq 2^{k_3},$$  $$\frac{p_{N|MZ=m,z}(n)}{p_T(n)}\geq 2^{k_4}, \frac{p_{MN|Z=z}(m,n)}{p_S(m)p_T(n)}\geq 2^{k_5},$$
which together imply that
$$\frac{p_{M|X=x}(m)}{p_{M|NZ=n,z}(m)}\leq 2^{k_1-k_2}, \frac{p_{N|Y=y}(n)}{p_{N|MZ=m,z}(n)}\leq 2^{k_3-k_4}, \frac{p_{M|X=x}(m)p_{N|Y=y}(n)}{p_{MN|Z=z}(m,n)}\leq 2^{k_1+k_3-k_5}.$$
Substituting the values of $R_1, R_2$ the proof concludes.
\end{proof}

\section{Consequences of Theorem~\ref{thm:main}}
\label{sec:cons}

In this section, we present several consequences of our main theorem.

\subsection{Lossy distributed source coding}

Lossy source coding is a well studied task in information theory~\cite{WynerZ76, IwataM02, WatanabeKT15, YasAG13, KostinaV12, Kostina13}, where a sender observes a sample from a source and the receiver is allowed to output a distorted version of this sample. Our first application is for the problem of lossy distributed source coding~\cite{Tung78, Berger78, Berger89, Wagner11} (which is a distributed version of the lossy source coding task), which is defined as follows (observe that we also include a size information with the receiver in our definition below).

\begin{definition}
	\textbf{A $\br{k,\epsilon}$-lossy distributed source coding with side information.} Given $k>0$ and $\epsilon\in(0,1)$ and joint random variables $XYZ$ over $\X\times\Y\times\Z$, Alice, Bob and Charlie observe $X, Y$ and $Z$, respectively. Both Alice and Bob send messages to Charlie, who is required to output joint random variables $X'Y'$ such that $\prob{XYX'Y'}{d\br{XY,X'Y'}\geq k}\leq\epsilon$, where $d:\supp{X}\times\supp{Y}\times\supp{X'}\times\supp{Y'}\rightarrow(0,+\infty)$ is a distortion measure. There is no shared randomness among the parties.  
\end{definition}

The following theorem obtains a nearly tight one-shot bound for this task, that uses auxiliary random variables of bounded size. Note that auxiliary random variables also arise in the characterization of the lossy source coding task (see \cite[Section 3.6]{GamalK12} for a discussion).

\begin{theorem}
\label{thm:lossyopt}
Given $\epsilon, \delta,\delta'\in(0,1)$ and joint distribution $XYZ$ over $\X\times\Y\times\Z$. For every $\br{k,\epsilon}$-lossy distributed source coding protocol with $R_1$ bits of communication from Alice to Charlie and $R_2$ bits of communication from Bob to Charlie, there exist random variables $MN$ taking values over a set $\M\times \N$ and satisfying $M-X-YNZ$, $N-Y-XMZ$ , $|\M|\leq |\X|, |\N|\leq |\Y|$. Moreover, there exists a function $f:\M\times\N\times \Z\rightarrow\X'\times\Y'$ satisfying  $\prob{XYZMN}{d\br{XY, f\br{M,N,Z}}\geq k}\leq\epsilon$ and
\[\prob{xyzmn\leftarrow XYZMN}{\frac{p_{M|XZ=x,z}\br{m}}{p_{M|NZ=n,z}\br{m}}\leq\frac{2^{R_1}}{\delta}\mbox{ and }\frac{p_{N|YZ=y,z}\br{n}}{p_{N|MZ=m,z}\br{n}}\leq\frac{2^{R_2}}{\delta} \atop \mbox{ and }\frac{p_{M|XZ=x,z}\br{m}p_{N|YZ=y,z}\br{n}}{p_{MN|Z=z}\br{m,n}}\leq\frac{2^{R_1+R_2}}{\delta}}\geq 1-3\delta.\]	
Furthermore, for any joint distribution $MN$ satisfying $M-X-YNZ$, $N-Y-XMZ$ and function $f:\M\times\N\times \Z\rightarrow\X'\times\Y'$ satisfying $\prob{XYZMN}{d\br{XY, f\br{M,N,Z}}\geq k}\leq\epsilon$, there exists a $\br{k,\epsilon+\delta+8\delta'}$-lossy distributed source protocol with communication $R_1$ from Alice to Charlie and communication $R_2$ from Bob to Charlie such that
\[\prob{xymn\leftarrow XYMN}{\frac{p_{M|XZ=x,z}\br{m}}{p_{M|NZ=n,z}\br{m}}\leq\delta' \cdot 2^{R_1}\mbox{ and }\frac{p_{N|YZ=y,z}\br{n}}{p_{N|MZ=m,z}\br{n}}\leq\delta'\cdot 2^{R_2},\atop\mbox{ and }~\frac{p_{M|XZ=x,z}\br{m}p_{N|YZ=y,z}\br{n}}{p_{MN|Z=z}\br{m,n}}\leq\delta'^4\cdot 2^{R_1+R_2-\log\log\frac{\max\set{\abs{\M},\abs{\N}}}{\delta'}}}\geq1-\delta.\]
\end{theorem}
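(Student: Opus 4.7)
The plan is to produce the auxiliary variables $MN$ from the protocol itself. Starting from a $(k,\epsilon)$-lossy DSC protocol with rates $R_1, R_2$, I average over the parties' independent private randomness and fix a setting that maintains the distortion bound $\Pr[d(XY,X'Y')\geq k]\leq\epsilon$; under such a setting, Alice's message is $M=\phi(X)$ for a deterministic $\phi$, and Bob's is $N=\psi(Y)$ for a deterministic $\psi$. Restricting alphabets to the images gives $|\M|\leq|\X|$ and $|\N|\leq|\Y|$; the Markov chains $M-X-YNZ$ and $N-Y-XMZ$ are immediate since $M$ depends only on $X$ and $N$ only on $Y$, and I take $f(m,n,z)$ to be Charlie's output on $(m,n,z)$. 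For the probability bounds I apply Markov's inequality to each ratio. Using the Markov property $M-X-NZ$ and the factorization $p_{XZMN}(x,z,m,n)=p_{ZN}(z,n)p_{X|ZN}(x|z,n)p_{M|X}(m|x)$, I would compute
\begin{align*}
\mathbb{E}\!\left[\frac{p_{M|XZ=x,z}(m)}{p_{M|NZ=n,z}(m)}\right] &= \sum_{z,n}p_{ZN}(z,n)\sum_{m}\frac{1}{p_{M|NZ}(m|n,z)}\sum_{x}p_{X|ZN}(x|z,n)p_{M|X}(m|x)^{2} \\
&\leq \sum_{z,n}p_{ZN}(z,n)\sum_{m}\frac{\sum_{x}p_{X|ZN}(x|z,n)p_{M|X}(m|x)}{p_{M|NZ}(m|n,z)} = |\M|\leq 2^{R_{1}},
\end{align*}
using $p_{M|X}(m|x)\leq 1$ and the identity $\sum_{x}p_{X|ZN}(x|z,n)p_{M|X}(m|x)=p_{M|NZ}(m|n,z)$. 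An analogous computation bounds the expectation of the second ratio by $|\N|\leq 2^{R_2}$, and using $p_{MN|XYZ}=p_{M|X}p_{N|Y}$ gives expectation $\leq|\M||\N|\leq 2^{R_1+R_2}$ for the joint ratio. Markov's inequality on each ratio, followed by a union bound, yields the stated probability $\geq 1-3\delta$.

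\textbf{Achievability direction.} Given $MN$ satisfying the Markov conditions and probability bounds, together with $f$ satisfying $\Pr[d(XY,f(M,N,Z))\geq k]\leq\epsilon$, the plan is to invoke Theorem~\ref{thm:main} to simulate the channel generating $MN$. I would set Theorem~\ref{thm:main}'s error parameter to $\delta$ and its $\delta$-parameter to $\delta'$, interpreting the target communication rates $R_1, R_2$ in Theorem~\ref{thm:lossyopt} as the final rates $R_i'+3\log(1/\delta')$ of Theorem~\ref{thm:main}, so $R_i'=R_i-3\log(1/\delta')$. With this substitution, Theorem~\ref{thm:main}'s hypothesis ($\delta'\cdot 2^{R_i'}$-type bounds on the marginal ratios and $\delta'^4\cdot 2^{R_1'+R_2'-\log\log(\cdot)}$ on the joint) translates exactly to the hypothesis of Theorem~\ref{thm:lossyopt} (since $\delta'\cdot 2^{R_i'}=\delta'^4\cdot 2^{R_i}$, and similarly for the joint term). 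This produces $(M',N')$ with $\tfrac{1}{2}\|XYZMN-XYZM'N'\|_{1}\leq\delta+8\delta'$. Charlie outputs $f(M',N',Z)$; by the data-processing inequality (Fact~\ref{lem:distancemonotone}) applied to the map $(x,y,z,m,n)\mapsto(x,y,f(m,n,z))$,
\[\Pr[d(XY,f(M',N',Z))\geq k] \leq \Pr[d(XY,f(M,N,Z))\geq k] + \tfrac{1}{2}\|XYZMN-XYZM'N'\|_{1} \leq \epsilon+\delta+8\delta'.\]

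\textbf{Main obstacle.} The proof is largely an assembly of existing tools, with no conceptually deep step. The trickiest bookkeeping lies in matching the $\delta'$-factors between Theorems~\ref{thm:main} and~\ref{thm:lossyopt} once the $3\log(1/\delta')$ rate loss from Theorem~\ref{thm:main} is accounted for; the appearance of $\delta'^4$ in the joint ratio bound (versus $\delta'$ in the marginal bounds) is exactly what encodes this translation. A minor subtlety in the converse is that fixing Alice's and Bob's private randomness simultaneously must preserve the distortion guarantee; this is immediate from an averaging argument, since $\Pr[d\geq k]$ is itself an expectation over the joint law (including the parties' randomness), so some deterministic choice achieves distortion at most $\epsilon$.
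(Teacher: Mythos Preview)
Your converse is correct and takes a route different from, and arguably more direct than, the paper's. The paper defines $M,N$ as the (derandomized) messages just as you do, but then bounds each ratio in two steps: it first uses the pointwise inequality $p_{M|XZ=x,z}(m)\leq 2^{R_1}p_{U_1}(m)$ (with $U_1$ uniform on $\M$), which always holds because $M$ is an $R_1$-bit message, and then separately observes that $\Pr_{MNZ}[\,p_{MN|Z=z}(m,n)\leq\delta\,p_{U_1}(m)p_{N|Z=z}(n)\,]\leq\delta$; combining the two gives the tail bound. Your single Markov inequality on $\mathbb{E}[p_{M|X}(m)/p_{M|NZ}(m)]\leq|\M|\leq 2^{R_1}$ achieves the same end in one stroke. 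Both arguments exploit $|\M|\leq 2^{R_1}$; the paper's version makes the role of the uniform message distribution explicit, while yours hides it inside the expectation bound.

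Your achievability plan is conceptually the same as the paper's (invoke Theorem~\ref{thm:main} with $\epsilon\leftarrow\delta$ and $\delta\leftarrow\delta'$, then post-process with $f$), but the parameter bookkeeping is wrong. The paper does \emph{not} shift the rates: it applies Theorem~\ref{thm:main} with the \emph{same} $R_1,R_2$, so that the hypothesis of Theorem~\ref{thm:main} is literally the displayed probability bound in Theorem~\ref{thm:lossyopt} --- the $\delta'^4$ on the joint ratio is inherited verbatim from the $\delta^4$ in Theorem~\ref{thm:main}, not manufactured by any rate shift. The resulting communication is then $R_i+3\log(1/\delta')$, an additive overhead the paper silently absorbs. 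Your shift $R_i'=R_i-3\log(1/\delta')$ would instead require the marginal ratios to be at most $\delta'^4\cdot 2^{R_i}$ and the joint ratio at most $\delta'^{10}\cdot 2^{R_1+R_2-\log\log(\cdot)}$, which is strictly stronger than what Theorem~\ref{thm:lossyopt} assumes; so the claimed ``exact translation'' fails. The fix is simply to drop the shift.
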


\begin{proof}
We divide the proof in two parts.
\begin{itemize}
\item \textbf{Converse.} Given a $(k,\epsilon)$-lossy distributed source coding protocol, we can fix the local randomness used by Alice and Bob for encoding and obtain a new protocol where $X$ and $Y$ are mapped deterministically to the messages sent. Observe that this does not change the error and does not increase the communication cost. We choose $M$ and $N$ to be the messages sent by Alice and Bob respectively. Without loss of generality, we can assume that $|\M| \leq |\X|$ and $|\N|\leq |\Y|$. It holds that $M-X-YZN$ and $N-Y-XZM$.  Let $f$ be the function applied by Charlie to obtain $X'Y'$. By the correctness of the protocol, it holds that $\prob{XYZMN}{d\br{XY, f\br{M,N,Z}}\geq k}\leq\epsilon$. Let $U_1$ and $U_2$ be uniform distributions over $\M$ and $\N$, respectively. Note that for any $\br{x,y,z,m,n}$, 
	\begin{eqnarray}
		&&p_{M|XZ=x,z}\br{m}\leq 2^{R_1}p_{U_1}\br{m},\label{eqn:mxzu}\\
		&&p_{N|YZ=y,z}\br{n}\leq 2^{R_2}p_{U_2}\br{n},\label{eqn:nyzu}\\
		&&p_{MN|XYZ=x,y,z}\br{m,n}\leq 2^{R_1+R_2}p_{U_1}\br{m}p_{U_2}\br{n}.\label{eqn:mnxyzu}
	\end{eqnarray}
We further have
\begin{eqnarray}
	&&\prob{\br{m,n,z}\leftarrow MNZ}{p_{MN| Z=z}\br{m,n}\leq\delta p_{U_1}\br{m}p_{N|Z=z}\br{n}}\leq\delta,\label{eqn:mnzun}\\
	&&\prob{\br{m,n,z}\leftarrow MNZ}{p_{MN|Z=z}\br{m,n}\leq\delta p_{M|Z=z}\br{m}p_{U_2}\br{n}}\leq\delta,\label{eqn:mnzmu}\\
	&&\prob{\br{m,n,z}\leftarrow MNZ}{p_{MN|Z=z}\br{m,n}\leq\delta p_{U_1}\br{m}p_{U_2}\br{n}}\leq\delta.\label{eqn:mnzuu}
\end{eqnarray}
Then
\begin{eqnarray*}
	&&\prob{xyzmn\leftarrow XYZMN}{\frac{p_{M|XZ=x,z}\br{m}}{p_{M|NZ=n,z}\br{m}}\leq \frac{2^{R_1}}{\delta}}\\
	&&=\prob{xyzmn\leftarrow XYZMN}{p_{M|XZ=x,z}\br{m}p_{N|Z=z}\br{n}\leq \frac{2^{R_1}}{\delta}p_{MN|Z=z}\br{m,n}}\\
	&&\geq\prob{xymn\leftarrow XYMN}{\delta\cdot p_{U_1}\br{m}p_{N|Z=z}\br{n}\leq p_{MN|Z=z}\br{m,n}}\\
	&&\geq 1-\delta,
\end{eqnarray*}
where the first inequality is from Eq.~\eqref{eqn:mxzu} and the second inequality is from Eq.~\eqref{eqn:mnzun}.
Similarly, we have
\begin{eqnarray*}
	&&\prob{xyzmn\leftarrow XYZMN}{\frac{p_{N|YZ=y,z}\br{n}}{p_{N|MZ=m,z}\br{m}}\leq \frac{2^{R_2}}{\delta}}\geq1-\delta;\\
	&&\prob{xyzmn\leftarrow XYZMN}{\frac{p_{M|XZ=x,z}\br{m}p_{N|YZ=y,z}\br{n}}{p_{MN|Z=z}\br{m,n}}\leq \frac{2^{R_1+R_2}}{\delta}}\geq1-\delta.
\end{eqnarray*}
The converse follows from the union bound.

\item \textbf{Achievability.} Given joint distribution $XYZMN$ satisfying $M-X-YZN$ and $N-Y-XMZ$, the parties run the protocol present in Theorem~\ref{thm:main} with $\epsilon\leftarrow\delta,\delta\leftarrow\delta'$.  As guaranteed by Theorem~\ref{thm:main}, the protocol outputs random variables $M'N'$ satisfying that
\[\frac{1}{2}\onenorm{XYZMN-XYZM'N'}\leq \delta+8\delta'.\]
The Charlie applies $f$ to $\br{M',N'}$ to get $\br{X',Y'}$. It holds that
\begin{eqnarray*}
	&&\prob{xyx'y'\leftarrow XYX'Y'}{d\br{x,y,x',y'}\geq k}\\
	&&\leq\prob{xyzmn\leftarrow XYZMN}{d\br{x,y,f\br{m,n,z}}\geq k}+\frac{1}{2}\onenorm{XYZMN-XYZM'N'}\\
	&&\leq\epsilon+\delta+8\delta'
\end{eqnarray*}
This completes the proof.
\end{itemize}
\end{proof}

\subsection{Near optimal characterization of Task C in terms of the auxiliary random variables}

We first show how to reduce the amount of shared randomness in any given $(R_1, R_2, \epsilon)$ two-senders-one-receiver message compression with side information at the receiver, using an argument similar to Wyner~\cite{Wyner75com} and Newman~\cite{Newman91}. Since their arguments do not apply in the multi-partite setting (notice that the new randomness must be shared independently between Alice, Charlie and Bob, Charlie), we replace the Chernoff bound arguments in~\cite{Wyner75com, Newman91} with an argument based on bipartite convex-split lemma (Fact~\ref{genconvexcomb}). 
\begin{claim}
\label{clm:randreduct}
Fix $\epsilon, \delta\in (0,1)$. For any $\br{R_1, R_2, \epsilon}$ two-senders-one-receiver message compression with side information at the receiver, exists another $\br{R_1, R_2, \epsilon+2\delta}$ two-senders-one-receiver message compression with side information at the receiver that uses at most $\log\frac{24|\M||\N|}{\delta^3}$ bits of shared randomness between Alice and Charlie as well as Bob and Charlie. 
\end{claim}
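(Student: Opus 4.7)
The plan is a Newman-style randomness reduction in which the usual Chernoff concentration step is replaced by the bipartite convex-split lemma (Fact~\ref{genconvexcomb}), so that a single pair of sampled pools $\mathbf{R}_A = (R_A^1,\dots,R_A^{K_1})$ and $\mathbf{R}_B = (R_B^1,\dots,R_B^{K_2})$ can simultaneously substitute for two independent sources of shared randomness. Writing $P_A, P_B$ for the original shared-randomness distributions and $M'N' = g(X,Y,Z,R_A,R_B)$ for the deterministic output rule of the given protocol, I set $K_1 = K_2 = \lceil 24|\M||\N|/\delta^3 \rceil$, sample the pools independently as $\mathbf{R}_A \sim P_A^{K_1}$ and $\mathbf{R}_B \sim P_B^{K_2}$, and consider the candidate protocol whose only shared randomness is a uniform index $J \in [K_1]$ between Alice and Charlie and an independent uniform $K \in [K_2]$ between Bob and Charlie; it then runs the original protocol with $(R_A^J,R_B^K)$ in place of $(R_A,R_B)$. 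The pools will ultimately be fixed deterministically by a probabilistic argument.

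To analyse this, I apply the bipartite convex-split lemma with $X \leftarrow (X,Y,Z,M',N')$, $M \leftarrow R_A$, $N \leftarrow R_B$, $U \leftarrow P_A$, $V \leftarrow P_B$, and $R_1 = R_2 = \log(24|\M||\N|/\delta^3)$. The three max-divergence ratios simplify cleanly; for instance
\[
\frac{p_{XYZM'N'R_A}(xyzm'n'r_A)}{p_{XYZM'N'}(xyzm'n')\,p_{R_A}(r_A)} = \frac{p_{M'N'|XYZR_A=xyzr_A}(m'n')}{p_{M'N'|XYZ=xyz}(m'n')} \leq \frac{1}{p_{M'N'|XYZ=xyz}(m'n')},
\]
with analogous bounds for the $R_B$-ratio and the joint $R_AR_B$-ratio. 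A one-line Markov estimate then gives $\Pr[p_{M'N'|XYZ}(M',N') < \delta/(|\M||\N|)] \leq \delta$, so all three conditions hold jointly with probability at least $1-\delta$. Choosing the $\delta$-parameter of the lemma equal to $\delta$ yields $\frac{1}{2}\|p_{\mathrm{plant}} - p_{XYZM'N'} \times P_A^{K_1} \times P_B^{K_2}\|_1 \leq 2\delta$, where $p_{\mathrm{plant}}$ denotes the planting joint distribution from the lemma.

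The key observation is that, after marginalising the indices $(J,K)$, the planting distribution $p_{\mathrm{plant}}$ coincides with the joint distribution of the candidate new protocol, since direct computation shows both equal
\[
p(xyz)\prod_i p_A(r_A^i) \prod_\ell p_B(r_B^\ell) \cdot \frac{1}{K_1 K_2}\sum_{j,k}\id[g(x,y,z,r_A^j,r_B^k) = (m',n')].
\]
Using $\|p_{\mathrm{plant}} - p_{\mathrm{prod}}\|_1 = \mathbb{E}_{\mathbf{r}}\|p_{XYZ\hat M\hat N \mid \mathbf{r}} - p_{XYZM'N'}\|_1$, a standard probabilistic argument produces deterministic pools $\mathbf{r}^\ast$ for which the conditional half-$\ell_1$ distance is at most $2\delta$. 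Hardcoding $\mathbf{r}^\ast$ and combining with the original protocol's $\epsilon$-guarantee via the triangle inequality then yields $\frac{1}{2}\|XYZMN - XYZ\hat M\hat N\|_1 \leq \epsilon + 2\delta$.

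The main technical obstacle is the atypical-output pathology: when $p_{M'N'|XYZ}(m',n')$ is very small, the density ratios in the max-divergence conditions blow up, so the lemma cannot be applied with rates proportional only to $\log|\M||\N|$. The Markov truncation at level $\delta/(|\M||\N|)$ handles this cleanly and is precisely what introduces the extra factor of $1/\delta$ (hence $\delta^3$ rather than $\delta^2$) in the final shared-randomness bound.
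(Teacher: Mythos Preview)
Your proposal is correct and follows essentially the same approach as the paper's proof: both apply the bipartite convex-split lemma (Fact~\ref{genconvexcomb}) with $(X,Y,Z,M',N')$ playing the role of the base variable and the two shared-randomness sources playing the roles of $M,N$, simplify the three density ratios via the independence of $XYZ$ from the shared randomness to get a common upper bound $1/p_{M'N'\mid XYZ}$, handle the atypical-output event by the Markov cutoff at $\delta/(|\M||\N|)$, rewrite the resulting $\ell_1$ bound as an expectation over i.i.d.\ pools to extract a good deterministic choice, and finish with the triangle inequality. The only cosmetic differences are that you assume a deterministic output map $g$ (which is WLOG by absorbing Charlie's private coins into shared randomness) and you front-load the ratio/Markov calculation rather than appending it at the end.
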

\begin{proof}
Given a protocol where Alice sends $R_1$ bits to Charlie, Bob sends $R_2$ bits to Charlie and Charlie outputs $M'N'$ such that 
$\frac{1}{2}\|XYZMN - XYZM'N'\|_1\leq \epsilon,$ let $S_1$ be the shared randomness between Alice and Charlie and $S_2$ be the shared randomness between Bob and Charlie. Let $T_1$ be the message generate by Alice conditioned on $S_1, X$ and $T_2$ be the message generated by Bob conditioned on $S_2, Y$.  We apply Fact~\ref{genconvexcomb} to the random variables $XYZM'N'S_1S_2, S_1, S_2$ with $L_1, L_2$ chosen such that 
\begin{equation}
\label{eq:biconveq}
\prob{\stackrel{x,y,z,m,n,s_1,s_2\sim}{XYZM'N'S_1S_2}}{\frac{p_{XYZM'N'| S_1=s_1}(x,y,z,m,n)}{p_{XYZM'N'}(x,y,z,m,n)} \leq \frac{\delta^2}{24}\cdot L_1 \mbox{ and } \frac{p_{XYZM'N'| S_2=s_2}(x,y,z,m,n)}{p_{XYZM'N'}(x,y,z,m,n)} \leq \frac{\delta^2}{24}\cdot L_2 \atop\mbox{ and }\quad\frac{p_{XYZM'N'| S_1S_2=s_1,s_2}(x,y,z,m,n)}{p_{XYZM'N'}(x,y,z,m,n)} \leq \frac{\delta^2}{24}\cdot (L_1+L_2)} \geq 1-\delta,
\end{equation}
to obtain the random variable $XYZM'N'S_1^1\ldots S_1^{L_1}S_2^1\ldots S_2^{L_2}$ (with $S^i_1 = S_1$ and $S^j_2=S_2$) that satisfies
$$\frac{1}{2}\|XYZM'N'S_1^1\ldots S_1^{L_1}S_2^1\ldots S_2^{L_2} - XYZM'N'\times S_1\times\ldots S_1\times S_2\times \ldots S_2\|_1\leq 2\delta.$$
This expression can be reaaranged to obtain
$$\expec{\stackrel{s_1^1,\ldots s_1^{L_1},s_2^1,\ldots s_2^{L_2}\sim}{S_1\times\ldots S_1\times S_2\times \ldots S_2}}{\frac{1}{2}\onenorm{\frac{1}{L_1L_2}\sum_{(i,j)\in [L_1]\times[L_2]}XYZ(M'N'\mid S_1S_2=s^i_1,s^j_2)  - XYZM'N'}}\leq 2\delta.$$
Thus, there exists a choice of $\{s_1^1,\ldots s_1^{L_1},s_2^1,\ldots s_2^{L_2}\}$ such that
\begin{equation}
\label{eq:averageclose}
\frac{1}{2}\onenorm{\frac{1}{L_1L_2}\sum_{(i,j)\in [L_1]\times[L_2]}XYZ(M'N'\mid S_1S_2=s^i_1,s^j_2)  - XYZM'N'} \leq 2\delta.
\end{equation}
The new protocol is as follows.
\begin{itemize}
\item Alice and Charlie share uniform randomness $U_1$ taking values in $[L_1]$. Bob and Charlie share uniform randomness $U_2$ taking values in $[L_2]$. 
\item Conditioned on the value $i\sim U_1$, Alice generates $(T_1| XYZ, S_1=s^i_1)$ and sends it to Charloe. Conditioned on the value $j\sim U_2$, Bob generates $(T_2| XYZ, S_2= s^j_2)$ and sends it to Charlie. 
\item Charlie, who also observes $(i,j)\sim U_1\times U_2$, generates $M'N'$ conditioned on $(s_1^i, s_2^j)$. 
\item Let the output of Charlie, averaged over the shared randomness, be $M''N''$. 
\end{itemize}
It holds that 
$$XYZM''N'' = \frac{1}{L_1L_2}\sum_{(i,j)\in [L_1]\times[L_2]}XYZ(M'N'\mid S_1S_2=s^i_1,s^j_2).$$
Thus, Eq.~\eqref{eq:averageclose} guarantees that
$$\frac{1}{2}\|XYZM''N'' - XYZMN\|_1 \leq \epsilon+2\delta.$$
To bound the size of shared randomness, observe that Eq.~\eqref{eq:biconveq} can be rephrased as follows, using the fact that $XYZ$ is independent of $S_1S_2$:
\begin{equation*}
\prob{\stackrel{x,y,z,m,n,s_1,s_2\sim}{XYZM'N'S_1S_2}}{\frac{p_{M'N'|XYZS_1=x,y,z,s_1}(m,n)}{p_{M'N'| XYZ= x,y,z}(m,n)} \leq \frac{\delta^2}{24}\cdot L_1 \mbox{ and } \frac{p_{M'N'| XYZS_2=x,y,z,s_2}(m,n)}{p_{M'N'|XYZ= x,y,z}(m,n)} \leq \frac{\delta^2}{24}\cdot L_2 \atop\mbox{ and }\quad\frac{p_{M'N'| XYZS_1S_2=x,y,z,s_1,s_2}(m,n)}{p_{M'N'| XYZ=x,y,z}(m,n)} \leq \frac{\delta^2}{24}\cdot (L_1+L_2)} \geq 1-\delta.
\end{equation*}
Since 
$$\prob{x,y,z,m,n \sim XYZMN}{p_{M'N'|XYZ= x,y,z}(m,n) \leq \frac{\delta}{|\M||\N|}} \leq \delta,$$
and probabilities are less than $1$, we have 
\begin{eqnarray*}
&&\prob{\stackrel{x,y,z,m,n,s_1,s_2\sim}{XYZM'N'S_1S_2}}{\frac{p_{M'N'|XYZS_1=x,y,z,s_1}(m,n)}{p_{M'N'| XYZ= x,y,z}(m,n)} \leq \frac{\delta^2}{24}\cdot L_1 \mbox{ and } \frac{p_{M'N'| XYZS_2=x,y,z,s_2}(m,n)}{p_{M'N'|XYZ= x,y,z}(m,n)} \leq \frac{\delta^2}{24}\cdot L_2 \atop\mbox{ and }\quad\frac{p_{M'N'| XYZS_1S_2=x,y,z,s_1,s_2}(m,n)}{p_{M'N'| XYZ=x,y,z}(m,n)} \leq \frac{\delta^2}{24}\cdot (L_1+L_2)}\\
&&\geq \prob{\stackrel{x,y,z,m,n,s_1,s_2\sim}{XYZM'N'S_1S_2}}{\frac{1}{p_{M'N'| XYZ= x,y,z}(m,n)} \leq \frac{\delta^2}{24}\cdot L_1 \mbox{ and } \frac{1}{p_{M'N'|XYZ= x,y,z}(m,n)} \leq \frac{\delta^2}{24}\cdot L_2 \atop\mbox{ and }\quad\frac{1}{p_{M'N'| XYZ=x,y,z}(m,n)} \leq \frac{\delta^2}{24}\cdot (L_1+L_2)}\\
&&\geq \prob{\stackrel{x,y,z,m,n\sim}{XYZM'N'}}{\frac{1}{p_{M'N'| XYZ= x,y,z}(m,n)} \leq \frac{\delta^2}{24}\cdot 2^{\min\br{L_1, L_2}}} \geq 1-\delta,
\end{eqnarray*}
if we choose $L_1 = \frac{24|\M||\N|}{\delta^3}$ and $L_2 = \frac{24|\M||\N|}{\delta^3}$. This completes the proof.
\end{proof}

We have the following theorem for Task 1 (Definition~\ref{def:task1}), in terms of auxiliary random variables of bounded size. The proof closely follows the proof of Theorem~\ref{thm:lossyopt} and uses above claim.
\begin{theorem}
\label{thm:auxchar}
Fix $\epsilon, \delta\in (0,1)$. For any $\br{R_1, R_2, \epsilon}$ two-senders-one-receiver message compression with side information, there exist random variables $S_1S_2T_1T_2$ jointly correlated with $XYZ$ such that $S_1S_2XYZ= S_1\times S_2\times XYZ$, $T_1-S_1X-YZS_2T_2$, $T_2-S_2Y-XZS_1T_1$ and a function $f: \Z\times\supp{S_1}\times \supp{S_2}\times \supp{T_1}\times \supp{T_2}\rightarrow \Z\times\M\times \N$ such that $$\frac{1}{2}\|XYf(ZS_1T_1S_2T_2) - XYZMN\|_1\leq \epsilon + 2\delta,$$ $$|S_1| \leq  \frac{24|\M||\N|}{\delta^3},\quad |S_2|\leq  \frac{24|\M||\N|}{\delta^3},\quad |T_1|\leq |\X||S_1|, \quad |T_2|\leq |\Y||S_2|$$ and
\[\prob{xyzt_1s_1t_2s_2\leftarrow XYZT_1S_1T_2S_2}{\frac{p_{T_1S_1|XZ=x,z}\br{t_1,s_1}}{p_{T_1S_1|T_2S_2Z=t_2,s_2,z}\br{t_1,s_1}}\leq\frac{2^{R_1}}{\delta}\mbox{ and }\frac{p_{T_2S_2|YZ=y,z}\br{t_2,s_2}}{p_{T_2S_2|T_1S_1Z=t_1,s_1,z}\br{t_2,s_2}}\leq\frac{2^{R_2}}{\delta} \atop \mbox{ and }\frac{p_{T_1S_1|XZ=x,z}\br{t_1,s_1}p_{T_2S_2|YZ=y,z}\br{t_2,s_2}}{p_{T_1S_1T_2S_2|Z=z}\br{t_1,s_1,t_2,s_2}}\leq\frac{2^{R_1+R_2}}{\delta}}\geq 1-3\delta.\]
Furthermore, for every random variables $S_1S_2T_1T_2$ jointly correlated with $XYZ$ such that $S_1S_2XYZ= S_1\times S_2\times XYZ$, $T_1-S_1X-YZS_2T_2$, $T_2-S_2Y-XZS_1T_1$ and a function $f: \Z\times\supp{S_1}\times \supp{S_2}\times \supp{T_1}\times \supp{T_2}\rightarrow \Z\times\M\times \N$ such that $\frac{1}{2}\|XYf(ZS_1T_1S_2T_2) - XYZMN\|_1\leq \epsilon$, there exists a $(R_1, R_2, \epsilon+9\delta)$ two-senders-one-receiver message compression with side information for any $R_1, R_2$ satisfying
\[\prob{xyzt_1s_1t_2s_2\leftarrow XYZT_1S_1T_2S_2}{\frac{p_{T_1S_1|XZ=x,z}\br{t_1,s_1}}{p_{T_1S_1|T_2S_2Z=t_2,s_2,z}\br{t_1,s_1}}\leq\delta 2^{R_1}\mbox{ and }\frac{p_{T_2S_2|YZ=y,z}\br{t_2,s_2}}{p_{T_2S_2|T_1S_1Z=t_1,s_1,z}\br{t_2,s_2}}\leq\delta 2^{R_2} \atop \mbox{ and }\frac{p_{T_1S_1|XZ=x,z}\br{t_1,s_1}p_{T_2S_2|YZ=y,z}\br{t_2,s_2}}{p_{T_1S_1T_2S_2|Z=z}\br{t_1,s_1,t_2,s_2}}\leq\delta^4 2^{R_1+R_2}}\geq 1-\delta.\]
\end{theorem}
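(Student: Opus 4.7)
The proof has two directions and closely mirrors the proof of Theorem~\ref{thm:lossyopt}, with Claim~\ref{clm:randreduct} doing the heavy lifting in the converse.

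For the converse, I would start with an arbitrary $(R_1, R_2, \epsilon)$ protocol and invoke Claim~\ref{clm:randreduct} to reduce the shared randomness between Alice--Charlie and Bob--Charlie each to at most $24|\M||\N|/\delta^3$ values, at the cost of $2\delta$ additional error. After fixing any residual private randomness deterministically (without loss of generality via averaging), define $S_1, S_2$ to be the reduced shared randomness, $T_1, T_2$ to be the messages sent by Alice and Bob (which are deterministic functions of $(X, S_1)$ and $(Y, S_2)$ respectively), and $f$ to be Charlie's decoding function. The independence $S_1 S_2 \perp XYZ$, the Markov chains $T_1 - XS_1 - YZ S_2 T_2$ and $T_2 - YS_2 - XZ S_1 T_1$, and the size bounds $|S_i| \leq 24|\M||\N|/\delta^3$, $|T_1| \leq |\X||S_1|$, $|T_2| \leq |\Y||S_2|$ then follow immediately from the structure of the protocol. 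For the three probability bounds, I would adapt the converse of Theorem~\ref{thm:lossyopt} essentially verbatim, using that $p_{T_1 S_1 | XZ}(t_1, s_1) \leq 2^{R_1}\,p_{\tilde U_1}(t_1, s_1)$ for $\tilde U_1$ uniform on $\{0,1\}^{R_1} \times \supp{S_1}$ (and the analog for $T_2 S_2$), combined with the standard concentration fact that for any subprobability $q$, $\prob{x \sim p}{p(x) < \delta\, q(x)} \leq \delta$, to lower bound $p_{T_1 S_1 | T_2 S_2 Z}$. A union bound over the three resulting events yields probability at least $1 - 3\delta$.

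For the achievability, given $S_1, S_2, T_1, T_2$ and $f$ satisfying the hypotheses, I would apply Theorem~\ref{thm:main} with $M \leftarrow T_1 S_1$, $N \leftarrow T_2 S_2$, and parameters $(R_1', R_2', \delta')$ chosen so the resulting rates are exactly $R_1, R_2$. Alice samples $S_1$ locally from its (known, $X$-independent) marginal and computes $T_1$ from $(X, S_1)$; Bob does the analogous step for $S_2, T_2$. The required Markov chains $T_1 S_1 - X - YZ T_2 S_2$ and $T_2 S_2 - Y - XZ T_1 S_1$ follow from the hypothesized independence of $S_1 S_2$ from $XYZ$ and the functional dependencies $T_1 = T_1(X, S_1)$, $T_2 = T_2(Y, S_2)$. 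Theorem~\ref{thm:main} then produces $(T_1', S_1', T_2', S_2')$ that is $(\epsilon+8\delta')$-close to $(T_1, S_1, T_2, S_2)$ in total variation jointly with $XYZ$; Charlie applies $f$ and Fact~\ref{lem:distancemonotone} (monotonicity of $\ell_1$ under post-processing) together with the triangle inequality completes the error analysis.

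The main obstacle I foresee is the parameter bookkeeping in the achievability direction: Theorem~\ref{thm:main} has an extra $\log\log(\max\{|\M|, |\N|\}/\delta)$ slack in its sum-rate hypothesis and incurs a $3\log(1/\delta)$ rate overhead, so $\delta'$ must be taken somewhat smaller than $\delta$ (e.g.\ a polynomial factor) to absorb both penalties while keeping the final error below $\epsilon + 9\delta$. Beyond this calibration, the proof is essentially routine: the converse transcribes the Theorem~\ref{thm:lossyopt} converse with $(T_1 S_1, T_2 S_2)$ playing the role of $(M, N)$, and the achievability is a single invocation of Theorem~\ref{thm:main} followed by post-processing through $f$.
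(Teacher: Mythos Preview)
Your proposal is correct and follows essentially the same route as the paper: the converse invokes Claim~\ref{clm:randreduct} to bound the shared randomness, identifies $(S_1,S_2,T_1,T_2)$ with the shared coins and messages of the reduced protocol, and then transcribes the converse of Theorem~\ref{thm:lossyopt} with $(T_1S_1,T_2S_2)$ in place of $(M,N)$; the achievability is a direct call to Theorem~\ref{thm:main} with $M\leftarrow T_1S_1$, $N\leftarrow T_2S_2$, followed by post-processing through $f$. Two minor differences: the paper first absorbs all private randomness into the shared coins \emph{before} applying Claim~\ref{clm:randreduct} (rather than fixing it afterwards as you suggest), and in the achievability the paper simply asserts the $9\delta$ bound without the careful $\delta'$-calibration you flag---your concern about the $3\log(1/\delta)$ rate overhead and the $\log\log$ slack is legitimate, and the paper is in fact loose on exactly this point.
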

\begin{proof}The proof is divided in two parts.
\begin{itemize}
\item {\bf Converse:} Given a $\br{R_1, R_2, \epsilon}$ protocol, we can assume without loss of generality that all the randomness Alice adapts is shared between Alice and Charlie and all the randomness Bob adapts is shared between Bob and Charlie. From this, we use Claim~\ref{eq:biconveq} to construct a $\br{R_1, R_2, \epsilon+2\delta}$ protocol that uses $\frac{24|\M||\N|}{\delta^3}$ bits of shared randomness between Alice and Charlie, and $\frac{24|\M||\N|}{\delta^3}$ bits of shared randomness between Bob and Charlie. We denote by $S_1$ the shared randomness between Alice and Charlie, and by $S_2$ the shared randomness between Bob and Charlie. Let $T_1$ be the message sent from Alice to Charlie and $T_2$ be the message sent from Bob to Charlie. Since $T_1$ is obtained by applying deterministic function on $XS_1$, we have that $|T_1|\leq |\X||S_1|$. Similarly, $|T_2|\leq |\Y||S_2|$. Let $U_1$ be uniform over $\supp{T_1}$ and $U_2$ be uniform over $\supp{T_2}$. Let $f$ be the function that Charlie applies on $S_1,S_2, T_1, T_2,Z$ to obtain $M'N'$. Then 
\begin{eqnarray*}
		&&p_{T_1S_1|XZ=x,z}\br{t_1,s_1}\leq 2^{R_1}p_{U_1}\br{t_1}p_{S_1}\br{s_1}\\
		&&p_{T_2S_2|YZ=y,z}\br{t_2,s_2}\leq 2^{R_2}p_{U_2}\br{t_2}p_{S_2}\br{s_2}\\
		&&p_{S_1S_2T_1T_2|XYZ=x,y,z}\br{t_1,s_1, t_2, s_2}\leq 2^{R_1+R_2}p_{U_1}\br{t_1}p_{U_2}\br{t_2} p_{S_1S_2}\br{s_1,s_2}.
\end{eqnarray*}
The rest of the proof follows closely the converse proof given in Theorem~\ref{thm:lossyopt}. 
\item {\bf Achievability:} The achievability also follows along the lines similar to Theorem~\ref{thm:lossyopt}. By a straightforward application of Theorem~\ref{thm:main}, Alice and Bob communicate $R_1$ and $R_2$ bits respectively to Charlie such that Charlie is able to output $S'_1S'_2T'_1T'_2$ satisfying
$$\frac{1}{2}\|XYZS'_1S'_2T'_1T'_2 - XYZS_1S_2T_1T_2\|_1\leq 9\delta.$$
Charlie now applies the function $f$ to obtain the desired output. It holds that
$$\frac{1}{2}\|XYf(ZS'_1S'_2T'_1T'_2) - XYZMN\|_1\leq \epsilon+9\delta.$$
This completes the proof.
\end{itemize} 
\end{proof}

\subsection{Recovering achievable communication for DSC task and Task B}
\label{subsec:recoverDSC}

Another application is the following corollary, for the problem of DSC. While it is a special case of lossy distributed source coding, it is possible to obtain a simpler bound without introducing auxiliary random variables. We reproduce the near-optimal one-shot bound given in~\cite{AnshuJW17un}, up to an additive factor of $\log\log\max\{\abs{\X},\abs{\Y}\}$. 

\begin{cor}\label{cor:DSC}
Let $\epsilon\in(0,1)$ such that $\frac{1}{\sqrt{\delta}}$ is an integer. Let $R_1,R_2$ satisfy
\begin{equation}
\prob{x,y\sim XY}{\frac{1}{p_{X|Y=y}\br{x}}\leq\delta\cdot 2^{R_1}~\mbox{and}~\frac{1}{p_{Y|X=x}\br{y}}\leq\delta\cdot 2^{R_2}\atop\mbox{and}~\frac{1}{p_{XY}\br{xy}}\leq\delta^4\cdot2^{R_1+R_2-\log\log\frac{\max\set{\abs{\X},\abs{\Y}}}{\delta}}}\geq1-\epsilon.
\end{equation}
There exists a protocol satisfies the following:
\begin{itemize}
\item No players share public coins.

\item Alice and Bob observe a sample from $X$ and $Y$ and then send $R_1+3\log\frac{1}{\delta}$ bits and $R_2+3\log\frac{1}{\delta}$ bits to Charlie, respectively;

\item  Charlie outputs the random variables $X'Y'$ such that 
\[\Pr\{XY\neq X'Y'\}\leq \epsilon+8\delta.\]
\end{itemize}

\end{cor}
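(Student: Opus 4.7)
The plan is to derive Corollary~\ref{cor:DSC} as a direct specialization of Theorem~\ref{thm:main} under the substitution $M \leftarrow X$, $N \leftarrow Y$, and $Z$ trivial, followed by a standard shared-randomness elimination. First I will verify that the Markov chain conditions of Theorem~\ref{thm:main} hold vacuously under these substitutions, since $p_{M|X=x}(m)=\id(m=x)$ and $p_{N|Y=y}(n)=\id(n=y)$ make $M$ (resp.\ $N$) a deterministic function of $X$ (resp.\ $Y$). Next I will check that the probabilistic rate condition of Theorem~\ref{thm:main} collapses to the hypothesis of the corollary: on the support of $XYMN$ (where $m=x$ and $n=y$), one has
\[
\frac{p_{M|X=x}(m)}{p_{M|N=n}(m)} = \frac{1}{p_{X|Y=y}(x)},\qquad \frac{p_{N|Y=y}(n)}{p_{N|M=m}(n)} = \frac{1}{p_{Y|X=x}(y)},\qquad \frac{p_{M|X=x}(m)\,p_{N|Y=y}(n)}{p_{MN}(m,n)} = \frac{1}{p_{XY}(x,y)},
\]
so the three clauses of the Theorem~\ref{thm:main} hypothesis match the three clauses stated in the corollary term by term, with $|\M|=|\X|$ and $|\N|=|\Y|$.

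Invoking Theorem~\ref{thm:main} then produces a protocol in which Alice communicates $R_1+3\log\frac{1}{\delta}$ bits, Bob communicates $R_2+3\log\frac{1}{\delta}$ bits, and Charlie outputs $(X',Y')$ with $\frac{1}{2}\onenorm{XY \cdot XY - XY \cdot X'Y'} \leq \epsilon + 8\delta$. The next step is to observe that this $\ell_1$-distance equals $\Pr\{(X,Y) \neq (X',Y')\}$: for every fixed input $(x,y)$ the associated conditional distribution of the ``correct'' pair is a point mass at $(x,y)$, and the total variation between a point mass and any other distribution is just the probability that the latter misses the point. Averaging over $XY$ then gives the announced error bound.

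Finally, to meet the requirement that no players share public coins, I will apply the averaging (probabilistic method) argument to the two independent shared-randomness strings provided by Theorem~\ref{thm:main}. Since $\Pr\{(X,Y)\neq(X',Y')\} \leq \epsilon+8\delta$ on average over the shared randomness, there must exist a deterministic fixing of both strings for which the error, averaged only over the inputs $XY$, stays at most $\epsilon+8\delta$; hard-coding this fixing into the protocol description removes the public coins while preserving both the communication cost and the error guarantee. Private randomness at Alice and Bob (used to sample the convex-split indices $j_1,j_2$) is untouched, which is consistent with the corollary's silence on private coins. No step here poses a real obstacle: the only mildly delicate point is matching the three simplified likelihood ratios above to the three clauses of the corollary's probabilistic hypothesis; everything else is a direct quotation of Theorem~\ref{thm:main} together with a one-line derandomization.
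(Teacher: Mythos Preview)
Your proposal is correct and follows essentially the same route as the paper's own proof: specialize Theorem~\ref{thm:main} with $M=X$, $N=Y$, $Z$ trivial, convert the resulting $\ell_1$ bound on $XYXY$ versus $XYX'Y'$ into the error probability $\Pr\{XY\neq X'Y'\}$, and then fix the shared randomness by averaging. The paper carries out the $\ell_1$-to-error conversion via the diagonal set $\S=\{(x,y,x,y)\}$ whereas you argue via conditional point masses, but these are the same computation, and your observation that the bound is in fact an equality is correct.
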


\begin{proof}
Applying Theorem~\ref{thm:main} with $Z$ trivial, $M=X$ and $N=Y$, we obtain a randomness assisted protocol with communications $R_1$ and $R_2$ from Alice and Bob respectively. Charlie outputs random variables $X',Y'$ such that 
$$\frac{1}{2}\|XYXY - XYX'Y'\|_1\leq \epsilon+8\delta.$$ Let $\S= \{(x,y,x,y): x\in \X, y\in \Y\}$. Then 
$$1- \Pr\{XY= X'Y'\}=\left|\Pr_{XYXY}\{\S\} - \Pr_{XYX'Y'}\{\S\}\right|\leq \frac{1}{2}\|XYXY - XYX'Y'\|_1\leq \epsilon+8\delta.$$ This completes the proof by the standard derandomization argument to fix the shared randomness.
\end{proof}

Using the argument similar to Theorem \ref{maintask1}, we have the following corollary which expresses above communication region in terms of conditional entropies.
\begin{cor}\label{cor:DSC2}
Let $\epsilon\in(0,1)$ such that $\frac{1}{\sqrt{\delta}}$ is an integer. Let $R_1,R_2$ satisfy:
	\begin{enumerate}
	\item $R_1\geq \frac{3H\br{X|Y}}{\epsilon}+\log\frac{1}{\delta}$,	
	\item $R_2\geq\frac{3H(Y|X)}{\epsilon}+\log\frac{1}{\delta}$,\item $R_1+R_2\geq\frac{3H\br{XY}}{\epsilon}+4\log\frac{1}{\delta}+\log\log\frac{\max\set{\abs{\X},\abs{\Y}}}{\delta}$
	\end{enumerate}
	There exists a protocol satisfies the following:
	\begin{itemize}
		\item No players share public coins.
		
		\item Alice and Bob observe a sample from $X$ and $Y$ and then send $R_1+3\log\frac{1}{\delta}$ bits and $R_2+3\log\frac{1}{\delta}$ bits to Charlie, respectively;	
		\item  Charlie outputs the random variables $X'Y'$ such that 
		\[\Pr\{XY\neq X'Y'\}\leq \epsilon+8\delta.\]
	\end{itemize}
\end{cor}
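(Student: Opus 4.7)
The plan is to derive Corollary~\ref{cor:DSC2} from Corollary~\ref{cor:DSC} by converting the three entropy bounds on $R_1, R_2, R_1+R_2$ into the information-spectrum tail condition demanded by Corollary~\ref{cor:DSC}. The reduction is the exact analogue, in the DSC setting, of how Theorem~\ref{maintask1} is deduced from Theorem~\ref{thm:main} via Markov's inequality.

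First I would observe that the three random variables
\[
\log\frac{1}{p_{X|Y}(X)},\qquad \log\frac{1}{p_{Y|X}(Y)},\qquad \log\frac{1}{p_{XY}(X,Y)},
\]
viewed as functions of $(x,y)\sim XY$, are non-negative and have expectations $\mathrm{H}(X|Y)$, $\mathrm{H}(Y|X)$, and $\mathrm{H}(XY)$ respectively. Applying Markov's inequality to each with threshold $3\mathrm{H}(\cdot)/\epsilon$ shows that each of the events
\[
\tfrac{1}{p_{X|Y=y}(x)} > 2^{3\mathrm{H}(X|Y)/\epsilon},\qquad \tfrac{1}{p_{Y|X=x}(y)} > 2^{3\mathrm{H}(Y|X)/\epsilon},\qquad \tfrac{1}{p_{XY}(x,y)} > 2^{3\mathrm{H}(XY)/\epsilon}
\]
has probability at most $\epsilon/3$ under $XY$, and hence by a union bound their complements hold simultaneously with probability at least $1-\epsilon$.

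Next I would plug in the hypotheses on $R_1$, $R_2$, and $R_1+R_2$. The first assumption $R_1\geq 3\mathrm{H}(X|Y)/\epsilon+\log(1/\delta)$ gives $\delta\cdot 2^{R_1}\geq 2^{3\mathrm{H}(X|Y)/\epsilon}$, and similarly $\delta\cdot 2^{R_2}\geq 2^{3\mathrm{H}(Y|X)/\epsilon}$ from the second assumption. The third assumption is tailored so that $\delta^4\cdot 2^{R_1+R_2-\log\log(\max\{|\X|,|\Y|\}/\delta)}\geq 2^{3\mathrm{H}(XY)/\epsilon}$. Together, these three inequalities and the above union bound show that the single probability condition in the hypothesis of Corollary~\ref{cor:DSC} is satisfied (with the same $\epsilon$).

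Finally I would invoke Corollary~\ref{cor:DSC}, which then produces a protocol with no shared randomness in which Alice sends $R_1+3\log(1/\delta)$ bits, Bob sends $R_2+3\log(1/\delta)$ bits, and the error probability $\Pr\{XY\neq X'Y'\}$ is at most $\epsilon+8\delta$, exactly as claimed. There is no substantive obstacle: the argument is a routine Markov-plus-union-bound reduction, and the only care needed is to line up the logarithmic slack terms so that the three scaled tail bounds dominate the corresponding quantities $\delta\cdot 2^{R_1}$, $\delta\cdot 2^{R_2}$, and $\delta^4\cdot 2^{R_1+R_2-\log\log(\max\{|\X|,|\Y|\}/\delta)}$ appearing in Corollary~\ref{cor:DSC}.
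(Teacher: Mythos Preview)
Your proposal is correct and follows essentially the same route the paper indicates: the paper does not write out a proof of Corollary~\ref{cor:DSC2} but says it follows ``using the argument similar to Theorem~\ref{maintask1},'' and your Markov-plus-union-bound reduction of the entropy hypotheses to the tail condition of Corollary~\ref{cor:DSC} is precisely the natural specialization of that argument to the DSC case (where the two-step Markov of Theorem~\ref{maintask1} collapses to a single application because $M=X$ forces $p_{M|X=x}(m)=\id(m=x)$).
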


We also reproduce the main results in~\cite{BravermanRao11} and~\cite{AnshuJW17un} for Task B, up to additive factor of $\log\log|\M|$. This is obtained by setting $Y$ and $N$ to be trivial in Theorem~\ref{thm:main}.

\begin{cor}
\label{cor:1sender1receiver}
Let $\epsilon\in(0,1)$ such that $\frac{1}{\sqrt{\delta}}$ is an integer. Let $R$ satisfy
\begin{equation}
\prob{x,m,z\sim XMZ}{\frac{p_{M|X=x}\br{m}}{p_{M\mid Z=z}\br{m}}\leq\delta^4\cdot2^{R-\log\log\frac{\abs{\M}}{\delta}}}\geq1-\epsilon.
\end{equation}
There exists a protocol satisfies the following:
\begin{itemize}
\item Alice and Charlie share public random coins;

\item Alice sends $R+3\log\frac{1}{\delta}$ bits to Charlie;

\item  Charlie outputs the random variables $M'$ such that 
\[\frac{1}{2}\onenorm{XZM-XZM'}\leq \epsilon+8\delta.\]
\end{itemize}
\end{cor}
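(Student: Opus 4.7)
The plan is to derive Corollary~\ref{cor:1sender1receiver} as a direct specialization of Theorem~\ref{thm:main} with $Y$ and $N$ set to trivial (constant) random variables. In this specialization, Task~C reduces to Task~B: the Markov conditions $M-X-YZN$ and $MXZ-Y-N$ both collapse to $M-X-Z$, which is exactly the Markov condition assumed in the corollary.

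First I would instantiate the three conditions in Eq.~\eqref{eqn:main} with $|\N|=1$. Condition~1 becomes $p_{M|X=x}(m)/p_{M|Z=z}(m) \leq \delta\cdot 2^{R_1}$, condition~2 becomes $1 \leq \delta\cdot 2^{R_2}$, and condition~3 becomes $p_{M|X=x}(m)/p_{M|Z=z}(m) \leq \delta^4\cdot 2^{R_1+R_2-\log\log(|\M|/\delta)}$. I would choose $R_1 = R$ and $R_2 = \log(1/\delta)$, so that condition~2 is satisfied with equality and condition~3 weakens to $p_{M|X=x}(m)/p_{M|Z=z}(m) \leq \delta^3\cdot 2^{R-\log\log(|\M|/\delta)}$, which is implied by the corollary's hypothesis since $\delta^4\leq\delta^3$ for $\delta\in(0,1)$. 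Condition~1 is in turn strictly weaker than condition~3 under the same hypothesis, so all three conditions hold. Theorem~\ref{thm:main} then furnishes a protocol in which Alice sends $R_1+3\log(1/\delta) = R+3\log(1/\delta)$ bits, Bob sends $R_2+3\log(1/\delta)$ bits, and the output satisfies $\frac{1}{2}\|XZM-XZM'\|_1 \leq \epsilon+8\delta$.

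The main (and essentially only) remaining step is to eliminate Bob's nominal communication. Because $Y$ is trivial, Bob's sampling of $j_2$ in the protocol of Theorem~\ref{thm:main} conditions on no information at all, so $j_2$, and hence $j_2'$, is drawn uniformly from its range and is independent of every input. Such a uniform index can be generated by Charlie himself using the public shared randomness permitted by the corollary, so no actual bits need to be sent by Bob. Correspondingly, the hypothesis-testing set $\A$ constructed in Lemma~\ref{lem:setA} collapses to the one-dimensional set of Lemma~\ref{lem:1} when $N$ is trivial, so Charlie's decoding step and its error analysis in the proof of Theorem~\ref{thm:main} carry over verbatim. This yields a one-sender protocol with communication $R+3\log(1/\delta)$ from Alice to Charlie and error $\epsilon+8\delta$, as claimed.
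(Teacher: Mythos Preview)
Your proposal is correct and follows the same approach as the paper, which simply states that the corollary ``is obtained by setting $Y$ and $N$ to be trivial in Theorem~\ref{thm:main}'' without further detail. You have carefully filled in the specialization: your choice $R_1=R$, $R_2=\log\frac{1}{\delta}$ correctly verifies all three conditions of Eq.~\eqref{eqn:main} from the corollary's single hypothesis, and your observation that a trivial $Y$ lets Charlie simulate Bob's role (since Bob's sampling of $j_2$ becomes independent of all inputs) is the right way to eliminate the second sender's communication.
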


\section{Lower bound on the expected communication cost of one-way protocols for Slepian-Wolf task}
\label{sec:expeccomm}

In this section, we consider the expected communication cost of Slepian-Wolf task, which is a special case of Task B where $M=X$ (originally studied by Slepian and Wolf~\cite{SlepianW73}). The protocol in~\cite{BravermanRao11} for Task B implies that there is an interactive protocol achieving the expected communication $H(X|Z) + c\br{\sqrt{H(X|Z)} + \log\frac{1}{\epsilon}}$, for some constant $c$ independent of $\abs{\X}, \abs{\Z}$. The following theorem shows that interaction is necessary, giving a much larger lower bound for one-way protocols.

\begin{theorem}
\label{theo:expeclowbound}
	For any integer $N$ and $\epsilon\in(0,\frac{1}{64})$, there exists a joint distribution $XZ$ with support $[\br{1-\sqrt{\epsilon}}N]\times[N]$ such that the expected communication cost of any one-way protocol achieving Task B with $M=X$ and error at most $\epsilon$ is at least
	\[\frac{1}{6\sqrt{\epsilon}}H\br{X|Z}.\]
\end{theorem}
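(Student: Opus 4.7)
The plan is to exhibit a joint distribution $XZ$ that has tiny overall conditional entropy $\mathrm{H}(X|Z)$ but on which a rare value $z_0$ of $Z$ forces Alice to essentially perform lossless source coding of $X$ on her own. Concretely, set $M = \lfloor (1-\sqrt{\epsilon}) N \rfloor$, take $X$ uniform on $[M]$, and conditioned on $X = x$ set $Z = x$ with probability $1-\sqrt{\epsilon}$ and $Z = N$ with probability $\sqrt{\epsilon}$. Then $(X|Z=z)$ is a point mass for every $z \in [M]$ while $(X|Z=N)$ is uniform on $[M]$, so $\mathrm{H}(X|Z) = \sqrt{\epsilon}\log M$.

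Next I would reduce any one-way protocol for the Slepian--Wolf task to a source coding protocol for $(X|Z=N)$. Fix any one-way protocol with error at most $\epsilon$ and expected communication $C$. Since Alice's message depends only on $X$ and the shared randomness $R$ (independent of $Z$), and since the conditional distribution of $X$ given $Z=N$ coincides with the marginal of $X$, the expected length of Alice's message conditioned on $\{Z=N\}$ still equals $C$. A Markov-type bound on the error gives $\Pr[X'\neq X \mid Z=N] \leq \epsilon/\Pr[Z=N] = \sqrt{\epsilon}$. Thus the protocol restricted to the event $\{Z=N\}$ is a (shared-randomness) source coding protocol for uniform $X$ on $[M]$ with error $\sqrt{\epsilon}$ and expected length $C$.

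The final step is a one-shot source coding lower bound in the presence of shared randomness. For each fixing $R = r$ the protocol becomes deterministic, with prefix-free code $f_r$ and decoder $g_r$; let $\delta_r$ and $L_r$ be the resulting error and expected length, and let $S_r=\{x: g_r(f_r(x))=x\}$, so that $|S_r|=(1-\delta_r)M$. Because $f_r$ is injective on $S_r$, Kraft's inequality together with the Shannon lower bound for the uniform distribution on $S_r$ gives $\sum_{x\in S_r}|f_r(x)|\geq |S_r|\log|S_r|$, hence $L_r\geq (1-\delta_r)\log((1-\delta_r)M)=:\phi(\delta_r)$. A direct calculation shows $\phi$ is convex in $\delta$ and decreasing on $[0,1/2]$, so Jensen's inequality and monotonicity yield
\[
C = \mathbb{E}_R[L_r] \geq \phi\bigl(\mathbb{E}_R[\delta_r]\bigr) \geq \phi(\sqrt{\epsilon}) = (1-\sqrt{\epsilon})\log\bigl((1-\sqrt{\epsilon})M\bigr).
\]

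Combining with $\mathrm{H}(X|Z)=\sqrt{\epsilon}\log M$, one gets
\[
\frac{C}{\mathrm{H}(X|Z)} \geq \frac{(1-\sqrt{\epsilon})\log\bigl((1-\sqrt{\epsilon})M\bigr)}{\sqrt{\epsilon}\log M}.
\]
For $\epsilon < 1/64$ (so $\sqrt{\epsilon} < 1/8$) and $N$ large enough that $\log M \geq 2$, elementary arithmetic (using $\log(1-\sqrt{\epsilon}) > -1/4$) reduces the right-hand side to at least $1/(6\sqrt{\epsilon})$; for small $N$ the statement holds trivially because $\mathrm{H}(X|Z)$ is zero or negligible. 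The main obstacle is the source coding lower bound against shared-randomness protocols: the Jensen step bridges the per-randomness bound to one in terms of the \emph{average} error $\sqrt{\epsilon}$ only because $\phi$ is simultaneously convex and decreasing on the relevant interval, and the constant $6$ comes from balancing the slack in $(1-\sqrt{\epsilon})$ against the slack in $\log((1-\sqrt{\epsilon})M)/\log M$.
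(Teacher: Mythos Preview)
Your proof is correct and follows the same high-level strategy as the paper: construct $XZ$ so that $\mathrm{H}(X|Z)$ is tiny but some rare event on $Z$ leaves $X$ uniform, then exploit that Alice's message depends only on $(X,R)$ and not on $Z$, so on that rare event the protocol is forced to solve lossless source coding of a nearly uniform variable. The differences are minor but worth noting. First, your construction concentrates the ``hard'' side information on a single value $Z=N$ of probability $\sqrt{\epsilon}$, whereas the paper spreads it over $\sqrt{\epsilon}\,N$ equiprobable values of $Z$; both yield $\mathrm{H}(X|Z)=\sqrt{\epsilon}\log M$, and in both cases the hard event is independent of $X$, which is what lets the expected-length bound transfer unchanged. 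Second, the paper handles the shared randomness by a double Markov argument---fixing an $r_0$ with cost at most $3C$ and error at most $3\epsilon$, then a $z_0$ with conditional error at most $3\sqrt{\epsilon}$---while you keep the randomness and apply Jensen to the convex function $\phi(\delta)=(1-\delta)\log((1-\delta)M)$, using monotonicity only on the small interval where $\mathbb{E}[\delta_r]$ lives. Your Jensen route avoids the factor-$3$ losses and gives a slightly sharper constant; the paper's Markov route is more elementary and needs no convexity verification. Either way the source-coding endgame is the same Kraft/Shannon bound $\sum_{x\in S_r}|f_r(x)|\geq |S_r|\log|S_r|$ on the correctly decoded set.
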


\begin{proof}
	Set $\delta\defeq\sqrt{\epsilon}$. Let $Z$ be a uniform distribution over $[N]$. For any $z\in[\delta N]$, set $\br{X|Z=z}$ to be the uniform distribution over $[\br{1-\delta}N]$. For any $z\in\set{\delta N+1,\ldots, N}$, set $\br{X|Z=z}=\id\br{X=z-\delta N}$. Then $p_X\br{x}=\delta\frac{1}{\br{1-\delta}N}+\frac{1}{N}=\frac{1}{\br{1-\delta}N}$. Thus $X$ is uniform over $[\br{1-\delta}N]$. Furthermore, $H\br{X|Z}=\sum_{z=1}^{\delta N}\frac{1}{N}H\br{X}=\delta\log\br{1-\delta}N\leq\delta\log N$.
	
	Given a protocol with expected communication cost $C$ and expected error $\epsilon$, without loss of generality, we may assume all the randomness occurring in the protocol are shared between Alice and Charlie, which is denoted by $R$. Conditioning on inputs $\br{x,z}$ and randomness $r$, let $\ell_{x,r}$ be the length of the message Alice sends to Charlie. Charlie outputs $x'\br{x,z,r}$, which is a function of $x,z,r$. Define $\epsilon_{x,z,r}\defeq\id\br{x'\br{x,z,r}=x}$. We have
	\[\sum_{x,r}p_X\br{x}p_R\br{r}\ell_{x,r}=C, \quad \sum_{x,z,r}p_{XZ}\br{x,z}p_R\br{r}\epsilon_{x,z,r}=\epsilon.\]
	By Markov inequality, there exists a $r_0$ such that 
	
	\[\sum_xp_X\br{x}\ell_{x,r_0}\leq 3C,~\sum_{x,z}p_{XZ}\br{x,z}\epsilon_{x,z,r_0}\leq 3\epsilon.\] 
	Then
	\[3\epsilon\geq\sum_zp_Z\br{z}\sum_xp_{X|Z=z}\br{x}\epsilon_{x,z,r_0}\geq\frac{1}{N}\sum_{z=1}^{\delta N}\sum_xp_{X|Z=z}\br{x}\epsilon_{x,z,r_0}.\]
	Thus, there exists $z_0\in[\delta N]$ such that $\sum_xp_{X|Z=z_0}(x)\epsilon_{x,z_0,r_0}\leq\frac{3\epsilon}{\delta}$. Note that $p_{X|Z=z}\br{x}=p_X\br{x}$ for $z\in[\delta N]$. We conclude that 
	\[\sum_xp_X\br{x}\epsilon_{x,z_0,r_0}\leq\frac{3\epsilon}{\delta}= 3\sqrt{\epsilon},~\sum_xp_X\br{x}\ell_{x,r_0}\leq 3C.\]
	
	We consider a new protocol where Alice receives $x\sim X$ and sends the message determined by $\br{x,r_0}$ to Charlie, which is of length $\ell_{x,r_0}$. Charlie outputs $x'(x, z_0, r_0)$ deterministically according to the message, as in the above protocol. As $\epsilon_{x,z,r}$ is either $0$ or $1$, the protocol makes zero error on at least $\br{1-3\sqrt{\epsilon}}$ fraction of $x$. Thus
	\[3C\geq\log\br{\br{1-3\sqrt{\epsilon}}\br{1-\delta}N}\geq\frac{1}{2}\log N\geq\frac{1}{2\sqrt{\epsilon}}H\br{X|Z}.\]
	This completes the proof.

\end{proof}

As a consequence, there is no one-way protocol for Task C that can achieve the communication region in Eq.~\eqref{eq:timesharetaskB} in expected communication, even when the register $Z$ is trivial. Now, we sketch a simple one-shot interactive protocol that achieves the communication region in Eq.~\eqref{eq:timesharetaskB}, where there is no interaction between Alice and Bob. Observe that the corner points of the region in Eq.~\eqref{eq:timesharetaskB} are 
$$(\condmutinf{X}{M}{NZ}, \condmutinf{Y}{N}{Z}) \mbox{   and   } (\condmutinf{X}{M}{Z},\condmutinf{Y}{N}{MZ}).$$ 
To achieve the point 
$$p(\condmutinf{X}{M}{NZ}, \condmutinf{Y}{N}{Z}) + (1-p) (\condmutinf{X}{M}{Z},\condmutinf{Y}{N}{MZ}),$$
Charlie prepares a random variable $B_1B_2B_3$ such that
\begin{eqnarray*}     
&&p_{B_1B_2B_3}\br{b_1,b_2,b_3}=
\begin{cases}
p~&\mbox{if $b_1=b_2=b_3=0$}
\\
1-p~&\mbox{if $b_1=b_2=b_3=1$}
\end{cases}
\end{eqnarray*}
He sends $B_1$ to Alice and $B_2$ to Bob. Conditioned on the value $0$, Bob and Charlie run the protocol in~\cite{BravermanRao11} 
to communicate $N$ to Charlie and then Alice and Charlie run the protocol in~\cite{BravermanRao11} to communicate $M$ to Charlie. Conditioned on the value $1$, first Alice and Charlie run the protocol to communicate $M$ and then Bob and Charlie run the protocol to communicate $N$. The expected communication cost is $p$ times the expected communication cost of the former protocol and $1-p$ times the expected communication cost of the latter. This achieves the desired result.

\subsection*{Conclusion}

In this work, we have studied the problem of message compression in the multi-party setting. We have obtained an achievable communication region that can be viewed as a one-shot analogue of the time sharing region for Task C. Since time-sharing is not possible in the one-shot setting, we have developed a novel hypothesis testing approach to obtain our main result. As applications of our result, we obtain near optimal one-shot communication regions for Task C and the lossy distributed source coding task, in terms of auxiliary random variables. A utility of our result is that the auxiliary variables involved are of size comparable to the size of random variables input to the task. This feature is often useful from the computational point of view and present in the characterization of communication for various tasks (see \cite{GamalK12} for such examples). We leave open the problem of obtaining a near optimal characterization without using auxiliary random variables, which is not known also for the task of source coding with a helper. An important question that we do not answer is about formulating a proper notion of information complexity \cite{Braverman15} in the interactive setting. We believe our compression results will shed light on this, as the notion of information complexity is closely tied to compression protocols in the two-party setting \cite{HJMR10, BravermanRao11, BarakBCR13}. 

\subsection*{Acknowledgment}

We thank Rahul Jain and Naqueeb Ahmad Warsi for helpful discussions, and especially thank Rahul Jain for pointing out the connection between Wyner's result \cite{Wyner75com} and the convex-split method. A.A. is supported by the National Research Foundation, Prime Minister's Office, Singapore and the Ministry of Education, Singapore under the Research Centres of Excellence programme. Part of the work was done when A.A. was visiting State Key Laboratory for Novel Software Technology, Nanjing University sponsored by the National Key R$\&$ D Program of China 2018YFB1003202. P. Y. is supported by the National Key R$\&$ D Program of China 2018YFB1003202 and a China Youth 1000-Talent grant. 

\bibliographystyle{ieeetr}
\bibliography{References}

\begin{thebibliography}{10}

\bibitem{Shannon}
C.~E. Shannon, ``A mathematical theory of communication,'' {\em The Bell System
  Technical Journal}, vol.~27, pp.~379--423, July 1948.

\bibitem{SlepianW73}
D.~Slepian and J.~Wolf, ``Noiseless coding of correlated information sources,''
  {\em IEEE Transactions on Information Theory}, vol.~19, pp.~471--480, Jul
  1973.

\bibitem{Han03}
T.~S. Han, {\em Information-Spectrum Methods in Information Theory}.
\newblock Berlin Heidelberg: Springer-Verlag, 2003.

\bibitem{Jain:2003}
R.~Jain, J.~Radhakrishnan, and P.~Sen, ``A direct sum theorem in communication
  complexity via message compression,'' in {\em Proceedings of the 30th
  international conference on Automata, languages and programming}, ICALP'03,
  (Berlin, Heidelberg), pp.~300--315, Springer-Verlag, 2003.

\bibitem{HJMR10}
P.~Harsha, R.~Jain, D.~Mc.Allester, and J.~Radhakrishnan, ``The communication
  complexity of correlation,'' {\em IEEE Transcations on Information Theory},
  vol.~56, pp.~438--449, 2010.

\bibitem{BennettDHSW14}
C.~H. Bennett, I.~Devetak, A.~W. Harrow, P.~W. Shor, and A.~Winter, ``The
  quantum reverse shannon theorem and resource tradeoffs for simulating quantum
  channels,'' {\em IEEE Transactions on Information Theory}, vol.~60,
  pp.~2926--2959, May 2014.

\bibitem{Cuff13}
P.~Cuff, ``Distributed channel synthesis,'' {\em IEEE Transactions on
  Information Theory}, vol.~59, pp.~7071--7096, Nov 2013.

\bibitem{Huffman52}
D.~Huffman, ``A method for the construction of minimum-redundancy codes,'' {\em
  Proceedings of IRE}, vol.~40, no.~9, pp.~1098--1101, 1952.

\bibitem{BravermanRao11}
M.~Braverman and A.~Rao, ``Information equals amortized communication,'' in
  {\em Proceedings of the 52nd Symposium on Foundations of Computer Science},
  FOCS '11, (Washington, DC, USA), pp.~748--757, IEEE Computer Society, 2011.

\bibitem{AnshuJW17un}
A.~Anshu, R.~Jain, and N.~A. Warsi, ``A unified approach to source and message
  compression.'' https://arxiv.org/abs/1707.03619, 2017.

\bibitem{uteymatsu-matsuta-2014}
{T. Uteymatsu} and {T. Matstuta}, ``Revisiting the {S}lepain-{W}olf coding
  problem for general sources: A direct approach,'' in {\em Proc. IEEE Int.
  Symp. Inf. Theory (ISIT)}, (Honolulu, HI), June 2014.

\bibitem{TanKosut14}
V.~Y.~F. Tan and O.~Kosut, ``On the dispersions of three network information
  theory problems,'' {\em IEEE Transactions on Information Theory}, vol.~60,
  pp.~881--903, Feb 2014.

\bibitem{Warsi2016}
N.~A. Warsi, ``Simple one-shot bounds for various source coding problems using
  smooth {R}{\'e}nyi quantities,'' {\em Problems of Information Transmission},
  vol.~52, no.~1, pp.~39--65, 2016.

\bibitem{Wyner75}
A.~Wyner, ``On source coding with side information at the decoder,'' {\em IEEE
  Transactions on Information Theory}, vol.~21, pp.~294--300, May 1975.

\bibitem{miyakaye-kanaya-1995}
{S. Miyake} and {F. Kanaya}, ``Coding theorems on correlated general sources,''
  {\em IEICE Trans. Fundamentals}, vol.~E78-A(9), pp.~1063--1070, Sept. 1995.

\bibitem{WatanabeKT15}
S.~Watanabe, S.~Kuzuoka, and V.~Y.~F. Tan, ``Nonasymptotic and second-order
  achievability bounds for coding with side-information,'' {\em IEEE
  Transactions on Information Theory}, vol.~61, pp.~1574--1605, April 2015.

\bibitem{Verdu12}
S.~Verd\'{u}, ``Non-asymptotic achievability bounds in multiuser information
  theory,'' in {\em 2012 50th Annual Allerton Conference on Communication,
  Control, and Computing (Allerton)}, pp.~1--8, Oct 2012.

\bibitem{uteymatsu-matsuta-2015}
{T. Uteymatsu} and {T. Matstuta}, ``Source coding with side information at the
  decoder revisited,'' in {\em Proc. IEEE Int. Symp. Inf. Theory (ISIT)}, (Hong
  Kong), June 2015.

\bibitem{BarakBCR13}
B.~Barak, M.~Braverman, X.~Chen, and A.~Rao, ``How to compress interactive
  communication,'' {\em SIAM Journal on Computing}, vol.~42, no.~3,
  pp.~1327--1363, 2013.

\bibitem{Braverman15}
M.~Braverman, ``Interactive information complexity,'' {\em SIAM Journal on
  Computing}, vol.~44, no.~6, pp.~1698--1739, 2015.

\bibitem{GamalK12}
A.~E. Gamal and Y.-H. Kim, {\em Network Information Theory}.
\newblock New York, NY, USA: Cambridge University Press, 2012.

\bibitem{AnshuDJ14}
A.~Anshu, V.~K. Devabathini, and R.~Jain, ``Quantum communication using
  coherent rejection sampling,'' {\em Phys. Rev. Lett.}, vol.~119, p.~120506,
  Sep 2017.

\bibitem{AnshuJW17}
A.~Anshu, R.~Jain, and N.~A. Warsi, ``Building blocks for communication over
  noisy quantum networks,'' {\em IEEE Transactions on Information Theory},
  pp.~1--1, 2018.

\bibitem{vonNeumann51}
J.~von Neumann, ``Various techniques used in connection with random digits,''
  {\em National Bureau of Standards, Applied Math Series}, no.~12, pp.~36--38,
  1951.

\bibitem{WynerZ76}
A.~Wyner and J.~Ziv, ``The rate-distortion function for source coding with side
  information at the decoder,'' {\em IEEE Transactions on Information Theory},
  vol.~22, pp.~1--10, Jan 1976.

\bibitem{IwataM02}
K.~Iwata and J.~Muramatsu, ``An information-spectrum approach to
  rate-distortion function with side information,'' {\em IEICE Transactions on
  Fundamentals of Electronics, Communications and Computer Sciences},
  vol.~E85-A, no.~6, pp.~1387--1395, 2002.

\bibitem{YasAG13}
M.~H. Yassaee, M.~R. Aref, and A.~Gohari, ``A technique for deriving one-shot
  achievability results in network information theory,'' in {\em 2013 IEEE
  International Symposium on Information Theory}, pp.~1287--1291, July 2013.

\bibitem{KostinaV12}
V.~Kostina and S.~Verdu, ``Fixed-length lossy compression in the finite
  blocklength regime,'' {\em IEEE Transactions on Information Theory}, vol.~58,
  pp.~3309--3338, June 2012.

\bibitem{Kostina13}
V.~Kostina, ``Lossy data compression: nonasymptotic fundamental limits,'' 2013.
\newblock PhD Thesis, Princeton University.

\bibitem{Tung78}
S.~Y. Tung, ``Multiterminal source coding,'' 1978.
\newblock PhD Thesis, Cornell University.

\bibitem{Berger78}
T.~Berger, {\em The Information Theory Approach to Communications}.
\newblock New York: Springer Verlag, 1978.
\newblock Chapter: Multiterminal source coding.

\bibitem{Berger89}
T.~Berger and R.~W. Yeung, ``Multiterminal source encoding with one distortion
  criterion,'' {\em IEEE Transactions on Information Theory}, vol.~35,
  pp.~228--236, March 1989.

\bibitem{Wagner11}
A.~B. Wagner, B.~G. Kelly, and Y.~Altug, ``Distributed rate-distortion with
  common components,'' {\em IEEE Transactions on Information Theory}, vol.~57,
  pp.~4035--4057, July 2011.

\bibitem{Wyner75com}
A.~Wyner, ``The common information of two dependent random variables,'' {\em
  IEEE Transactions on Information Theory}, vol.~21, pp.~163--179, March 1975.

\bibitem{Newman91}
I.~Newman, ``Private vs. common random bits in communication complexity,'' {\em
  Information Processing Letters}, vol.~39, pp.~67--71, 1991.

\end{thebibliography}

\appendix

\section{Task C and the SCH task}
\label{append:SCHtaskB}

It is not immediately clear if the task of source coding with a helper is equivalent to Task C with $M=X$ and $N$ is trivial. This is because in the former, the only requirement is that Charlie outputs the correct $X$ with high probability (averaged over $Y$), whereas in the latter it is required that the global distribution is obtained with small error in $\ell_1$- distance. We show here that both definitions are equivalent up to constant factor increase in error and hence Task C with $M=X$ and $N$ trivial is equivalent to the task of source coding with a helper. More precisely, we have the following claim.

\begin{claim}
Fix $\epsilon \in (0,1)$. Let $YXX'$ be joint random variables such that $\Pr\{X\neq X'\}\leq \epsilon$. Then it holds that
$$\frac{1}{2}\|YXX'-YXX\|_1\leq 4\epsilon,$$
where the random variable $YXX$ is defined as 
\begin{eqnarray*}
p_{YXX}(y,x,x') =
\begin{cases}
p_{YX}(y,x)~&\mbox{if $x=x'$}
\\
0~&\mbox{otherwise}
\end{cases}
\end{eqnarray*}
\end{claim}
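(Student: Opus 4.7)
The plan is a direct calculation of the $\ell_1$ distance by splitting the sum over the "diagonal'' $\{x=x'\}$ and the "off-diagonal'' $\{x \neq x'\}$. On the off-diagonal, $p_{YXX}$ vanishes by definition, so the contribution is simply $\sum_{y,x,x': x\neq x'} p_{YXX'}(y,x,x') = \Pr\{X \neq X'\} \leq \epsilon$. On the diagonal, we use the marginal identity $p_{YX}(y,x) = \sum_{x''} p_{YXX'}(y,x,x'')$ to rewrite
\[
p_{YX}(y,x) - p_{YXX'}(y,x,x) \;=\; \sum_{x'' \neq x} p_{YXX'}(y,x,x''),
\]
which is nonnegative, so the absolute value drops out. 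Summing over $y$ and $x$ yields again $\Pr\{X \neq X'\} \leq \epsilon$.

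Combining the two contributions, the full $\ell_1$ distance is at most $2\epsilon$, and hence $\tfrac{1}{2}\|YXX' - YXX\|_1 \leq \epsilon$, which is in fact a factor of $4$ stronger than the stated bound. The $4\epsilon$ in the statement is just a convenient loose form.

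There is essentially no obstacle: the entire argument reduces to observing that $YXX$ is obtained from $YX$ by copying, so the only way $YXX'$ differs from $YXX$ is through the event $\{X \neq X'\}$, and this event must be charged twice (once for mass put in the wrong place, once for mass missing from the right place). No inequality beyond $\Pr\{X \neq X'\} \leq \epsilon$ and the triangle-free marginal identity is needed.
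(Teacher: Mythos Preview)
Your argument is correct and in fact sharper than the paper's: the direct split of the $\ell_1$ sum over $\{x=x'\}$ and $\{x\neq x'\}$, together with the marginal identity $p_{YX}(y,x)=\sum_{x''}p_{YXX'}(y,x,x'')$, yields $\tfrac{1}{2}\|YXX'-YXX\|_1=\Pr\{X\neq X'\}\leq\epsilon$ exactly, not just $\leq 4\epsilon$.

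The paper takes a different and less economical route. It first restricts $YXX'$ to the diagonal event $\{X=X'\}$ to obtain $Y_1X_1X'_1$, invokes Fact~\ref{fac:distancerestriction} to bound $\tfrac{1}{2}\|Y_1X_1X'_1-YXX'\|_1\leq\epsilon$, then uses monotonicity of $\ell_1$ under marginalisation to control the $XX'$ and $YX$ marginals separately, builds an intermediate hybrid $Y_1XX$, and stitches everything together via three applications of the triangle inequality, accumulating the factor of~$4$. Your approach avoids all of this machinery by recognising that $YXX$ differs from $YXX'$ precisely on the off-diagonal mass, which is counted once as ``misplaced'' and once as ``missing'', giving the exact value $2\Pr\{X\neq X'\}$ for the full $\ell_1$ norm. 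The paper's method is more modular (it chains together general facts about restrictions and marginals) but here the direct computation is both shorter and lossless.
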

\begin{proof}
Let $X_1Y_1X'_1$ be the random variable obtained by the restriction of $XYX'$ to the set $\{(x,y,x)\}$. By Fact~\ref{fac:distancerestriction}, we have
\begin{equation}
\label{xyx'close}
\frac{1}{2}\|Y_1X_1X'_1 - YXX'\|_1 \leq 1- \Pr\{X=X'\} \leq \epsilon.
\end{equation}
By Fact~\ref{lem:distancemonotone}, this implies
\begin{equation}
\label{eq:x1xclose}
\frac{1}{2}\|X_1X'_1 - XX\|_1\leq \epsilon
\end{equation}
and 
\begin{equation}
\label{eq:x1y1close}
\frac{1}{2}\|X_1Y_1 - XY\|_1\leq \epsilon \implies \frac{1}{2}\sum_xp_X(x)\|(Y_1 \mid X_1=x) - (Y\mid X=x)\| \leq 2\epsilon.
\end{equation}
Define the random variable $Y_1XX$ as follows:
\begin{eqnarray*}
p_{Y_1XX}(y,x,x') = p_{Y\mid X_1=x}(y)\cdot p_{XX}(x,x).
\end{eqnarray*}
Then Eq.~\eqref{eq:x1xclose} implies that 
$$\frac{1}{2}\|Y_1X_1X'_1 - Y_1XX\| \leq \epsilon.$$ On the other hand, Eq.~\eqref{eq:x1y1close} implies that
$$\frac{1}{2}\|Y_1XX - YXX\|_1 \leq 2\epsilon.$$ Combining, we conclude 
$$\frac{1}{2}\|Y_1X_1X'_1 - YXX\|_1 \leq 3\epsilon.$$ Using this with Eq.~\eqref{xyx'close}, we conclude that
$$\frac{1}{2}\|YXX' - YXX\|_1 \leq 4\epsilon,$$ which completes the proof. 
\end{proof}

\section{Proof of Claim \ref{clm:counterexample}}
\label{append:counterex}

Fix $\alpha\in (0,1)$  and set $\epsilon = \frac{\alpha(1-\alpha)^2}{2}$.  Assume $\X\defeq [\abs{\X}]$ and choose $X=Y$. Let 
\begin{eqnarray*}     
p_{X}\br{x}\defeq
\begin{cases}
\alpha &\mbox{if $x=1$}
\\
\frac{1-\alpha}{|\X|-1}~&\mbox{otherwise,}\end{cases}
\end{eqnarray*}
Let $\M=\N=\X$ and consider 
\begin{eqnarray*}     
p_{M\mid X=x}\br{m}\defeq
\begin{cases}
\alpha &\mbox{if $m=x$}
\\
\frac{1-\alpha}{|\X|-1}~&\mbox{otherwise,}\end{cases}
\end{eqnarray*}
\begin{eqnarray*}     
p_{N\mid X=x}\br{n}\defeq
\begin{cases}
\alpha &\mbox{if $n=x$}
\\
\frac{1-\alpha}{|\X|-1}~&\mbox{otherwise,}\end{cases}
\end{eqnarray*}
Then 
\begin{eqnarray*}     
p_{MN}\br{m,n}=
\begin{cases}
\alpha^3 + \frac{(1-\alpha)^3}{(|\X|-1)^3} &\mbox{if $m=n=1$}
\\
\frac{\alpha^2(1-\alpha)}{|\X|-1}+ \frac{\alpha(1-\alpha)^2}{(|\X|-1)^2}+ \frac{(1-\alpha)^3(|\X|-2)}{(|\X|-1)^3} &\mbox{if $n\neq m=1$}
\\
\frac{\alpha^2(1-\alpha)}{|\X|-1}+ \frac{\alpha(1-\alpha)^2}{(|\X|-1)^2}+ \frac{(1-\alpha)^3(|\X|-2)}{(|\X|-1)^3} &\mbox{if $m\neq n=1$}
\\
\frac{\alpha^2(1-\alpha)}{|\X|-1}+ \frac{\alpha(1-\alpha)^2}{(|\X|-1)^2}+ \frac{(1-\alpha)^3(|\X|-2)}{(|\X|-1)^3} &\mbox{if $m= n\neq 1$}
\\
\frac{3\alpha(1-\alpha)^2}{(|\X|-1)^2} + \frac{(|\X|-3)(1-\alpha)^3}{(|\X|-1)^3}~&\mbox{otherwise,}\end{cases}
\end{eqnarray*}
The probability mass of the set $\set{(x,x,n): x\neq 1, n\neq x, n\neq 1\}\subset \X \times \M\times \N}$ under the distribution $p_{XMN}$ is $\alpha(1-\alpha)^2 \geq \epsilon$. Hence, there exists a tuple $(x,x,n) \in \X\times \M\times \N$ with $x\neq 1, n\neq x, n\neq 1$ such that   
$$c' \geq \log\br{\frac{p_{M\mid X=x}(x)p_{N\mid X=x}(n)}{p_{MN}(x,n)}} \geq \log|\X| - \log(1+\alpha-2\alpha^2)- \frac{c_1(\alpha)}{|\X|},$$
for some $c_1(\alpha)$ that only depends on $\alpha$. On the other hand,
\begin{eqnarray*}
H(MN) &\leq& \alpha^3\log\frac{1}{\alpha^3} + 3\alpha^2(1-\alpha)\log\frac{|\X|}{3\alpha^2(1-\alpha)} + (1-\alpha)^2(1+2\alpha)\log\frac{|\X|}{ (1-\alpha)^2(1+2\alpha)}  + \frac{c_2(\alpha)}{|\X|}\\
&=& (1-\alpha^3)\log|\X| + \mathrm{H}\br{\{\alpha^3, 3\alpha^2(1-\alpha), (1-\alpha)^2(1+2\alpha)\}}+ \frac{c_2(\alpha)}{|\X|},
\end{eqnarray*}
for some $c_2(\alpha)$ that only depends on $\alpha$. Thus, for large $|\X|$, we find
$$\frac{H(MN)}{c'} \leq 1-\alpha^3 + \frac{c_3(\alpha)}{|\X|},$$
for some $c_3(\alpha)$ that only depends on $\alpha$. We can solve $\alpha$ in terms of $\epsilon$ to conclude that either $\alpha \leq 4\epsilon$ or $1-\alpha \leq \sqrt{4\epsilon}$.
Using the second bound, we obtain that
$$\frac{H(MN)}{c'} \leq 6\sqrt{\epsilon} + \frac{c_3(\alpha)}{|\X|} \leq 7\sqrt{\epsilon},$$
giving the desired upper bound
\end{document}